\newif\ifNOTtalk
\newcommand{\ble}{\begin{lemma}}
\newcommand{\ele}{\end{lemma}}
\newtheorem{lemma}{Lemma}[section]
\newtheorem{problem}[lemma]{Problem}
\newtheorem{theorem}[lemma]{Theorem}
\newtheorem{definition}[lemma]{Definition}
\newtheorem{corollary}[lemma]{Corollary}
\newtheorem{fact}[lemma]{Fact}
\newtheorem{note}[lemma]{Note}
\newtheorem{observation}[lemma]{Observation}
\newtheorem{algorithm1}[lemma]{Algorithm}
\newcommand{\beao}{\begin{eqnarray*}}
\newcommand{\eeao}{\end{eqnarray*}\noindent}
\newcommand{\beam}{\begin{eqnarray}}
\newcommand{\eeam}{\end{eqnarray}\noindent}
\newcommand{\one}{{\bf 1}}
\begin{document}

\title{Approximating Large Frequency Moments with $O(n^{1-2/k})$ Bits}
\author{Vladimir Braverman\thanks{Johns Hopkins University, Department of Computer Science. Email: vova@cs.jhu.edu. This work was supported in part by DARPA grant N660001-1-2-4014. Its contents are solely the responsibility of the authors and do not represent the official view of DARPA or the Department of Defense.}, Jonathan Katzman\thanks{Johns Hopkins University, jkatzma2@jhu.edu}, Charles Seidell\thanks{Johns Hopkins University, cseidel5@jhu.edu. This work was supported in part by Pistritto Fellowship.}, Gregory Vorsanger\thanks{Johns Hopkins University, Department of Computer Science. gregvorsanger@jhu.edu}}

\maketitle

\begin{abstract}
In this paper we consider the problem of approximating frequency moments in the streaming model.
Given a stream $D = \{p_1,p_2,\dots,p_m\}$ of numbers from $\{1,\dots, n\}$, a frequency of $i$ is
defined as $f_i = |\{j: p_j = i\}|$. The $k$-th \emph{frequency moment} of $D$ is defined as $F_k = \sum_{i=1}^n f_i^k$.

In this paper we give an upper bound on the space required to find a $k$-th frequency moment of $O(n^{1-2/k})$ bits that matches, up to a constant factor, the lower bound of   \cite{Woodruff:2012:TBD:2213977.2214063} for constant $\epsilon$ and constant $k$.
Our algorithm makes a single pass over the stream and works for any constant $k > 3$.
\end{abstract}

\section{Introduction}\label{introsec}

The analysis of massive datasets has become an exciting topic of theoretical algorithms research. As these datasets grow increasingly large, the need to develop new algorithms which can run using sublinear memory has become paramount.
It is often convenient to view such datasets as \emph{data streams}.
In this paper we consider the following streaming model:
\begin{definition}
Let m and n be positive integers. A \textbf{stream} $D = D(n,m)$ is a sequence of integers $p_1,\ldots,p_m$, where $p_i \in \{1,\ldots,n\}$. A \textbf{frequency vector} is a vector of dimensionality n with non-negative entries $f_i, i\in [n]$ defined as:
$$
f_i = |\{j: 1 \le j \le m, p_j = i\}|
$$
%
A k-th \textbf{frequency moment} of a stream D is defined by $F_k(D) = \sum_{i\in[n]} f_i^k$. Also, $F_\infty = \max_{i\in[n]} f_i$ and $F_0 = |\{i: f_i>0\}|$.
\end{definition}
In their celebrated paper, Alon, Matias, and Szegedy \cite{ams} introduced the following problem:
\begin{problem}
 What is the space complexity of computing a $(1\pm \epsilon)$-approximation of $F_k$ in one pass over $D$?
\end{problem}
In this paper we consider the case where $k>2$. Many algorithms have been designed to solve this particular problem, and we now provide a brief overview of the upper and lower bounds provided. To begin, \cite{ams,ams1} gave a lower bound of $\Omega(n^{1-5/k})$ (for $k\ge 6$) and an upper bound of $O({1\over \epsilon^2}n^{1-1/k}\log(nm))$.
Bar-Yossef, Jayram, Kumar, and Sivakumar \cite{frequency_lower_bound1} improved the lower bound and showed a bound of $\Omega(n^{1-(2+\lambda)/k})$ for their one pass algorithm where $\lambda$ is a small constant. They also showed a lower bound of $\Omega(n^{1-3/k})$ for a constant number of passes.
Chakrabarti, Khot, and Sun \cite{frequency_lower_bound2} showed a lower bound of $\Omega(n^{1-2/k})$ for one pass and $\Omega(n^{1-2/k}/(\log n))$ for a constant number of passes. Gronemeier \cite{DBLP:journals/corr/abs-0902-1609} and Jayram \cite{Jayram1} extended the bound of \cite{frequency_lower_bound2} from one pass to multiple passes.
Woodruff  and Zhang \cite{Woodruff:2012:TBD:2213977.2214063} gave a lower bound of $\Omega(n^{1-2/k}/ (\epsilon^{(4/p)}t))$ for a $t$-pass algorithm.
Ganguly \cite{DBLP:journals/corr/abs-1201-0253} improved the result of  \cite{Woodruff:2012:TBD:2213977.2214063} for small values of $\epsilon$ and for $t=1$. Price and Woodruff \cite{DBLP:conf/isit/PriceW12} gave a lower bound on the number of linear measurements.

In terms of upper bounds, Ganguly \cite{frequency_impr1} and Coppersmith and Kumar \cite{frequency_impr2} simultaneously gave algorithms with space complexity\footnote{The standard notation $\tilde{O}$ hides factors that are polylogarithmic in terms of $n,m$ and polynomial in terms of the error parameter $\epsilon$.} $\tilde{O}(n^{1-1/(k-1)})$.
In their breakthrough paper, Indyk and Woodruff \cite{frequency} gave the first upper bound that is optimal up to a polylogarithmic factor. Their bound was improved by a polylogarithmic factor by Bhuvanagiri, Ganguly, Kesh, and Saha \cite{frequency1}.
Monemizadeh and Woodruff \cite{Monemizadeh:2010:RLP:1873601.1873693} gave a bound of ${O}(\epsilon^{-2}k^2n^{1-2/k}\log^5(n))$ for a $\log(n)$-pass algorithm.
For constant $\epsilon$, Braverman and Ostrovsky \cite{recursive} gave a bound of
${O}(n^{1-2/k}\log^2(n)\log^{(c)}(n))$ where $\log^{(c)}(n)$ is the iterated logarithm function.
Andoni, Krauthgamer, and Onak \cite{andoni_robert_onak} gave a bound of $O(k^{2}\epsilon^{-2-6/p}n^{1-2/k}\log^2(n))$.
Ganguly \cite{DBLP:journals/corr/abs-1104-4552} gave a bound of $O(k^2\epsilon^{-2}n^{1-2/k}E(k,n) \log (n) \log (nmM)/\min(\log (n),\epsilon^{4/k-2}))$ where $E(k,n) = (1-2/k)^{-1}(1-n^{-4(1-2/k)})$.
Braverman and Ostrovsky \cite{recursive1,DBLP:journals/corr/abs-1212-0202} gave a bound of
${O}(n^{1-2/k}\log(n)\log^{(c)}(n))$.

\subsection{Main Result} \label{Main Result}
For constant $\epsilon$ and $k$ we provide a streaming algorithm with space complexity $O(n^{1-2/k})$. Thus, our upper bound matches the lower bound of Woodruff and Zhang \cite{Woodruff:2012:TBD:2213977.2214063} up to a constant factor. Our algorithm makes a single pass over the stream and works for constant $k > 3$.

The main technical contribution is a new algorithm that finds heavy elements in a stream of numbers. Then, combining this result with the broader Martingale Sketches technique from Section \ref{dsjkfkjsdfkjdsfkjsdfkj} we create an algorithm to approximate $F_k$.
In particular, we show:
\begin{theorem}\label{ekmrmflkerfflkreflker}
Let $\epsilon$ be a constant and $k \ge 7$. There exists an algorithm that outputs a $(1\pm \epsilon)$-approximation of $F_k$, makes three passes over the stream, uses $O(n^{1-2/k})$ memory bits, and errs with probability at most $1/3$.
\end{theorem}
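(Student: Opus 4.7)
The plan is to combine the paper's new heavy-element identification algorithm with the Martingale Sketches framework to obtain an $F_k$ estimator within the claimed budget. Conceptually the estimator rests on the by-now-standard reduction from $F_k$ to multi-scale heavy hitters: build $L=\Theta(\log n)$ nested subsampled substreams $D_0 \supseteq D_1 \supseteq \dots \supseteq D_L$, where $D_\ell$ retains each coordinate independently with probability $2^{-\ell}$, and at each level recover the $\epsilon$-heavy items of $D_\ell$ together with good estimates of their surviving frequencies. The contribution of the non-heavy tail at every scale is negligible after geometric summation, so reassembling the level-wise estimates yields a $(1\pm\epsilon)$ approximation of $F_k$.

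First, I would instantiate the paper's heavy-hitters primitive independently at each of the $L$ sampling scales, tuned so that each instance uses $O(n^{1-2/k}/L)$ bits. Second, I would invoke Martingale Sketches to glue the per-level estimators into a single $F_k$ estimator. The critical property being exploited is that Martingale Sketches avoid the usual $\log n$ overhead incurred by a naive union bound over scales: they trade the per-scale failure probability for only a constant loss in space and overall failure probability. This is precisely what makes the final bound $O(n^{1-2/k})$ rather than $O(n^{1-2/k}\log n)$. A constant number of independent repetitions with a median boosts the success probability above $2/3$, and the three passes accommodate (i) a calibration pass to estimate a rough scale of $F_k$ used to set thresholds and sampling probabilities, (ii) the heavy-hitters pass over all sampling levels, and (iii) a verification pass that sharpens the frequency estimates of the returned candidates to $(1\pm\epsilon')$.

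For correctness I would argue scale by scale: conditional on the calibration pass producing a constant-factor estimate of $F_k$, each subsampled stream $D_\ell$ has expected $F_k$-mass $\Theta(F_k/2^\ell)$, and standard subsampling concentration shows the heavy items of $D_\ell$ exactly capture the $F_k$-contributing items whose frequencies lie in a dyadic band around $2^{\ell/k}$. The Martingale Sketches framework then guarantees that the variance of the combined estimator is controlled by the per-level variances without the union-bound blow-up. The condition $k \ge 7$ enters in the space accounting: when $k$ is too small, the subleading $\mathrm{poly}\log$ overhead of the heavy hitters routine and the Martingale wrapper begin to dominate $n^{1-2/k}$, so $k \ge 7$ is the regime in which the primary $O(n^{1-2/k}/L)$ budget per level is genuinely large compared to the fixed overheads.

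The hard part is the space-tight invocation of the heavy-hitters subroutine at each scale: one must verify that it still returns, with constant probability, every item contributing at least $\epsilon F_k(D_\ell)$ together with a $(1\pm\epsilon')$ frequency estimate, while using only $O(n^{1-2/k}/\log n)$ bits rather than $O(n^{1-2/k})$. This is precisely where the novel primitive buys the $\log n$ improvement over previous constructions, and verifying that its guarantees compose cleanly with the Martingale Sketches wrapper — without hidden logarithmic factors creeping in through the calibration pass or the subsampling analysis — is the main technical obstacle. Once that composition is checked, the rest is routine concentration and bookkeeping.
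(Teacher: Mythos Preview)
Your high-level picture is right --- nested subsampling plus per-level heavy hitters combined via Martingale Sketches --- but the space accounting, which is the whole point of the theorem, is off in a way that breaks the argument.

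You propose to ``tune'' the heavy-hitters primitive so that each of the $L=\Theta(\log n)$ levels uses $O(n^{1-2/k}/L)$ bits. There is no such knob. The AHE bound in Theorem~\ref{wdfkjnwkejfkjwefkjwef} is $O(\rho^{-C} (F_0)^{1-2/k}\log(1/\delta))$: the only free parameters are the heaviness threshold $\rho$ and the failure probability $\delta$, and at level $i$ both must \emph{tighten} (the required $\alpha_i$ shrinks and $\delta_i$ must shrink to survive a union bound), which \emph{increases} cost. The reason the total is still $O(n^{1-2/k})$ is not a clever per-level budget split but the dependence on $F_0$: the substream $D_i$ has $E[F_0(D_i)]=2^{-i}F_0(D)$, and the paper balances the growth of $\alpha_i^{-C}\log(1/\delta_i)$ against the shrinkage of $(F_0(D_i))^{1-2/k}$ so that the per-level costs form a $\theta$-geometric sequence for some $\theta<1$ (Lemma~\ref{geometric} and the proof of Theorem~\ref{sdsdkflksdflksdf}). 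Your plan never invokes the $F_0$ dependence, and without it the series does not converge to $O(n^{1-2/k})$.

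Two smaller points. The three passes are not calibrate/find/verify in your sense: pass one computes $F_1$ and approximates $F_0$ so that pass two can subsample with rate $F_0/F_1$ to enforce $F_1=O(n)$ before running the game, and pass three reads exact frequencies of the returned candidates. And the restriction $k\ge 7$ has nothing to do with polylog overheads swamping $n^{1-2/k}$; it arises inside the correctness proof of the AHE itself, specifically in bounding the number of ``thick'' rows once signatures replace IDs (Corollary~\ref{ewrrkfkjlkjwefelkjkweflkjkwef}), where the extra $4^{\gamma}$ factor forces $k\ge 7$ rather than $k\ge 5$.
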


\noindent We now present the necessary definitions and theorems.
\begin{definition}\label{fdljdfljkdflkjgdflk}
Let $D$ be a stream and $\rho$ be a parameter. The index $i\in [n]$ is a \emph{$\rho$-heavy element} if
$
f_i^k \ge \rho F_k.
$
\end{definition}

\begin{definition}\label{dsjfnkjdfkjsdfkjf}
A randomized streaming algorithm $\mathcal{A}$ is an \emph{Algorithm for Heavy Elements (AHE)} with parameters
$\rho \text{ and } \delta$ if the following is true:
$\mathcal{A}$ makes three passes over stream $D$ and outputs
a sequence of indices and their frequencies
such that if element $i$ is a $\rho$-heavy element for $F_k$ then $i$ will be one of the indices returned\footnote{
Indices of non-heavy elements can be reported as well.}.
$\mathcal{A}$ errs with probability at most $\delta$.
\end{definition}

\begin{theorem}\label{wdfkjnwkejfkjwefkjwef}
Let $k \ge 7$.
There exists an absolute constant $C\le 10$ and an AHE algorithm with parameters $\rho$ and $\delta$ that uses
\begin{equation}\label{welfjlweflkwflwke}
O({1\over \rho^C}(F_0(D))^{1-2/k}\log{{1\over \delta}})
\end{equation} bits.
\end{theorem}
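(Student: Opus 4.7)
My plan is to identify $\rho$-heavy indices via a Count-Sketch-style data structure in one pass and then verify their frequencies exactly in a subsequent pass. In pass~1, I would compute $\tilde F_0$, a $2$-approximation of $F_0(D)$, using a standard distinct-elements algorithm in $O(\log n\log(1/\delta))$ bits with failure probability at most $\delta/3$. The value $\tilde F_0$ calibrates the sketch size used in pass~2.

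Pass~2 maintains a Count-Sketch with $B=\Theta(\tilde F_0^{1-2/k}/\rho^{2/k})$ buckets and $t=\Theta(\log(1/(\rho\delta)))$ independent repetitions drawn from a suitably independent hash family. On input $p_j=i$, the algorithm increments bucket $h_\ell(i)$ by $s_\ell(i)\in\{\pm 1\}$ in every table $\ell$. Each counter is stored as a Morris approximate counter in $O(\log\log m)$ bits. After the pass, the median estimator $\hat f_i=\mathrm{median}_\ell\, s_\ell(i)\,C_\ell[h_\ell(i)]$ is computed offline for every $i\in[n]$, and I would select a candidate set $S$ of size $O(1/\rho)$ consisting of the top indices by $\hat f_i$. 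Pass~3 maintains one exact counter per $i\in S$ and outputs the pairs $\{(i,c_i):i\in S\}$.

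Correctness rests on H\"older's inequality
\[
F_2(D)\le F_0(D)^{1-2/k}\,F_k(D)^{2/k},
\]
which bounds the single-table variance of the sketch by $F_2/B=O(\rho^{2/k}F_k^{2/k})$. Chebyshev's inequality and median amplification then give $|\hat f_i-f_i|=O(\rho^{1/k}F_k^{1/k})$ with failure probability $2^{-\Omega(t)}$ per index. Since a $\rho$-heavy index satisfies $f_i\ge\rho^{1/k}F_k^{1/k}$, its estimate dominates the noise by a constant factor; and the pigeonhole bound $\#\{i:f_i^k\ge c\rho F_k\}\le 1/(c\rho)$ limits to $O(1/\rho)$ the number of indices that can outrank a heavy one in the top-$K$ ranking. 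A union bound over this $O(1/\rho)$-size critical set ensures all heavy indices enter $S$ with probability at least $1-\delta/3$, and pass~3 reports their exact frequencies.

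The principal obstacle is the space accounting: the stated bound $O(\rho^{-C}F_0^{1-2/k}\log(1/\delta))$ admits no explicit $\log n$ or $\log m$ factors. I would address this by (i) keeping approximate counters so each bucket costs $O(\log\log m)$ bits, (ii) restricting the union bound to the $O(1/\rho)$-size critical set rather than all of $[n]$, so that $t=O(\log(1/\rho)+\log(1/\delta))$ suffices, and (iii) charging the remaining $\log(1/\rho)$ and $\log\log m$ overheads to the slack permitted by the absolute exponent $C\le 10$ in the $\rho^{-C}$ term. Verifying (iii) rigorously---i.e., that the final bit count really fits within $O(\rho^{-C}F_0^{1-2/k}\log(1/\delta))$ uniformly in $\rho$---is where the main bookkeeping effort will lie, and is the delicate step that justifies the specific absolute constant promised by the theorem.
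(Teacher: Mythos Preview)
Your proposal has a genuine gap in step (iii), and it is not a bookkeeping issue but the heart of the theorem. You propose to ``charge the remaining $\log(1/\rho)$ and $\log\log m$ overheads to the slack permitted by the absolute exponent $C\le 10$ in the $\rho^{-C}$ term.'' This cannot work: $m$ and $\rho$ are independent parameters, so no power of $1/\rho$ can absorb a $\log\log m$ factor. For constant $\rho$ (say $\rho=1/2$), $\rho^{-C}$ is a fixed constant while $\log\log m$ is unbounded. The theorem's bound carries \emph{no} dependence on $m$ or $n$ beyond the $(F_0)^{1-2/k}$ term, and removing that last polylogarithmic factor is precisely what separates this result from the prior $\tilde O(n^{1-2/k})$ bounds your Count-Sketch essentially re-derives. (There is also a secondary problem: Morris counters handle nonnegative increments, not the signed $\pm 1$ updates of Count-Sketch; splitting into positive/negative Morris counters loses the cancellation that controls the variance.)

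The paper's route is structurally different and is where the actual work lies. Instead of a linear sketch, it runs a tournament (``game'') among $O(n^{1-2/k})$ sampled players across rows of the stream; after round $\gamma$ only a $3^{-\gamma}$ fraction of players survive, so counters of $O(\gamma)$ bits suffice. The crucial idea for eliminating the $\log n$ factor from IDs is to replace IDs by short random \emph{signatures} of length $O(\gamma)$, combined with a two-level sampling (hash-then-reservoir) that keeps the pool of colliding candidates small enough for short signatures to separate them. After $O(\log\log n)$ rounds the surviving population has shrunk by a $\log^{\Omega(1)} n$ factor, at which point full $\log n$-bit IDs can be reattached without exceeding the budget. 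Controlling false positives under signature collisions across rows (the ``faulty/perfect'' row analysis, Lemmas on thick rows, and the Noisy Lemma) is the nontrivial part; it is this machinery, not a tweak of counter encodings, that buys the clean $O(\rho^{-C}(F_0)^{1-2/k}\log(1/\delta))$ bound.
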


\begin{theorem}\label{sdsdkflksdflksdf}
Given Theorem \ref{wdfkjnwkejfkjwefkjwef}, for any $\epsilon$ there exists an algorithm that uses
\begin{equation}\label{lkdflgdlfjgjkldfgkjdfgkjdfgkjd}
O({1\over \epsilon^{2C}}(F_0(D))^{1-2/k})
\end{equation}
memory bits, makes three passes over $D$, and outputs a $(1\pm \epsilon)$-approximation of $F_k$ with probability at least $2/3$.
Here $C$ is the constant from Theorem \ref{wdfkjnwkejfkjwefkjwef}.
\end{theorem}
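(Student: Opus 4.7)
The plan is to use the AHE algorithm of Theorem \ref{wdfkjnwkejfkjwefkjwef} as a black box inside the Martingale Sketches framework referenced in Section \ref{dsjkfkjsdfkjdsfkjsdfkj}. First I would set the heaviness threshold to $\rho = c\epsilon^2$ for a small absolute constant $c$ and the per-instance failure parameter $\delta$ to an absolute constant. The motivation for this choice is the standard observation that, since $k>2$, the total mass $\sum_{i:f_i^k < \rho F_k} f_i^k$ contributed by non-$\rho$-heavy coordinates is $O(\epsilon) F_k$, so an estimator that correctly captures the contribution of every heavy element to within a relative error of $\epsilon$ automatically produces a $(1\pm\epsilon)$-approximation of $F_k$.

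Next I would feed this AHE instance into the Martingale Sketches reduction. The reduction works on a chain of subsampled substreams $D = D_0 \supseteq D_1 \supseteq \cdots$, running an AHE instance on each level in parallel during the same three passes over $D$. Summing the heavy-element contributions at the top level and adding the martingale difference corrections at the deeper levels produces an unbiased estimator for $F_k$ whose variance is controlled, via the telescoping structure of the differences, by the per-level guarantee of the AHE rather than by the number of levels; a constant number of independent repetitions plus a median-of-means argument then drives the failure probability below $1/3$.

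For the space accounting, substituting $\rho = \Theta(\epsilon^2)$ and constant $\delta$ into equation (\ref{welfjlweflkwflwke}) gives $O(\epsilon^{-2C}(F_0(D))^{1-2/k})$ bits per AHE instance. Because each subsampled substream $D_j$ satisfies $F_0(D_j)\le F_0(D)$ and the number of non-trivial levels is controlled by the Martingale Sketches framework, the overall space remains $O(\epsilon^{-2C}(F_0(D))^{1-2/k})$ as claimed in (\ref{lkdflgdlfjgjkldfgkjdfgkjdfgkjd}). The three-pass bound is inherited directly, since all AHE instances share the same three passes over $D$.

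The main obstacle will be carrying out the variance analysis of the martingale so that contributions from the different subsampling levels do not accumulate an extra $\log n$ factor in space, which a naive union bound over the levels would force. Controlling this requires exploiting the martingale difference structure to show that the second moments telescope, and carefully arguing that the event ``AHE recovers every $\rho$-heavy element of the current subsampled substream'' is sufficient at each level so that a constant per-level $\delta$ suffices. A secondary subtlety is verifying that the frequency estimates returned alongside the indices by the AHE are accurate enough that squaring them (or raising to the $k$-th power) does not amplify errors beyond the targeted $\epsilon$ tolerance.
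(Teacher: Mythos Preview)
Your proposal has the right high-level framework---run AHE instances at each subsampling level and combine via Martingale Sketches---but the actual content of the proof lies in the parameter scheduling, and there your plan departs from what makes the argument close.

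First, the claim that the mass of non-$\rho$-heavy coordinates is $O(\epsilon)F_k$ is simply false: take all frequencies equal to $1$, so that nothing is $\rho$-heavy yet all of $F_k$ sits in non-heavy coordinates. The Martingale Sketches reduction does not work by summing heavy elements and bounding the rest. Rather, at each level $i$ the heavy elements of $V_i$ are used to build a correction $Z_{i+1}$ to the martingale increment $b_{i+1}-b_i$, and the Fixing Lemma requires $|Z_{i+1}+b_{i+1}-b_i|\le\epsilon_i b_i$ to fail with probability at most $0.1(1-u)u^i$, where $\epsilon_i=0.1(1-u)u^i\epsilon$---both geometrically decreasing in $i$ (equation (\ref{dwdffwsf})). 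A Chebyshev computation then forces the heaviness threshold to be $\alpha_i=0.1\epsilon_i^2(1-u)u^i$ (equation (\ref{sdfsdsdg})), shrinking like $u^{3i}$, not a fixed $c\epsilon^2$.

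Second---and this is exactly the obstacle you flagged but did not resolve---with fixed $\rho$ and fixed $\delta$ one really does accumulate a $\log n$ factor, and no ``telescoping of second moments'' saves it; the paper does not telescope variances. What it does instead is choose $u=(1/(2q))^{(1-2/k)/(3C)}$ with $q=0.6$, so that the growth of $1/\alpha_i^C\sim u^{-3iC}=(2q)^{i(1-2/k)}$ is strictly dominated by the decay of $(F_0(D_i))^{1-2/k}$, whose expectation is $2^{-i(1-2/k)}(F_0(D))^{1-2/k}$. This makes the per-level costs a $\theta$-geometric sequence (Lemma \ref{geometric}), and Lemma \ref{sdjkkjfsdssfdhdffdsf} then sums them to $O(\epsilon^{-2C}(F_0(D))^{1-2/k})$. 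Similarly the per-level failure probabilities are not constant but $\delta_i=1/(10\cdot 2^i)$, so the union bound over all levels costs at most $0.2$; the resulting $O(i)$ factor in the level-$i$ cost is absorbed by bumping $q$ from $0.6$ to $0.7$. A constant per-level $\delta$ combined with median-of-means would not give you this without the $\log n$ loss you are trying to avoid.
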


\noindent From here, we see that the main theorem, Theorem \ref{ekmrmflkerfflkreflker}, follows directly from Theorem \ref{sdsdkflksdflksdf}.

After establishing the matching bound with three passes, we improve our algorithm further:
\begin{theorem}\label{ekmrmflkerfflkreflker1}
Let $\epsilon$ be a constant and $k > 3$. Assuming that $m$ and $n$ are polynomially far, there exists an algorithm that outputs a $(1\pm \epsilon)$-approximation of $F_k$, makes one pass over the stream, uses $O(n^{1-2/k})$ memory bits, and errs with probability at most $1/3$.
\end{theorem}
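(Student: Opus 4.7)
The plan is to upgrade the three-pass $O(n^{1-2/k})$ algorithm of Theorem \ref{sdsdkflksdflksdf} in two orthogonal directions: compress the three passes into one, and enlarge the admissible range of $k$ from $k\geq 7$ down to $k>3$. The polynomial-gap hypothesis is used essentially, since it forces $\log m = O(\log n)$ and thereby makes all natural scale counts $O(\log n)$, which is absorbed into the constant once $\epsilon$ and $k$ are constants.

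For the pass-compression step, I would reorganize the three passes of Theorem \ref{sdsdkflksdflksdf} — heavy-element discovery, frequency refinement, and Martingale Sketch integration — into a single online protocol. The AHE of Theorem \ref{wdfkjnwkejfkjwefkjwef} is used as a black box that emits a candidate list together with frequency estimates; in a one-pass setting I would replace it by a mergeable variant that outputs candidates continuously, feeding them into the Martingale Sketch of Section \ref{dsjkfkjsdfkjdsfkjsdfkj} as they arise. Because $\log m = O(\log n)$, only $O(\log n)$ dyadic length scales matter, so a precision-sampling-style reservoir over these scales lets the algorithm behave, at the end of the pass, as if it had separately executed each AHE phase on the appropriate substream.

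To relax $k \geq 7$ to $k > 3$, I would use the AHE at a fixed surrogate exponent $k' \geq 7$ together with a dyadic decomposition of frequencies. Partitioning the indices by $2^{\ell} \leq f_i < 2^{\ell+1}$, the contributions to $F_k$ and to $F_{k'}$ differ only by a known power of $2^{\ell}$ per level, and Hölder-style level-wise comparisons show that any $\rho$-heavy element for $F_k$ is a $\rho'$-heavy element for $F_{k'}$ with $\rho' = \mathrm{poly}(\rho)$. Running the $k'$-AHE per level and combining the reports through the Martingale Sketch then yields a one-pass $(1\pm\epsilon)$-approximation of $F_k$ for every constant $k > 3$.

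The main obstacle is making the budget tight as $k \downarrow 3$, where $n^{1-2/k}$ deteriorates toward $n^{1/3}$. Every multiplicative loss — the constant $C$ in (\ref{welfjlweflkwflwke}), the $\mathrm{poly}(\rho)$ inflation from the $k \mapsto k'$ surrogate, and the $O(\log n)$ factor from running per-scale AHEs — must be jointly absorbed into the $O(\cdot)$ hidden constant, and this is only possible because $k,\epsilon$ are constants and $m,n$ are polynomially related. Verifying that the per-level AHE instances can share state through the Martingale Sketch without a cumulative $O(\log n)$ blow-up is, in my estimation, the delicate technical point that carries the whole argument.
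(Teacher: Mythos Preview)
Your approach to relaxing $k\ge 7$ to $k>3$ has a genuine gap that would cause the space bound to fail. You propose running the AHE of Theorem \ref{wdfkjnwkejfkjwefkjwef} at a surrogate exponent $k'\ge 7$ and then transferring heaviness via the (correct) observation that a $\rho$-heavy element for $F_k$ is $\rho^{k'/k}$-heavy for $F_{k'}$. But the space guarantee (\ref{welfjlweflkwflwke}) of the $k'$-AHE is $O\!\left(\rho'^{-C} F_0^{\,1-2/k'}\right)$, and since $k'>k$ we have $1-2/k' > 1-2/k$; for instance at $k=4,\ k'=7$ you would spend $n^{5/7}$ bits against a target of $n^{1/2}$. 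The dyadic decomposition over frequency levels does not rescue this: the level $\{i:f_i\in[1,2)\}$ can have $F_0$ as large as $n$, so at least one per-level AHE instance still costs $n^{1-2/k'}$. The paper instead opens the AHE box: the bound $k\ge 7$ (resp.\ $k\ge 5$) comes only from the inequality (\ref{sdkjfksdfkjsd}) through the constants $2$ and $3$ in the competition schedule $2^\gamma,3^\gamma$, and replacing these by $\sigma^\gamma,(\sigma+1)^\gamma$ for a large constant $\sigma$ (with a smaller $\mu$) pushes the admissible range down to $k>3$ while keeping the native $n^{1-2/k}$ cost. The signature-phase loss of a factor $4^\gamma$ is absorbed by the $2^{-\varrho\gamma}$ slack in the collision bounds, so no surrogate exponent is needed.

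Your pass-compression plan is too schematic to match what the paper actually does, and the pieces you name (a ``mergeable AHE'' feeding the Martingale Sketch, a precision-sampling reservoir over length scales) are not the mechanisms used. The paper removes the \emph{first} pass (computing $F_1$) by an adaptive doubling trick: start as if $F_1\le 2n$, halve the sampling rate each time the observed length doubles, keep the winner of every completed execution (polynomial gap $\Rightarrow$ $O(\log n)$ winners), and argue correctness by induction on $\log(F_1/n)$ via a three-way case split on whether the prefix or the suffix of the stream carries a constant fraction of $f_1$ and of $G_k$. The \emph{third} pass (exact frequencies of reported indices) is removed by showing the Martingale Sketch tolerates $(1\pm\epsilon_i)$-approximate heavy values with geometrically decreasing $\epsilon_i$, and by over-finding $\rho^2$-heavy elements and retaining only the top $1/\rho$ so that spurious non-heavy outputs contribute at most $\rho F_k$ noise. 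None of this requires the online ``continuous emission'' interface you describe.
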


\subsubsection*{Additional results}
The previous theorems demonstrate the optimal reduction from the problem of computing frequency moments for constant $k>2$ to the problem of finding heavy elements with constant error. The Martingale Sketches technique is an improvement over the previous method of recursive sketches \cite{recursive1}. Thus, our method is applicable in a general setting of approximating $L_1$-norms of vectors which have entries obtained by applying entry-wise functions on the frequency vector. As a result, we answer the main open question from \cite{recursive1} and improve several applications in \cite{recursive1}. We will provide a detailed list of these results in the full version of the paper.

\subsection{Roadmap}
In Section \ref{dsfsdlkjfjkljsdflkjsdf}, we prove Theorem \ref{wdfkjnwkejfkjwefkjwef}.
Initially, in order to construct the proof, we make several assumptions which are shown in Table \ref{table1}. Later, in Section \ref{sdflklkjsdfjldsjf}, we show how these assumptions can be removed.
In Section \ref{dsjkfkjsdfkjdsfkjsdfkj} we present the new technique used for proving Theorem \ref{sdsdkflksdflksdf}. This new method, Martingale Sketches, allows the reduction of the problem of computing frequency moments to the problem of finding heavy hitters.
In Section \ref{sdjfkjsdfkjskdjfkjsdf} we prove Theorem \ref{sdsdkflksdflksdf} by combining the Martingale Sketches algorithm with the results from Section 2 and by verifying the space requirements of the final algorithm.
Finally, in Section \ref{sdljfksjdfjksdfkjsd} we prove Theorem \ref{ekmrmflkerfflkreflker1}.
In the remainder of this section we will discuss the related work and will provide an intuition for the main steps of our algorithms and their analysis.

\subsection{Related work}
Approximating $F_k$ has become one of the most
inspiring problems in streaming algorithms. To begin, we provide an incomplete
list of papers on frequency moments
 \cite{stable,5215,ams,ams1,
frequency_lower_bound1, frequency_lower_bound2, 711822, frequency,
frequency_impr2, 776778, 796530, frequency_impr1, 1459774, 1496816, recursive,
nelson, 1807094, 1807101, 982817, 1250891, 1374470, 1265565,DBLP:journals/corr/abs-1104-4552, DBLP:journals/corr/abs-1201-0253, Woodruff:2012:TBD:2213977.2214063, Jayram:2011:OBJ:2133036.2133037} and
references therein.
These and other papers have produced many beautiful
results, important applications, and new methods. Below we will mention a few of the results that provide relevant bounds.
We refer a reader to \cite{strbook, DBLP:reference/db/Woodruff09} and references therein for further details.

In \cite{ams}, the authors observed that it is possible
to approximate $F_2$ in optimal polylogarithmic space.
Kane, Nelson and Woodruff \cite{1807094} gave a space-optimal solution for $F_0$.
Kane, Nelson, and Woodruff \cite{nelson}
gave optimal-space results for $F_k, 0<k<2$.
In addition to the original model of \cite{ams}, a variety of different models of streams have been introduced.
These models include the turnstile model (that allows insertion and deletion) \cite{stable}, the sliding window model \cite{Braverman:2007:SHS:1333875.1334202}, and the distributed model \cite{gibbons,Woodruff:2012:TBD:2213977.2214063,Cormode:2011:CDM:2031792.2031793}.
In the turnstile model, where the updates can be integers in the range $[-M,M]$, the latest bound by Ganguly \cite{DBLP:journals/corr/abs-1104-4552} is $$O(k^2\epsilon^{-2}n^{1-2/k}E(k,n) \log (n) \log (nmM)/\min(\log (n),\epsilon^{4/k-2}))$$ where $E(k,n) = (1-2/k)^{-1}(1-n^{-4(1-2/k)})$. This bound is roughly $O(n^{1-2/k}\log^2(n))$ for constant $\epsilon, k$.
Recently, Li and Woodruff provided a matching lower bound for $\epsilon < 1/(\log n)^{O(1)}$ \cite{Rag2013}. Thus, for the turnstile model, the problem has been solved optimally for $\epsilon < 1/(\log n)^{O(1)}$ \cite{DBLP:journals/corr/abs-1104-4552,Rag2013}.  These results combined with our result demonstrate that the turnstile model is fundamentally different from the model of Alon, Matias, and Szegedy.


\subsection{An Illustrative Example}
In this section we will demonstrate the main steps of our method by considering a simplified problem.
Let $D$ be a stream with the following promise: all non-zero frequencies are equal to $1$ with the exception of a single element
$i$ such that the frequency of $i$ is $f_i\ge n^{1/k}$.
Furthermore, $m=\Theta(n)$ and if we split $D$ into intervals of length $O(n^{1-1/k})$ then $i$ appears once in each interval.
Clearly, $i$ is the heavy element and the goal of the algorithm will be to find the value of $i$.
This simplified case is interesting because the same promise problem is used for the lower bound in \cite{frequency_lower_bound2} and in many other papers. We will thus illustrate the capability of our method by showing that a bound  $O(n^{1-2/k})$ is achievable in this case.

We will assume without loss of generality that $i=1$. This assumption does not change the analysis but simplifies our notation. In \cite{DBLP:journals/corr/abs-1212-0202} it is shown that  $O(n^{1-2/k})$ samples are sufficient to solve the problem.
However, each sample requires $\log n$ bits for identification (we will use a notion of ``ID'' to identify the value of $i\in [n]$.)
As well, any known algorithm stores information about the frequency of the heavy element.
This can be done by storing a sketch or an explicit approximate counter. In the most direct implementation, $\log m$ bits are required to store the counter. In this example we will assume that $\log n = \Theta (\log m)$ and we will use a single parameter $\log n$.

If $n^{1-2/k}$ independent samples are sampled from each interval then the probability to sample $1$ is a constant.
Next, observe that most of the time only $O(1)$ bits are needed for the counters since all frequencies except $i=1$ are either zero or one. Thus, it is sufficient to reduce the bits for IDs.

The key idea is to replace IDs with signatures and uniform sampling with (appropriately chosen) hashing.
Combining signatures of constant length with hashing ensures that the number of false positives is relatively small.
Specifically, consider a hash function $g:[n]\mapsto [n^{1-1/k}]$ and let the $z$-th sample of the $i$-th interval be defined as follows. Let
\begin{equation}\label{sdflksdfljjdfsd}
\Gamma_{i,z} = \{j: g(p_j)=z\}
\end{equation}
where $p_j$ are elements from the $i$-th interval.
To obtain the final sample, we sample one element uniformly at random from $\Gamma_{i,z}$.
We call this sampling schema \emph{two-level sampling}.
It is not hard to see that the probability that $1$ is sampled using the new sampling method is still a constant.
Now consider the case that each sample is represented using a signature of length $O(1)$.
Suppose that we store signature $SIG$ for the $z$-th sample in the $i$-th interval.
The comparison of the sample with another element $q$ of the stream will be defined by the following procedure.
We say that they are equal if $g(q) = z$ and the signature of $q$ is equal to $SIG$.
Consider the case when we sample the heavy element. In this case the consecutive appearances of $1$ will always be
declared equal to the sample.
Consider the case when $l$ has been sampled and when $f_l=1$. The probability that there will be any collision
in the next interval is at most $2^{-|SIG|}$.
Therefore we can exploit the probability gap between these two cases.

Specifically, deleting samples with a small number of collisions allows for increasing signatures for the remaining samples in the future intervals.
After $2^\gamma$ intervals, it is possible to increase the signature by $O(1)$ bits for $\gamma = 1,2,\dots$.
Simple analysis shows that the heavy element will never be discarded and that the number of active samples decreases exponentially with $\gamma$.
Thus, the total expected space for storing the data is
$O\left({n^{1-2/k}\gamma\over 2^{O(\gamma)}}\right).$
The aforementioned procedure is called the $\gamma$-th round for the $i$-th interval.
At any moment there are at most $2^\gamma$ intervals in the $\gamma$-th round and the total space is $O(n^{1-2/k})$.
For $\gamma = \Omega(\log\log n)$ storing IDs instead of signatures implies that
if the heavy element is not discarded then the correct answer is produced.
The algorithm works in one pass and uses $O(n^{1-2/k})$ bits\footnote{It is possible to show that $g$ can be pairwise independent.}.



\subsection{Intuition}\label{dfsdfsefsdfsdf}

\subsubsection{High Level Description of the Algorithm}

We present a composite algorithm to estimate frequency moments. At the absolute lowest degree of detail, we perform three steps. First, we determine the length of the stream. Second, we use a new algorithm to efficiently find heavy elements. Finally, we use a new technique to estimate the value of frequency moments from the weight of the found heavy elements.
We now describe the intuition of each of these parts in detail.

\subsubsection{The Heavy Hitter Algorithm}
The key step in our algorithm for frequency moment computation is a new technique to compute the heavy hitters of a stream.
In order to determine which elements are $\rho$-heavy in stream $D$, we present an algorithm that is implemented as a sequence of sub-algorithms, and in general we will refer to each of these sub-algorithms as a ``game".
In \cite{DBLP:journals/corr/abs-1212-0202} it is shown that  $O(n^{1-2/k})$  samples are sufficient to solve the problem.
However, each sample requires $\log n$ bits for counting the frequency and for identifying the elements. The resulting bound is $O(n^{1-2/k}\log n)$ bits. The goal of our algorithm, therefore, is to reduce the space required for counters and IDs from $\log n$ to an amortized $O(1)$ bits, achieving the optimal bound.

First we will describe the workings of a single instance of the game, and then we will describe the sequence of games that composes our heavy hitter algorithm.  Each game in the sequence will be run in parallel, and the cost of the sequence of games will form a geometric series, which when evaluated will yield a total cost of $O(n^{1-2/k})$ bits. The crucial observation is that a heavy element in the stream will be returned by at least one of these games with constant probability, and will be sufficiently frequent to stand out from the other returned values as the true heavy hitter.

\subsubsection*{The Game}

To find a heavy element of a stream and prove Theorem \ref{wdfkjnwkejfkjwefkjwef}, we play a game using the stream as input. First we split the stream into equally sized rows as we read it in, and assemble them into a matrix $M$.

\includegraphics[scale=0.75]{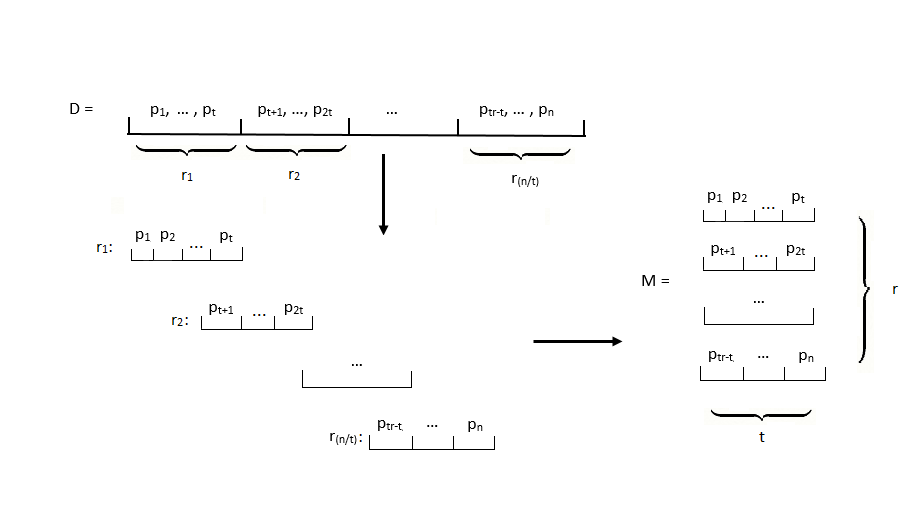}
$$\text{Picture 1: Transforming the stream into a matrix}$$

 A single game is described colloquially as follows:  for each row, we create a ``team" that is composed of a group of $w$ players each competing to be the winner of that game. To create these teams, we sample elements from the current row to act as the players on each team, and give each player an ID number equal to one of the sampled elements.
For each player on a team, maintain a counter to track how often their ID number appears as we move through the stream.  If the player's counter does not grow fast enough, that player is removed from the game.

The $\gamma$-th round is played by each team after $2^\gamma$ rows have passed since the team started playing. In each round, we divide the players of each team into groups of size $3^\gamma$, the players compete within these groups, and there is at most one winner per group, i.e. the surviving player whose counter is highest. The winning player from each group continues to play throughout the remainder of the game, competing in further rounds. Players who are not winners withdraw from the game and do not compete in any further rounds.

At the end of the game, each team will have at most one winner. The winners from every team then compete against each other, and the player with the highest overall counter is the overall winner. We define this game more formally in section 2.1.

\begin{note}\label{sdfdfsdffsdf}
We only need to consider $\gamma \le \log_3 n^{1-2/k}$ since for any larger values of $\gamma$ there is only a single winner of the game in the $i$-th row. Since we keep the winner until the end of the game, there is no need to check the expiration of row winners.
\end{note}

Assume that $F_k = O(n)$ and that $1$ is a heavy element that appears among every $O(n^{1-1/k})$ elements. We can make two observations.
First, the counter of the player who samples $1$ requires only $O(\gamma)$ bits after seeing $2^{\gamma}n^{1-1/k}$ elements of the stream.
Also, this counter will have a nice property of linear growth: after seeing $2^{\gamma}$ intervals the counter will be at least
$2^{\gamma}$.

Second, we can observe that the sum of the frequencies of every element that has frequency larger than $\lambda$ is at most $G_k\over \lambda^{k-1}$, where $G_k = F_k - f_1^k$.
This is because
$$
\sum_{l\ge 2, f_l\ge \lambda} f_l \le {1\over \lambda^{k-1}}\sum_{l\ge 2, f_l\ge \lambda} f_l^k \le {G_k \over \lambda^{k-1}}.
$$
Thus, the number of intervals with many such elements is small. For example, let an element $l$ be ``$\gamma$-bad'' if $f_l\ge 2^\gamma$ and consider an interval to be a ``bad" interval if it contains more than $n^{1-1/k}\over 2^{100\gamma}$ distinct bad elements. There are at most ${G_k\over 2^{(k-1)\gamma}}{2^{100\gamma}\over n^{1-1/k}}$ such intervals.
Under the assumption that $F_k = O(n)$, and for sufficiently large $k$, this number is exponentially smaller than $n^{1/k} \over 3^{\gamma}$.
As a result, the probability that a bad element will be sampled among the $3^\gamma$ competitors is exponentially small.

Unfortunately, the above observations are not true in general. First, the distribution of the heavy element throughout the stream can be arbitrary. For example, half of the appearances of the heavy element may occur in a single row and
thus we need $\log n$ bits at the time each player starts playing the game. Second, it is possible that $G_k$ is much larger than $n$ in which case the number of
bad intervals can be larger.
It is possible that there exist intervals with the number of $1$s being $2^i$ for every $i=0,1,\dots, \lfloor\log(n)\rfloor$ and they comprise an equal percentage of the total frequency.

To overcome these problems we show that there exists a $\beta$ such that there are a sufficiently large number of intervals where the number of $1$s in each interval is in the range $[2^\beta, 2^{\beta+1}]$.

In general, our goal is to show that for any distribution of the heavy element in the stream
there exists some $\beta$ such that
\begin{enumerate}
\item $O\left(n^{1-2/k} \over 2^{\mu\beta}\right)$ samples are needed to sample the heavy element with a constant probability, where $\mu$ is a small constant, and,
\item players that may compete with the heavy element will expire with high probability before the competition.
\end{enumerate}

The space bound implies that the problem can be solved without knowledge of the value of $\beta$.

\subsubsection*{A Sequence of Games}
An exhaustive search of the range of $\beta$, which we will later show is at most logarithmic, yields the sequence of games that eventually constitutes our heavy hitter algorithm. As stated before, we will prove that the cost for these games is geometric and, after some slight modifications discussed in this paper, yields the desired overall cost.

Our proof of correctness will rely heavily on what we term the ``Noisy Lemma" (\ref{dfdfljknjlfkjlrfekjf}).  In this lemma we aim to show that at least one event in a collection of ``good" events will come to pass with at least a certain probability, even if each event is impeded by a number of ``noisy" events that can prevent the event from occurring.  This lemma can then be applied to show that at least one player corresponding to the actual heavy hitter will win overall, even if there is a chance that other players with large but not heavy elements will win some games.

Having established this algorithm, it is clear that we have solved the problem of the space required by counters. It remains to show that we can store the ID of each player in sufficiently small space to achieve our desired bound. In order to do this, we will transfer the duty of tracking the identity of each player from a deterministic ID to a hashed signature.

\subsubsection*{Signatures instead of IDs}\label{adfsdfsefsdfsdf}
Given a new element of the stream, our algorithm needs to be able to differentiate elements for the following reasons:
\begin{itemize}
\item If the new element has the same ID as one of the samples, then the stored counter of the sample should be incremented.
\item If the new element has been chosen as a new sample for one of the players, it is necessary to compare
the IDs of the new elements and the current sample. If they are the same, we increment the counter; if they are different, we have to replace the sample.
\end{itemize}

Since there are $n$ possible elements, $\log n$ bits are required to identify all of the IDs deterministically.
However, note that after $O(\log\log n)$ rounds a team with initially $w$ active players will only have $w\over \log^{\Omega(1)} n$ active players.
Thus, $O(n^{1-2/k})$ bits are sufficient to store all IDs of sampled elements in all tables for all old rows for which at least $O(\log\log n)$ rounds have passed.

Therefore, we only need to take care of the first $\log\log n$ rounds each row plays.
It is acceptable to err with a small probability.
Thus, we can use random signatures to represent IDs to use less space. Our goal is to reach $O(\gamma^{O(1)})$ bits per signature.
Unfortunately, if we simply hash $[n]$ into a range of $[2^{\gamma^{O(1)}}]$, the number of collisions per row will still be polynomial in $n$
for small $\gamma$'s, which does not help.

In general, a small (constant) probability of collision can be shown only for sets of small cardinalities
when hashing is applied on a set of $2^{\gamma^{O(1)}}$ elements.
Thus, to use signatures we have to reduce the cardinality of the set of competitors. We do so
by implementing a sampling procedure with an additional independent hash function.
We choose the function carefully so that in a game with parameter $\beta$ the probability to sample the heavy element for any row from $S_{\beta}(\alpha)$ is preserved (see Table \ref{table2}).
First, we hash elements into a range $g: [n] \mapsto [t_{\alpha}]$, where $t_{\alpha}$ is the number of columns in our matrix, and allow only those elements with values smaller than $2^{\beta}$
to be sampled. We then compute signatures only for the elements in the ``pool'' $\Gamma$ of all elements that pass the $g$ filter.
With constant probability, $|\Gamma| = O(2^{\beta})$ and no element will have the same signature as the heavy element.
This sampling procedure will be implemented in Section \ref{dskfjkjasdskjsdf}.

The same argument will work for any $2^{\gamma}$ rows if the length of the signature is $\Omega(\gamma)$.
Thus, we can use $\gamma$ bits to represent all of the IDs. Then, after $\log\log n$ rounds, the cardinality will be small enough such that we are able to switch our method and use the real ID of a given player's element. We implement this by assigning the ID of the first element that can be sampled and has a matching signature to the player, and then counting based on this new ID. With constant probability this will be the same heavy element that we used to generate the signature to begin with.

While this technique reduces the space required, the downside is that there will be collisions for some of the $w$ players and as a result we need to overcome two technical issues.
First, due to multiple IDs being hashed to the same signature, the counters of the players can be larger than the frequency of the sampled element they are supposed to be counting. Second, if the heavy element is sampled from row $i$ it can now be incorrectly compared with many non-heavy elements from rows $\{i, \dots, i+2^{\gamma}\}$ that collide with another value initially sampled in row $i$. Intuitively, this can cause the counter for a given signature to be large due to many non-heavy elements hashing to the same signature. Because much of the analysis on the correctness of the algorithm is based on the counters of players who have sampled non-heavy elements, this difference must be addressed as well.

We overcome both of these problems as follows.

First, after we have progressed far enough to assign the real ID in addition to the signature, we will add a new counter. We will stop incrementing the old counter, and the new counter will count only the frequency of  elements with the chosen ID. Thus, we will no longer be counting based on the signature, and we will separate the values counted by the signature from the values counted by the ID.
Then, after more rounds,  we will switch to using only the new counter and thus the first problem will be fixed. This change creates an additional problem: some appearances of the heavy element might be discarded. We will ensure that the new counter will be polynomially larger than the old counter at the time when it will be discarded. Thus, the change is negligible and will not affect the correctness.

Second, we will prove that the probability that the counter of a non-heavy competitor increases by enough to impact the game is exponentially small in terms of $\gamma$. Thus, the issue of competitor collision is solved and the same analysis still applies.

Therefore, by adding the use of hashed signatures to the way we differentiate elements, we will show that we can bound the amount of bits used to store all ID numbers and all signatures by $O(n^{1-2/k})$.

\subsubsection{Martingale Sketches}

Now that we have an algorithm that can detect heavy hitters in $O(n^{1-2/k})$ space, it remains to show that this directly yields a method for approximating the $k$-th frequency moment.  We refer to the process by which this occurs as Martingale Sketches.  These sketches are constructed using a martingale sequence of random variables.  Our new method of approximation rests upon another result of this section: a reduction up to a constant factor of the problem of $k$-th frequency moments to the problem of heavy hitters.

Consider a vector where the sum of its elements cannot be computed directly.  If the elements of the vector vary in magnitude, then some elements will have a larger impact on the sum than others.  Now consider a second vector made from including or excluding each element of the first by the repeated flip of a fair coin, and then doubling the value of every included element.  The expected difference between the sums of the two vectors is $0$.  But because of the disproportionate contribution of heavy hitters, the actual difference will most likely not be $0$.  If we can find the heavy hitters of the first vector, we can examine which ones were included and which were excluded in the second vector.  Intuitively, the excluded ones will increase the difference between the vector sums, while the included heavy hitters will decrease it (because of the scaling up by a factor of $2$, and their already large contribution to the total sum).  This allows us to approximate the difference between the two vector sums.  If we repeat this process for the second vector and a new vector made from including or excluding each of its elements (with the included elements having their values doubled), and so on, then the repeated differences along with the sum of the final vector can be used together to accurately approximate the sum of the first vector.  Thus, finding a frequency moment is reducible to finding the heavy hitters of a series of vectors.

While the overall idea of reducing a vector sum to its heavy hitters is not new, what our algorithm provides is a cost function that is geometric by the nature of the given reduction.  Thus, the total space cost required for these computations matches the lower bound for frequency computation, up to a constant factor.

In the section on Martingale Sketches, we will show that one can view this process as the construction of a martingale sequence of random variables dependent only on finding the heavy hitters of given vectors.

\section{Finding Heavy Elements and Proving Theorem \ref{wdfkjnwkejfkjwefkjwef}}\label{dsfsdlkjfjkljsdflkjsdf}

Without loss of generality, suppose that the number $1$ is a heavy element in stream $D$. We will omit floors and ceilings unless they are necessary.

\subsection{Initial Algorithm}\label{sdkjnfkasddksak}

We will begin our solution by designing an algorithm for finding heavy elements in a stream which conforms to several assumptions.
Later, in Section \ref{sdflklkjsdfjldsjf} we will show how to remove these assumptions, which are listed in Table \ref{table1}.


The key step of the algorithm is a subroutine that we call a \emph{game}, which is described in Section \ref{F}.
The algorithm will execute (in parallel) a sequence of several  games with different parameters $\alpha$ and $\beta$.
The high level description is given in Algorithm \ref{sdjnfsdfsjdfskjd}.

\bigskip
{\begin{algorithm}\caption{Sequence of Games}\label{sdjnfsdfsjdfskjd}

\begin{enumerate}

\item For integer $\eta = 0,\dots, RANGE$.

\begin{enumerate}
\item For integer $u=\lceil1.5\eta\rceil,\dots,20\eta$
\begin{enumerate}
\item Put $\alpha = -0.5\eta$, $\beta =u$
\item Play the Game with parameters $\alpha, \beta$.
\end{enumerate}

\item For integer $u=20\eta+1\dots RANGE$
\begin{enumerate}
\item Put $\alpha = u/5$,
\item For integer $\beta = 0.8u-0.5\eta-2,\dots, u$

    \begin{enumerate}
    \item Play the Game with parameters $\alpha, \beta$.
    \end{enumerate}

\end{enumerate}

\end{enumerate}

\end{enumerate}
\end{algorithm}
}

The winner of each $(\alpha,\beta)$-game will compete against the others, and the overall winner will be the output of the algorithm. This brings us to one of our main technical results of this section which is given in Theorem \ref{sdfsdfsdfsdfsaldkflkaslkflsda}.
Informally, we show that the heavy hitter will be the winner with high probability.

\subsubsection{The $(\alpha,\beta)$-Game}\label{F}
Let $\chi \in \mathbb{Z}$. Define matrix $M_{\chi}$ by splitting $D$ into
$r_\chi$ consecutive intervals\footnote{Without loss of generality we will assume that $F_1$ is divisible by $t_{\chi}$. If this is not the case, we always can ignore the last incomplete row. If there is at least $0.5f_1$ appearances of $1$ in the incomplete row then it is possible to find the heavy element using $o(n^{1-2/k})$ bits. Otherwise, the problem is reduced to the problem when $m$ is divisible by $2^\chi n^{1/k}$. Also, w.l.o.g., we assume that $t_{\chi}$ is an integer. Otherwise, replacing $t_{\chi}$ with $\lceil t_{\chi}\rceil$ will work.} and mapping the $i$-th interval into the $i$-th row\footnote{See Tables \ref{table1}, \ref{table2}, \ref{table3} for the definitions.}. Let $t_{\chi}$ be the number of columns in $M_{\chi}$.

Given two parameters, $\alpha$ and $\beta$, we play the ($\alpha, \beta$)-game as follows.
Let $M_{\alpha}$ be a matrix with $r_\alpha$ rows and $t_\alpha$ columns.
The algorithm reads in the stream and preforms a simple transformation to represent it as rows of the matrix. When the algorithm reads row $i$, it selects a \emph{team} of \emph{players} from this row to be team $i$.
Each player on the the $i$-th team represents a sampled element from the $i$-th row which is chosen by non-uniform \emph{two-level sampling} that is described in detail in Section \ref{dskfjkjasdskjsdf}. Each player on team $i$ maintains a counter that is incremented whenever another element is found from row $j \ge i$ that shares its ID.

After the team is chosen, it plays \emph{rounds} of the game. During each round, a player may become \emph{inactive}. At the end of the game at most one player will be active.
Specifically, the $\gamma$-th round is played by each team after $2^\gamma$ rows have passed since the team started playing.

An element that becomes inactive is said to expire. A player can become inactive in two ways:

First, there are special restrictions on each player's counter. After each round, each player must have a counter that is greater than the threshold $TR$, which is a function on $\gamma, \alpha,$ and $\eta$, in order to continue playing. If the counter is smaller than $TR$ then the player becomes inactive. In addition, each player requires its initial counter (after reading its own row) to be at least a parameter $IC$.

Second, the players compete with each other. The competition in the $\gamma$-th round is defined as follows.
The players are divided into groups of size $3^\gamma$. The active player from each group with the highest counter is declared the winner of its group.
All other players become inactive (the ties are broken arbitrarily).
Also, the $\beta$-th round is played immediately after reading the first row and no rounds $\gamma < \beta$ are played. 


%
%

\subsubsection{Parameters and Notations.}

\begin{definition}\label{rewfkwejrkjwefkjwekjwer}
We say an $(\alpha,\beta)$-game is a successful game if the winner of the game is $1$ and the counter of the winner is at least $0.5f_1$. We say that a sequence of $(\alpha,\beta)$-games is a successful sequence if the winner of the sequence is $1$ and the counter of the winner is at least $0.5f_1$.
\end{definition}

\begin{fact}\label{asljffkjsadfkjds}
Let $HALF$ be the set of the first half of all appearances of $1$ in $D$ (the first $\frac{f_1}{2}$ appearances).
Consider a sequence of $(\alpha,\beta)$-games. If any element from $HALF$ wins in
its team in any game of the sequence then the sequence is successful.
\end{fact}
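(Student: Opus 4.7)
The plan is to convert the premise "some team-winner is an element from $HALF$" into the two requirements in Definition~\ref{rewfkwejrkjwefkjwekjwer} for a successful sequence: that the overall winner has ID $1$, and that its final counter is at least $0.5 f_1$. Let $P$ denote the team winner in question, sitting on team $i$ of some $(\alpha,\beta)$-game in the sequence, whose sampled appearance of $1$ lies in $HALF$. By construction the stored ID of $P$ is $1$, so it suffices to (i) lower-bound $P$'s final counter by $0.5 f_1$, and (ii) rule out any competitor defeating $P$ while carrying an ID different from $1$.

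For (i), the key observation is that $P$'s counter is incremented for every appearance of the element $1$ that is read from row $i$ onward. Since $P$'s sampled appearance of $1$ is among the first $f_1/2$ occurrences in the stream, at least $f_1/2$ occurrences of $1$ remain in rows $i,i+1,\dots,r_\alpha$. Hence $P$'s counter at the end of the game is at least $f_1/2 = 0.5 f_1$, matching the threshold demanded by a successful game.

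For (ii), I would invoke the assumptions of Table~\ref{table1}, under which $1$ is strictly heavier than every other element; in particular $f_l<0.5 f_1$ for all $l\neq 1$. Any player whose stored ID is some $l\neq 1$ has a counter bounded above by $f_l$, hence by a value strictly smaller than $0.5 f_1$, so such a player cannot beat $P$ in either the cross-team competition inside this game or the cross-game competition of the whole sequence. The only way to outscore $P$ is to also carry ID $1$, in which case the overall winner still has ID $1$ and a counter of at least $P$'s, which is $\ge 0.5 f_1$. Either way the sequence is successful.

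The main obstacle in extending this clean argument to the full algorithm will be the later move from exact IDs to hashed signatures: signature collisions can both inflate a non-heavy competitor's counter past $0.5 f_1$ and dilute the heavy element's own counter, so the two bookkeeping steps above stop being tautological. In the idealized setup of Section~\ref{F}, however, the fact follows directly from the two observations above, and the noisier signature-based analysis is deferred to later subsections.
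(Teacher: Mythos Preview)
Your proposal is correct and follows essentially the same approach as the paper's proof: the paper also argues that a team winner sampled from $HALF$ ends with counter at least $0.5f_1$, notes that $0.5f_1 > f_l$ for every $l\neq 1$, and observes that counters are never overestimated, so no competitor with a different ID can beat it. Your discussion of the signature-based complication is a useful forward pointer but is indeed deferred in the paper as well.
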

\begin{proof}
Indeed, at the end of the game the counter of $1$ will be at least $0.5f_1$ by the definition of $HALF$. Recall that $0.5f_1 > f_l$ for any $l\neq 1$.
Note that the game never overestimates counters and thus the claim follows.
Therefore the fact is correct.
\end{proof}
Thus, we will restrict our analysis to $HALF$. In Table \ref{table2} we establish common notations and constants.

\begin{table}[H]

\centering
\begin{tabular}{|l|l|}
\hline
$G_k$ & $F_k - f_1^k$, The frequency moment of all non-heavy elements \\ \hline
$t_{\chi}$ & $2^{-\chi}n^{1-1/k}$, the number of columns in $M_{\chi}$ \\ \hline
$r_{\chi}$ & $F_1/t_{\chi}$, the number of rows in $M_{\chi}$ \\ \hline
$f_{l}(\chi,i)$ & the number of times $l$ appears in the $i$-th row of $M_{\chi}$ \\ \hline
$S_u(\chi)$ & the set of rows $i$ in $M_{\chi}$ with $2^{u-1}\le f_{1}(\chi,i) < 2^{u}$\\ \hline
$T_\lambda$ & ${  \{  l :  f_l > \lambda, l > 1 \}}$, the set of elements $l\neq 1$ whose frequency are larger than $\lambda$. \\ \hline
$\mu = 2^{-10}$  & A constant used for analysis \\ \hline
$\Psi = \mu^{-6}(\log\left({2^\mu\over 2^\mu-1}\right) + 100+k)$  & A constant used for analysis \\ \hline
\end{tabular}
\caption{Some Notations in Section \ref{dsfsdlkjfjkljsdflkjsdf} }\label{table2}
\end{table}

In Table \ref{table3} we describe the values of various parameters that are related to an $(\alpha,\beta)$-game.
The parameters are functions of $\alpha,\gamma,\beta,u,\eta$.

\newcommand{\xixi}
{
$3^{-\gamma}2^{-\beta-\mu\gamma}$
}
\newcommand{\TR}
{
$2^{\gamma - \alpha+\eta-1}$
}

\begin{table}[H]
\centering
\begin{tabular}{|l|l|}
\hline
$w$ & $2^{-\mu\beta}n^{1-2/k}$ \\ \hline
$CM$ & $2^{\gamma^2}$ \\ \hline
$IC$ & $2^{\beta-7}$ \\ \hline
$TR$ (Threshold)& \TR \\ \hline
$RANGE$ (Range for $\alpha,\beta$-games)& $O(\log\log n)$\\ \hline
$\xi$    (A parameter that is used in Definition \ref{dsfsefsdfsdfdsfdsfdsfdsf}.) &\xixi \\ \hline\end{tabular}
\caption{Parameters of a single game.}\label{table3}
\end{table}

\subsubsection{Assumptions}\label{sdljfkjsdfkjsdflkjsfd}

In Table \ref{table1} we describe the initial assumptions (and related parameters).

\begin{table}[H]

\centering
\begin{tabular}{|l|l|}
\hline
$F_1 \le C_2n$  & $C_2$ is some absolute constant. \\ \hline
$0\le \eta \le \log\log n$ & Assume that $\eta$ is even.\\ \hline
$G_k \le 2^{\eta k}n$ &  \\ \hline
$f_1 \ge C_22^{\eta+2\Psi+1}n^{1/k}$ & See Table \ref{table3}.\\ \hline
If $1$ is sampled then &\\
$1$ will not expire. &\\
That is, for any $\gamma$-th round  &\\
 the value of the counter will be at least $TR$ &\\ \hline
\end{tabular}
\caption{Assumptions}\label{table1}
\end{table}

The assumptions in this section will be removed in Section \ref{sdflklkjsdfjldsjf}.

\subsubsection{The Two-Level Sampling Method for Players}\label{dskfjkjasdskjsdf}

\includegraphics[scale=0.75]{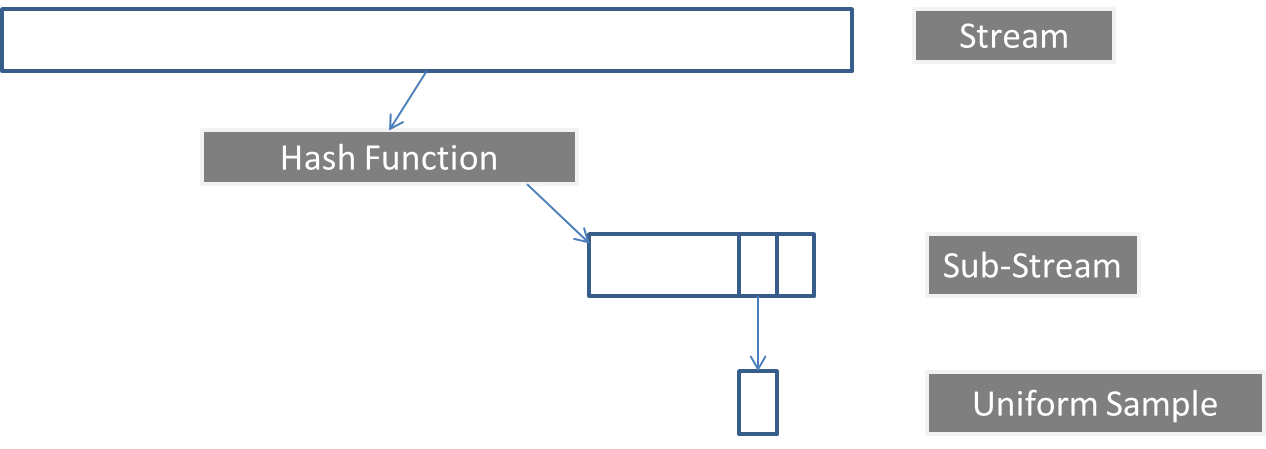}
$$\text{Picture 2: Two-Level Sampling}$$

To save space we need to define a two-level sampling method (see Picture 2) that combines hashing and uniform random sampling from the stream.
Informally, we hash the stream and then sample uniformly from the resulting substream. Formally, the algorithm is as follows:

\bigskip
{\begin{algorithm}[H] \caption{Two-Level Sampling$(Q, \lambda)$}\label{flkdgldfgjgldfjg}

\begin{enumerate}

\item Generate pairwise independent hash function $h: [n]\mapsto \{0,1\}$ such that $P(h(i)=1) = \lambda$.

\item For every element of $p\in Q$: if $h(q)=1$ then add $q$ to a ``pool'' $\Gamma$.

\item In parallel, maintain a uniform random sample $L$ from $\Gamma$ using reservoir sampling \cite{reservoir}.

\item Return $L$.

\end{enumerate}
\end{algorithm}
}

The technical claim that we will need in the analysis of our game is the following.

\begin{lemma}\label{wedfnkjkjkwefkjwef}
Let $Q$ be a stream of length $K$. Let $L$ be the two-level sample from Algorithm \ref{flkdgldfgjgldfjg}.
Let $B_l$ be the event that $l\in \Gamma$ and let $A$ be the event that $L \neq 1$.
If $f_1 \ge \lambda T K$ then
$$
P(A \mid B_1) \le 2T^{-1/2}.
$$
\end{lemma}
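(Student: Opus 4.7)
My plan is to peel off the two sources of randomness one at a time: the hash values $\{h(l)\}_{l \in [n]}$ which determine the pool $\Gamma$, and the reservoir randomness which picks $L$ uniformly from $\Gamma$. Conditioning first on $\Gamma$, the reservoir guarantees $L$ is uniformly distributed on $\Gamma$, so if $Y$ denotes the number of occurrences of $1$ in $\Gamma$, then $P(A \mid \Gamma) = 1 - Y/|\Gamma|$. On the event $B_1 = \{h(1)=1\}$, every occurrence of $1$ in $Q$ is added to $\Gamma$, so $Y = f_1$ deterministically and $|\Gamma| = f_1 + Z$, where $Z := \sum_{l \neq 1} f_l^{(Q)}\mathbf{1}_{h(l)=1}$ counts the non-$1$ contributions to the pool. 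Thus
\[
P(A \mid B_1) \;=\; E\!\left[\frac{Z}{Z+f_1} \,\Big|\, B_1\right].
\]

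Next I would control $E[Z \mid B_1]$ via pairwise independence of $h$. For each $l \neq 1$, pairwise independence gives $P(h(l)=1 \mid h(1)=1) = P(h(l)=1) = \lambda$, so
\[
E[Z \mid B_1] \;=\; \lambda \sum_{l \neq 1} f_l^{(Q)} \;\le\; \lambda K \;\le\; f_1/T,
\]
where the last step invokes the hypothesis $f_1 \ge \lambda T K$.

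The main obstacle is that $Z/(Z+f_1)$ is nonlinear in $Z$, so plugging $E[Z \mid B_1]$ into Markov directly does not yield the correct order. I will handle this with a threshold argument: set $M := f_1/\sqrt{T}$ and split the expectation according to whether $Z \le M$ or $Z > M$. On $\{Z \le M\}$ the integrand is bounded by $M/(M+f_1) \le M/f_1 = T^{-1/2}$. On $\{Z > M\}$ the integrand is trivially $\le 1$, and Markov's inequality combined with the bound on $E[Z \mid B_1]$ gives
\[
P(Z > M \mid B_1) \;\le\; \frac{E[Z\mid B_1]}{M} \;\le\; \frac{f_1/T}{f_1/\sqrt{T}} \;=\; T^{-1/2}.
\]

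Adding the two contributions yields $P(A \mid B_1) \le 2T^{-1/2}$, as required. The two steps that need care are (i) the use of pairwise (rather than full) independence in justifying $E[\mathbf{1}_{h(l)=1} \mid h(1)=1] = \lambda$, and (ii) the choice of the threshold $M$, which is tuned to balance the two contributions; any constant multiple would change only the constant in front of $T^{-1/2}$.
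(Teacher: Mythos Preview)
Your proof is correct and follows essentially the same strategy as the paper's: both introduce $Z = \sum_{l\neq 1} f_l\,\mathbf{1}_{h(l)=1}$, threshold it at a level of order $f_1/\sqrt{T}$ (the paper uses the equivalent cutoff $\lambda T^{1/2}K$), bound the ratio $Z/(Z+f_1)$ by $T^{-1/2}$ on the small-$Z$ side, and apply Markov's inequality together with pairwise independence on the large-$Z$ side. The only cosmetic difference is that you work directly with the conditional expectation $E[\,\cdot \mid B_1]$, whereas the paper bounds $P(A\cap B_1)$ via the auxiliary variable $W = Z\mathbf{1}_{h(1)=1}$ and divides by $P(B_1)=\lambda$ at the end.
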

\begin{proof}
Let $Z = \sum_{l=2}^n f_lX_l$ where $X_l$ is the indicator of the event that $h(l)=1$. Let $U$ be the event that $Z \le \lambda T^{1/2} K$.
It is sufficient to prove two inequalities:
\begin{equation}\label{edfkjkdskvckjdsv}
P(B_1 \cap \bar{U}) \le T^{-1/2}\lambda.
\end{equation}

and

\begin{equation}\label{fvdfgsdfsdfsfsfd}
P(A \cap B_1 \cap U) \le T^{-1/2}\lambda.
\end{equation}
Indeed, if both inequalities are true then
\begin{equation}\label{dsfklmslkdflsdflsd}
P(A \cap B_1) \le P(A \cap B_1 \cap U) + P(B_1 \cap \bar{U}) \le 2T^{-1/2}\lambda,
\end{equation}
and
$$
P(A|B_1) = {P(A\cap B_1)\over P(B_1)} \le {2T^{-1/2}\lambda\over \lambda} = 2T^{-1/2},
$$
which gives us the statement of the lemma.

To show $(\ref{edfkjkdskvckjdsv})$,
define $W = ZX_1$. By pairwise independence $E(W) \le \lambda^2K$. By the Markov inequality $P(W \ge \lambda T^{1/2} K) \le T^{-1/2}\lambda$.
It is straightforward to verify that events $(B_1 \cap \bar{U})$ and $(W > \lambda T^{1/2} K)$ are equivalent. Thus $(\ref{edfkjkdskvckjdsv})$ follows.

To show $(\ref{fvdfgsdfsdfsfsfd})$, observe that
\begin{equation}\label{fdvdksjvksdkdfjdkjfkdfj}
P(A \cap B_1 \cap U) = P(A \cap B_1 \cap (Z \le \lambda T^{1/2} K)) = \sum_{z=0}^{\lfloor \lambda T^{1/2} K \rfloor} P(A \cap B_1 \cap (Z = z)).
\end{equation}
Here the first equality follows from the definition of $U$, the second equality follows from the fact that $Z$ is a random variable with positive integer values.
Therefore the definition of conditional probability, together with $(\ref{fdvdksjvksdkdfjdkjfkdfj})$, implies:
\begin{equation}\label{fdvdksjvksdkdfjdkjfkdfj1}
P(A \cap B_1 \cap U) = \sum_{z=0}^{\lfloor \lambda T^{1/2} K \rfloor} P(A \mid B_1 \cap (Z = z))P(B_1 \cap (Z = z)).
\end{equation}

Fix $z$ such that $0\le z\le \lambda T^{1/2}K$ and consider the following bound on $P(A \mid B_1 \cap (Z = z))$.
If $X_1 = 1$ and $Z = z$ then $L$ is a sample from a bag of $z+f_1$ elements where $f_1$ of elements are equal to $1$.
Therefore:
\begin{equation}\label{fefgfergergerfgerg}
P(A \mid B_1 \cap (Z = z)) = {z\over z+f_1}.
\end{equation}
Further, our choice of $z$ and the bound on $f_1$ that is given in the statement of the lemma imply:
${z\over z+f_1} \le \lambda {T^{1/2}K\over \lambda T K}.$ Combining the last inequality with $(\ref{fefgfergergerfgerg})$ gives
\begin{equation}\label{lfdslflsdlsdlkjd}
P(A \mid B_1 \cap (Z = z)) \le T^{-1/2}.
\end{equation}
If we substitute $(\ref{lfdslflsdlsdlkjd})$ into $(\ref{fdvdksjvksdkdfjdkjfkdfj1})$ we obtain:
$$
P(A \cap B_1 \cap U) \le T^{-1/2}\sum_{z=0}^{\lfloor \lambda T^{1/2} K\rfloor} P(B_1 \cap (Z = z)) \le T^{-1/2}P(B_1) = T^{-1/2}\lambda.
$$
Thus, $(\ref{fvdfgsdfsdfsfsfd})$ is correct.
\end{proof}

In an ($\alpha,\beta$)-game we will use the two-level sampling method to determine the ID number for a single player.
The sampling will depend on an additional parameter $\beta$.
This sampling method will be instrumental in reducing the space complexity. The intuition is explained in Section \ref{dfsdfsefsdfsdf}.

\begin{definition}\label{sdfkljsdfkjksdkjf}
Let $i$ be a row index and $z$ be a player index.
Define $L$ to be a sample from the $i$-th row as follows. Let $g$ be a uniform pairwise independent hash function $g: [n] \mapsto [t_{\alpha}]$.
First\footnote{Note that the sample is defined for all players since $w2^{\beta} = o(t_{\alpha})$ for our parameters.}, define the $i,z$-pool to be:
\begin{equation}\label{dsfsdfsdfsdfsdfsdf}
\Gamma_{i,z} = \{j: 2^{\beta-\Psi}(z-1) \le g(m_{i,j}) < 2^{\beta-\Psi}z\}.
\end{equation}
Second, define $L = m_{i,J}$ where $J$ is a uniform random sample from $\Gamma_{i,z}$ and $m_{i,J}$ is the J-th entry of the i-th column of matrix $M_\alpha$. Note that hash functions for distinct rows are independent.
\end{definition}

\begin{lemma}\label{fkljsdfkjlsdflkjdsf}
Let $i,z$ be fixed. Denote  events
$$A= \{L\neq 1\}, B_l= \{\exists j \in \Gamma_{i,z} \text{ with } m_{i,j}=l\}.$$
If $f_1(\alpha,i) \ge 2^\beta$ then
$$
P(A | B_1) \le 0.02.
$$

\end{lemma}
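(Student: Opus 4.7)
The plan is to reduce this lemma directly to Lemma \ref{wedfnkjkjkwefkjwef} by exhibiting the two-level sampling of Definition \ref{sdfkljsdfkjksdkjf} as an instance of Algorithm \ref{flkdgldfgjgldfjg}. The only conceptual gap is that Algorithm \ref{flkdgldfgjgldfjg} uses a Boolean pairwise independent hash $h$ with $P(h(\cdot)=1)=\lambda$, whereas Definition \ref{sdfkljsdfkjksdkjf} uses a hash $g:[n]\to[t_\alpha]$ together with a ``bin-membership'' test. I would resolve this by defining the induced Boolean hash
\[
h(x) \;=\; \mathbf{1}\bigl[\,2^{\beta-\Psi}(z-1) \le g(x) < 2^{\beta-\Psi}z\,\bigr],
\]
and argue that (i) since $g$ is pairwise independent and uniform on $[t_\alpha]$, $h$ is pairwise independent as a deterministic function of the independent coordinates, and (ii) $P(h(x)=1) = 2^{\beta-\Psi}/t_\alpha =: \lambda$.

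Next I would set $Q$ to be the $i$-th row of $M_\alpha$, so that the stream length is $K = t_\alpha$. With the $h$ above, the pool $\Gamma_{i,z}$ of Definition \ref{sdfkljsdfkjksdkjf} is exactly the pool $\Gamma$ of Algorithm \ref{flkdgldfgjgldfjg}, and the uniform sample $L=m_{i,J}$ is exactly the reservoir-type uniform sample of Algorithm \ref{flkdgldfgjgldfjg}. Hence the events $A = \{L\neq 1\}$ and $B_1 = \{\exists j \in \Gamma_{i,z}: m_{i,j}=1\}$ appearing here coincide with the corresponding events in Lemma \ref{wedfnkjkjkwefkjwef}, and the frequency $f_1$ there is the row-frequency $f_1(\alpha,i)$ here.

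I would then apply Lemma \ref{wedfnkjkjkwefkjwef} with the parameter choice $T = 2^\Psi$. The hypothesis to check is $f_1(\alpha,i) \ge \lambda T K$, and indeed
\[
\lambda T K \;=\; \frac{2^{\beta-\Psi}}{t_\alpha}\cdot 2^{\Psi}\cdot t_\alpha \;=\; 2^\beta,
\]
which is precisely the assumption of the present lemma. Lemma \ref{wedfnkjkjkwefkjwef} then yields $P(A\mid B_1) \le 2T^{-1/2} = 2\cdot 2^{-\Psi/2}$. From the value of $\Psi$ in Table \ref{table2}, with $\mu=2^{-10}$ we have $\Psi \ge \mu^{-6}\cdot 100 = 2^{60}\cdot 100$, so $2\cdot 2^{-\Psi/2}$ is vastly smaller than $0.02$, which finishes the proof.

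There is no real technical obstacle here; the entire content of the lemma is the translation from the ``bin of $g$'' formulation used inside the $(\alpha,\beta)$-game to the ``Boolean hash'' formulation of Lemma \ref{wedfnkjkjkwefkjwef}, together with the verification that our choice $T=2^\Psi$ exactly converts the bound $2^\beta$ on $f_1(\alpha,i)$ into the premise $f_1 \ge \lambda T K$. The only minor bookkeeping concerns divisibility — if $t_\alpha$ is not a multiple of $2^{\beta-\Psi}$, then $P(h(x)=1)$ may differ from $2^{\beta-\Psi}/t_\alpha$ by a factor of at most two, which only changes the final bound by a constant and still leaves it far below $0.02$ given the size of $\Psi$.
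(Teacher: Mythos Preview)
Your proposal is correct and follows essentially the same approach as the paper: both reduce directly to Lemma \ref{wedfnkjkjkwefkjwef} with $\lambda = 2^{\beta-\Psi}/t_\alpha$ and $K=t_\alpha$. The only cosmetic difference is the choice of $T$: the paper takes $T=10^4$ (using $2^\Psi>10^4$ so that $\lambda T K < 2^\beta \le f_1(\alpha,i)$ and obtaining $2T^{-1/2}=0.02$ exactly), whereas you take $T=2^\Psi$, which makes the premise $\lambda TK = 2^\beta$ hold with equality and yields the much smaller bound $2\cdot 2^{-\Psi/2}$.
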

\ifNOTtalk
\begin{proof}
We will apply Lemma \ref{wedfnkjkjkwefkjwef} with $\lambda = {2^{\beta-\Psi}\over t_\alpha}$, $T=10^4$ and $K=t_\alpha$.
Recall that the definition of $\Psi$ in Table \ref{table2} implies that $2^\Psi > 10^4$.
Thus, $f_1(\alpha,i) \ge 2^\beta\ge \lambda T K$. Thus, by Lemma \ref{wedfnkjkjkwefkjwef} $P(A | B_1) \le 2T^{-1/2} = 0.02.$

\end{proof}

\subsection{Correctness of the Algorithm}\label{dsfknskdjfkjsdfjksdfjk}

We prove the correctness of our algorithm by using the following technical lemma.

\subsubsection{The Noisy Lemma}\label{adsfkjsdkjafkjsdfkj}

In the following lemma $A_1,\dots, A_N$ represent ``good'' events. We aim to bound the probability that at least one good event occurs.
Each good event occurs if and only if  a ``basic'' event $C_i$ occurs and several ``noisy'' events $\{{B}_{i,j}\}_{j=1}^L$ do not occur.

\begin{lemma}\label{dfdfljknjlfkjlrfekjf}
Let $A_1,\dots, A_N$ be a set of events where
$$
A_i = C_i \bigcap (\cap_{j=1}^{L} \bar{B}_{i,j})
$$
and where $C_i$ and $B_{i,j}$ are events.
Suppose\footnote{Two natural cases are when the events are independent or when they are disjoint.} that for all $i\neq j\in [N]$
\begin{equation}\label{efjwfjkjdfkfdkj}
P(C_i \cap C_j) \le P(C_i)P(C_j).
\end{equation}
If
\begin{equation}\label{kdfmgvlkdfgkdfglkdf}
\sum_{i=1}^N P(C_i) = a,
\end{equation}
and for any $i\in [N]$
\begin{equation}\label{dfmdflgdflkg}
\sum_{j=1}^L P(B_{i,j}|C_i) \le b
\end{equation}
then
\begin{equation}\label{dfkvldkflkdf}
P(\cup_{i=1}^N A_{i}) \ge a(1-0.5a-b).
\end{equation}

\end{lemma}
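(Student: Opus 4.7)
The plan is to use a truncated inclusion–exclusion (Bonferroni) bound:
\[
P\!\left(\bigcup_{i=1}^N A_i\right) \;\ge\; \sum_{i=1}^N P(A_i) \;-\; \sum_{1\le i<j\le N} P(A_i\cap A_j),
\]
and to estimate each side using the three hypotheses. No clever combinatorics should be required; the result drops out once the two sums are bounded correctly.

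First I would lower-bound a single $P(A_i)$. Since $A_i = C_i\cap\bigl(\bigcap_j \bar B_{i,j}\bigr)$, we have
\[
P(A_i) \;=\; P(C_i) - P\!\left(C_i\cap\bigcup_{j=1}^L B_{i,j}\right)
\;\ge\; P(C_i) - \sum_{j=1}^L P(C_i\cap B_{i,j}).
\]
Factoring $P(C_i\cap B_{i,j}) = P(C_i)\,P(B_{i,j}\mid C_i)$ and applying hypothesis \eqref{dfmdflgdflkg} gives $P(A_i)\ge P(C_i)(1-b)$. Summing over $i$ and using \eqref{kdfmgvlkdfgkdfglkdf} yields $\sum_i P(A_i) \ge a(1-b)$.

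Next I would upper-bound the pairwise intersections. Because $A_i\subseteq C_i$ and $A_j\subseteq C_j$, we have $P(A_i\cap A_j)\le P(C_i\cap C_j)$, and the negative-correlation hypothesis \eqref{efjwfjkjdfkfdkj} gives $P(C_i\cap C_j)\le P(C_i)P(C_j)$. Therefore
\[
\sum_{1\le i<j\le N} P(A_i\cap A_j) \;\le\; \sum_{1\le i<j\le N} P(C_i)P(C_j) \;\le\; \tfrac12\Bigl(\sum_{i=1}^N P(C_i)\Bigr)^{\!2} \;=\; \tfrac{a^2}{2}.
\]
Plugging both bounds into the Bonferroni inequality yields
\[
P\!\left(\bigcup_{i=1}^N A_i\right) \;\ge\; a(1-b) - \tfrac{a^2}{2} \;=\; a\bigl(1 - 0.5a - b\bigr),
\]
which is precisely \eqref{dfkvldkflkdf}.

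There is no real obstacle here; the only thing to be careful about is the direction of the inequality in \eqref{efjwfjkjdfkfdkj} (it must be an upper bound, not a lower bound, for the Bonferroni correction term to be controlled), and the observation that $A_i\cap A_j\subseteq C_i\cap C_j$ lets us discard the complicated "noisy" events when estimating pairwise overlaps. Both cases mentioned in the footnote — independence and disjointness of the $C_i$'s — satisfy \eqref{efjwfjkjdfkfdkj}, so the lemma will apply in the intended use cases without further work.
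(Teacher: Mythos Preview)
Your proof is correct and follows essentially the same approach as the paper: Bonferroni's inequality, then $P(A_i)\ge(1-b)P(C_i)$ from the conditional bound on the $B_{i,j}$'s, and $\sum_{i<j}P(A_i\cap A_j)\le \tfrac12 a^2$ via $A_i\subseteq C_i$ and the negative-correlation hypothesis. The only cosmetic difference is that the paper writes $P(A_i)=P(\cap_j \bar B_{i,j}\mid C_i)P(C_i)$ whereas you subtract $P(C_i\cap\bigcup_j B_{i,j})$ from $P(C_i)$; these are the same computation.
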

\ifNOTtalk
\begin{proof}
The Bonferroni inequality (see, e.g., $(2.1)$ in \cite{naiman1992inclusion}) implies:
\begin{equation}\label{sddjsflsdfjklsdfjsd}
P(\cup_{i=1}^N A_{i}) \ge \sum_{i=1}^N P(A_{i}) - \sum_{1\le i< l \le N} P(A_{i} \cap A_l).
\end{equation}

Let us bound the second sum of the right-hand side in $(\ref{sddjsflsdfjklsdfjsd})$.
Since for all $i\in [N]$ we have $A_i \subseteq C_i$ and by $(\ref{efjwfjkjdfkfdkj})$:
\begin{equation}\label{dsflksdlkfsdlfksldk}
P(A_{i} \cap A_l) \le P(C_{i} \cap C_l) \le P(C_{i})P(C_l).
\end{equation}
Thus,
\begin{equation}\label{fsdlvdlfsdlkfj1}
\sum_{1\le i< l \le N} P(A_{i} \cap A_l) \le \sum_{1\le i< l \le N} P(C_{i})P(C_l) \le 0.5\left(\sum_{i=1}^N P(C_{i})\right)^2 = 0.5a^2.
\end{equation}
Here the first inequality follows from the Bonferroni inequality, the second inequality follows from direct computations and the equality follows from $(\ref{kdfmgvlkdfgkdfglkdf})$.
In summary, we obtain the bound on the second sum of the right-hand side in $(\ref{sddjsflsdfjklsdfjsd})$:
\begin{equation}\label{fsdlvdlfsdlkfj}
\sum_{1\le i< l \le N} P(A_{i} \cap A_l) \le 0.5a^2.
\end{equation}

Let us bound the first sum of the right-hand side in $(\ref{sddjsflsdfjklsdfjsd})$. Fix $i\in [N]$ and observe that
\begin{equation}\label{sdfksdflksdflksdf}
P(\cup_{j=1}^{L} B_{i,j} | C_i) \le \sum_{j=1}^L P(B_{i,j} | C_i) \le b.
\end{equation}
Here the first inequality follows from union bound
and the second follows from $(\ref{dfmdflgdflkg})$.
Thus
\begin{equation}\label{kjsdcvksdskddskf}
P(\cap_{j=1}^{L} \bar{B}_{i,j}| C_i) = 1-P(\cup_{j=1}^{L} {B}_{i,j} | C_i) \ge 1-b.
\end{equation}
Therefore
\begin{equation}\label{dfkgmdlfkgfdlkgdfkl}
 P(A_{i}) = P(C_i \bigcap (\cap_{j=1}^{L} \bar{B}_{i,j})) = P(\cap_{j=1}^{L} \bar{B}_{i,j}| C_i)P(C_i) \ge (1-b)P(C_i),
\end{equation}
where the last inequality follows from $(\ref{kjsdcvksdskddskf})$.
Summing over all $i\in [N]$ we obtain:
\begin{equation}\label{dfkgmdlfkgfdlkgdfkl1}
\sum_{i=1}^N P(A_{i}) \ge (1-b)\sum_{i=1}^N P(C_{i}) = (1-b)a,
\end{equation}
where the equality follows from $(\ref{kdfmgvlkdfgkdfglkdf})$. In summary, we obtain the bound on the first sum of the right-hand side in $(\ref{sddjsflsdfjklsdfjsd})$:
\begin{equation}\label{dfkgmdlfkgfdlkgdfkl2}
\sum_{i=1}^N P(A_{i}) \ge (1-b)a,
\end{equation}

Combining $(\ref{sddjsflsdfjklsdfjsd})$,$(\ref{dfkgmdlfkgfdlkgdfkl2})$ and $(\ref{fsdlvdlfsdlkfj})$ we obtain a bound that is equivalent to $(\ref{dfkvldkflkdf})$:
\begin{equation}\label{sdlsdfjsdfkjsdfkljsdfkljdsfkjlsdf}
P(\cup_{i=1}^N A_{i}) \ge (1-b)a - 0.5a^2.
\end{equation}
\end{proof}
\fi

\begin{corollary}\label{egtrweryrthrthhrtdhrtht}
Consider the definitions and assumptions of Lemma \ref{dfdfljknjlfkjlrfekjf} and assume also that $b<0.1$. Then
\begin{equation}\label{fdlkvlkjdfvlkdfgvlkjdf}
P(\cup_{i=1}^N A_{i}) \ge \min \{ 0.8a, 0.04\}.
\end{equation}
\end{corollary}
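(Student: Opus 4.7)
The plan is to invoke Lemma \ref{dfdfljknjlfkjlrfekjf} either on the full collection or on a cleverly chosen sub-collection, since the lower bound $a(1-0.5a-b)$ is strong for small $a$ but degrades (and eventually becomes negative) for large $a$. The key observation is that the hypotheses (\ref{efjwfjkjdfkfdkj}) and (\ref{dfmdflgdflkg}) are preserved under taking any subset $S \subseteq [N]$, so I may freely apply the lemma with $a$ replaced by $a_S := \sum_{i \in S} P(C_i)$ and the same $b$.

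I would split into cases on the size of the individual probabilities $P(C_i)$. First, if some $P(C_{i_0}) \ge 0.1$, then the bound $P(A_i) \ge (1-b)P(C_i)$ established in the course of proving Lemma \ref{dfdfljknjlfkjlrfekjf} (see (\ref{dfkgmdlfkgfdlkgdfkl})) immediately yields $P(\cup_i A_i) \ge P(A_{i_0}) \ge 0.9 \cdot 0.1 = 0.09 \ge 0.04$, which dominates $\min\{0.8a, 0.04\}$.

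Second, suppose every $P(C_i) < 0.1$. If moreover $a \le 0.2$, then applying the lemma directly gives $1 - 0.5a - b \ge 1 - 0.1 - 0.1 = 0.8$, so $P(\cup_i A_i) \ge 0.8a \ge \min\{0.8a, 0.04\}$. If instead $a > 0.2$, I greedily build a subset $S$: add indices one at a time until the running partial sum first exceeds $0.1$. Because each increment is strictly less than $0.1$, the resulting $a_S$ lies in $[0.1, 0.2)$; this construction succeeds precisely because $a > 0.2$. Applying Lemma \ref{dfdfljknjlfkjlrfekjf} to the subset $S$ gives $P(\cup_i A_i) \ge P(\cup_{i \in S} A_i) \ge a_S(1 - 0.5 a_S - b) \ge 0.1 \cdot 0.8 = 0.08 \ge 0.04$.

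The only mildly delicate step is the greedy subset construction for the large-$a$ regime, which is the main obstacle since the raw quadratic bound is useless there; the case split on whether some $P(C_i)$ is already $\ge 0.1$ handles the degenerate situation where no subset can land in the target range $[0.1, 0.2)$. Assembling the three cases gives the claimed bound $P(\cup_i A_i) \ge \min\{0.8a, 0.04\}$.
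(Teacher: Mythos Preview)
Your proof is correct and follows essentially the same approach as the paper: a case split combined with the observation that the hypotheses of Lemma~\ref{dfdfljknjlfkjlrfekjf} are inherited by any sub-collection, so one may apply the lemma to a subset whose total mass $a_S$ lies in a controlled range. The only differences are cosmetic---the paper splits first on $a\le 0.1$ versus $a>0.1$ and uses thresholds $0.05$ and $[0.05,0.1]$ where you use $0.1$ and $[0.1,0.2)$---and your organization (split first on whether some $P(C_i)\ge 0.1$) is arguably cleaner.
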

\ifNOTtalk
\begin{proof}
If $a\le 0.1$ then $(\ref{dfkvldkflkdf})$ implies
\begin{equation}\label{equation1023}
P(\cup_{i=1}^N A_{i}) > 0.8a.
\end{equation}
Suppose that\footnote{In general it is possible that $a \ge 1$.} $a\ge 0.1$. If there exists $i$ such that $P(C_i)\ge 0.05$ then
by using $(\ref{dfkgmdlfkgfdlkgdfkl})$ we obtain
$$
P(A_i) \ge (1-b)P(C_i) \ge 0.9*0.05 = 0.045.
$$
Lastly, suppose that for all $i$ it is true that $P(C_i) < 0.05$. In this case there exists a subset $I \subseteq [N]$ such that
$$
0.05 \le \sum_{i\in I}P(C_i) \le 0.1.
$$
Denote $a' = \sum_{i\in I}P(C_i)$ and apply Lemma \ref{dfdfljknjlfkjlrfekjf} to the set of events defined by $I$.
By $(\ref{equation1023})$ we have that
$$
P(\cup_{i=1}^N A_{i}) \ge P(\cup_{i\in I} A_{i}) \ge 0.8a' \ge 0.04.
$$
\end{proof}
\fi

\subsubsection{Conditions for Winning the Game}\label{dslkjkasdjksjdfkjsdg}

In this section we will state and prove the sufficient conditions for winning the game.
The main idea is to identify a sequence of events for which Corollary \ref{egtrweryrthrthhrtdhrtht} from Section \ref{adsfkjsdkjafkjsdfkj} is applicable. Here, the two-level sampling from Section \ref{dskfjkjasdskjsdf} will be instrumental.
Specifically, we will establish that if the frequency is sufficiently large and if the parameters of the game, $\alpha$ and $\beta$, are chosen carefully then
the probability to win the game is bounded from below (up to a constant factor) by the probability to sample the heavy element into
one of the hashing pools $\Gamma_{i,z}$ from Definition \ref{sdfkljsdfkjksdkjf}. As a result, the probability of success becomes a constant for the right parameters. In Section \ref{sdfgsdgsdgsdg} of the appendix we show the existence of a pair $\alpha, \beta$ that will satisfy the conditions of this section. As a result by the exhaustive search in Algorithm \ref{sdjnfsdfsjdfskjd} it follows that $1$ will be sampled and will become a winner in its team for at least one pair $\alpha,\beta$ and, by Fact \ref{asljffkjsadfkjds}, will become a winner of all games.

Recall that the definitions of the variables used below are given in Tables \ref{table1} and \ref{table2}. In this section we assume that $\alpha$ and $\beta$ are fixed and $\gamma$ is a parameter.

\begin{definition}\label{ewfwefwefwef}
Let $i\in [r_\alpha]$ be the $i$-th row of $M_\alpha$. We say that $i$ is a $(\lambda, \phi, \tau)$-dense row if elements from $T_\lambda$ with high row frequency compose at least a $\phi$ fraction of all elements from this row:
\begin{equation}\label{sdfljdsfljsdff}
| \{  l: f_{l}(\alpha,i) > \tau, l \in T_{\lambda} \}| > t_{\alpha} \phi.
\end{equation}
\end{definition}

\begin{definition}\label{dsfsefsdfsdfdsfdsfdsfdsf}
Let $i\in [r_\alpha]$ be an index of a row in $M_\alpha$. We say that $i$ is $\beta$-bad if there exists $\gamma \ge \beta$ such that $i$ is $(TR, \xi, \beta-7)$-dense\footnote{Recall that $TR =$\TR and $\xi=$\xixi. See Table \ref{table3}.}.
Also, $i$ is a $\beta$-great if it is not $\beta$-bad.
When the values of $\alpha$ and $\beta$ are clear from the context we simply say ``bad'' or ``great'' row.
\end{definition}


\begin{lemma}\label{sdfljsdldjfljeflkjfd}
Consider an $(\alpha,\beta)$-game. If there exists $X$ great rows that are from $S_{\beta}(\alpha)$ and an absolute constant $BC$ such that
\begin{equation}\label{absBC}
\frac{wX2^\beta }{t_{\alpha}} \ge BC
\end{equation}
then there exists another absolute constant $GC$ such that the probability that the heavy element will beat all teammates in at least one row
is at least $GC$.
\end{lemma}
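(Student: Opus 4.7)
The plan is to apply Corollary \ref{egtrweryrthrthhrtdhrtht} with good events indexed by pairs $(i,z)$, where $i$ ranges over the $X$ great rows in $S_\beta(\alpha)$ and $z\in[w]$ ranges over the players in row $i$. The basic event $C_{i,z}$ will be ``player $z$ of row $i$ samples the heavy element $1$ under the two-level scheme of Definition \ref{sdfkljsdfkjksdkjf}'', and the noisy events $\{B_{i,z,\gamma}\}_{\gamma\ge\beta}$ will be ``in round $\gamma$, player $z$ is beaten by some member of its size-$3^\gamma$ group''. Then $A_{i,z}=C_{i,z}\cap\bigl(\bigcap_\gamma \bar B_{i,z,\gamma}\bigr)$ is precisely the event that the heavy element, sampled by $z$, defeats every teammate in row $i$, so the lemma reduces to lower-bounding $P(\bigcup_{i,z}A_{i,z})$.

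First I would bound $a:=\sum_{i,z}P(C_{i,z})$. Since $g:[n]\to[t_\alpha]$ is uniform, $P(1\in\Gamma_{i,z})=2^{\beta-\Psi}/t_\alpha$, and since $i\in S_\beta(\alpha)$ gives $f_1(\alpha,i)\ge 2^{\beta-1}$, Lemma \ref{fkljsdfkjlsdflkjdsf} (re-running Lemma \ref{wedfnkjkjkwefkjwef} with $T=\Theta(2^\Psi)$) yields $P(L=1\mid 1\in\Gamma_{i,z})\ge 0.98$. Hence $P(C_{i,z})=\Omega(2^{\beta-\Psi}/t_\alpha)$ and, by hypothesis \eqref{absBC},
\begin{equation*}
a \;=\; \Omega\!\left(\frac{wX\cdot 2^\beta}{t_\alpha\cdot 2^\Psi}\right) \;\ge\; \Omega(BC\cdot 2^{-\Psi}),
\end{equation*}
which can be made any desired positive constant by choosing $BC$ large (note $\Psi$ is a fixed absolute constant). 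For the pairwise condition \eqref{efjwfjkjdfkfdkj}: across rows the randomness is independent by Definition \ref{sdfkljsdfkjksdkjf}; within the same row, the pools $\Gamma_{i,z_1}$ and $\Gamma_{i,z_2}$ use disjoint hash-value intervals of the same $g$, so $C_{i,z_1}\cap C_{i,z_2}=\emptyset$. In both cases $P(C_{i_1,z_1}\cap C_{i_2,z_2})\le P(C_{i_1,z_1})P(C_{i_2,z_2})$.

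The main obstacle is bounding $b:=\sum_{\gamma\ge\beta}P(B_{i,z,\gamma}\mid C_{i,z})$. Conditional on $C_{i,z}$, the last assumption of Table \ref{table1} guarantees that player $z$'s counter never expires; thus $z$ can lose in round $\gamma$ only to a teammate whose sampled element is ``dangerous'', i.e.\ has global frequency above $TR$ and row frequency large enough to produce a competitive counter. Greatness of row $i$ (Definition \ref{dsfsefsdfsdfdsfdsfdsfdsf}) caps the number of dangerous candidates in row $i$ by $t_\alpha\xi$, so any single teammate of $z$ samples a dangerous element with probability $O(\xi)$; a union bound over the $3^\gamma$ competitors in $z$'s group yields
\begin{equation*}
P(B_{i,z,\gamma}\mid C_{i,z}) \;=\; O(3^\gamma\xi) \;=\; O(2^{-\beta-\mu\gamma}).
\end{equation*}
Summing this geometric tail over the $O(\log n)$ relevant values of $\gamma$ (Note \ref{sdfdfsdffsdf}) gives $b=O(2^{-(1+\mu)\beta})<0.1$ once $BC$ is chosen large enough to absorb the constants. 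Corollary \ref{egtrweryrthrthhrtdhrtht} then yields $P(\bigcup_{i,z}A_{i,z})\ge\min\{0.8a,\,0.04\}=:GC$, the absolute constant promised by the lemma. The technically delicate piece is this per-round union bound: tracking the interplay between the game's threshold $TR$, the greatness parameter $\xi$, and the heavy element's counter growth is where most of the real work lies.
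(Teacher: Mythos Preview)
Your approach is the same as the paper's---apply Corollary \ref{egtrweryrthrthhrtdhrtht} with basic events ``$1$ is sampled by player $z$ in a great row $i$'' and noisy events ``a competitor survives round $\gamma$''---but two details need fixing.

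First, you omit a noisy event. The game description requires each player's \emph{initial} counter (after reading its own row) to be at least $IC=2^{\beta-7}$; the non-expiration assumption of Table~\ref{table1} covers only the per-round threshold $TR$, not this separate check. If $1$ is sampled late in the row via reservoir sampling, the initial counter can be below $IC$ and player $z$ is eliminated before any round. The paper handles this with an extra event $U_{i,z}$ and the easy bound $P(U_{i,z}\mid C_{i,z})\le 2^{-6}$ (since the reservoir-sampled position is uniform over the $\ge 2^{\beta-1}$ appearances of $1$). Your $A_{i,z}$ as written does not imply that player $z$ wins its team.

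Second, your per-teammate bound is off and the repair via $BC$ does not work. From greatness you get at most $t_\alpha\xi$ dangerous \emph{distinct} elements $l$; each lands in a given teammate's pool with probability $2^{\beta-\Psi}/t_\alpha$, so the per-teammate probability is $\le \xi\cdot 2^{\beta-\Psi}$, not $O(\xi)$. Union-bounding over the $3^\gamma$ competitors then gives $3^\gamma\cdot\xi\cdot 2^{\beta-\Psi}=2^{-\Psi-\mu\gamma}$, and the sum over $\gamma\ge\beta$ is $\le 2^{-\Psi}\cdot\tfrac{2^\mu}{2^\mu-1}\le 0.01$ by the definition of $\Psi$---with no dependence on $\beta$ at all. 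Your route instead produces $b=O(2^{-(1+\mu)\beta})$, which is not $<0.1$ for small $\beta$; and enlarging the hypothesis constant $BC$ cannot help, since $BC$ has no bearing on $b$. (The paper also keeps the ``$1$ is drawn from its pool'' step as a separate noisy event $H_{i,z}$ rather than folding it into $C_{i,z}$; your folding is fine, but it means you should quote the exact $P(C_{i,z})=2^{\beta-\Psi}/t_\alpha$ bound for the pool event and treat the $0.98$ factor as part of the noisy side, which keeps the arithmetic for $a$ clean.)
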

\ifNOTtalk
\begin{proof}
We will prove the lemma by applying Corollary \ref{egtrweryrthrthhrtdhrtht}.
Specifically, let $i\in S_{\beta}(\alpha)$ also be an index of a great row and let $z$ be a player's index such that $z\in [w]$.
Denote by
$
C_{i,z}
$
the event that $1$ will be sampled by the hash function into the $z$-th pool\footnote{We refer the reader to explanation of the two-level sampling method in Section \ref{dskfjkjasdskjsdf}.} $\Gamma_{i,z}$.
Specifically, following $(\ref{dsfsdfsdfsdfsdfsdf})$ in Definition \ref{sdfkljsdfkjksdkjf}:
$$
C_{i,z} = \{(z-1)2^{\beta-\Psi} \le g(1) < z2^{\beta-\Psi}\}.
$$
Denote by
$
H_{i,z}
$
the event that $1$ is not sampled from the pool $\Gamma_{i,z}$ to become the sample\footnote{Recall that this means that the uniform random sample from $\Gamma_{i,z}$ is not $1$.} of the $z$-th player in the $i$-th row.
Fix $\gamma\ge \beta$. By the description of the $(\alpha,\beta)$-game in Section \ref{F}, there exist at most $3^\gamma$ players that can compete with the $i$-th player in the $\gamma$-th round.
Denote by
$
B_{i,z,\gamma}
$
the event that at least one such player samples one of the elements from $T_{TR}$.
Let $U_{i,z}$ be the event that
the initial counter of the sample has value smaller than $IC = 2^{\beta-7}$.
Let
$$
A_{i,z} = C_{i,z} \bigcap \bar{H}_{i,z} \bigcap \bar{U}_{i,z}  \bigcap ( \cap_{\gamma\ge \beta}\bar{B}_{i,z,\gamma}).
$$

Observe that if at least one of the $A_{i,z}$ is true then a player that samples $1$ will beat all other players on the same team.
This statement follows from the description of the game in Section \ref{F}. Indeed, by Assumption \ref{wdkjwefkjwe} the counter will never go below the threshold $TR$ and the player will never expire. Also, event $\bar{U}$ implies that the player will not be discarded right away. Event $\bar{H}_{i,z}$ implies that $1$ will be sampled from the pool and will become the sample of the $z$-th player.
Event $\bar{B}_{i,z,\gamma}$ implies that no one will compete with the $z$-th player in the $\gamma$-th round.
Thus, $A_{i,z}$ implies that $1$ will be the winner in its team.
Thus, our goal is to obtain a constant lower bound on the probability that at least one ``good'' event happens, $P(\cup_{i,z} A_{i,z})$.

Let us now show that the other events satisfy the premises of Corollary \ref{egtrweryrthrthhrtdhrtht}.
First, observe that $(\ref{efjwfjkjdfkfdkj})$ is correct. Indeed, if $(i,z)\neq (i',z')$ then
$$
P(C_{i,z},C_{i',z'}) \le P(C_{i,z})P(C_{i',z'}).
$$
This is because if $i\neq i'$ then the events are independent, and if $i=i', z\neq z'$ then the events are mutually exclusive.

Second, let us bound the conditional probabilities.
For all events $U$ we have:
$$
P(U_{i,z} \mid C_{i,z}) \le 0.02.
$$
This is because all appearances of $1$ have the same chances to be sampled. There are at least $2^{\beta-1}$ appearances
of $1$ since $i\in S_{\beta}(\alpha)$. Thus, the probability that the counter is smaller than $2^{\beta-7}$ is at most ${1\over 2^6} \le 0.02$.

For $B$'s we have the following:
Since $i$ is a great row, the number of elements that will not become inactive at the $\gamma$-th round is at most
$t_{\alpha}\over 3^{\gamma}2^{\beta+\mu\gamma}$. For any fixed element, the probability to be sampled is at most $2^{\beta-\Psi} \over t_{\alpha} $
by Definition \ref{sdfkljsdfkjksdkjf}.
Therefore by union bound, the probability that any of the $3^{\gamma}$ players will survive to compete with  the player that samples the heavy element, $1$, is at most
$$
3^\gamma  {t_{\alpha} \over 3^{\gamma}2^{\beta+\mu\gamma}}  {2^{\beta-\Psi} \over t_{\alpha} } \le {1\over 2^{\Psi}}{1 \over 2^{\mu\gamma}}.
$$
Then, summing up over all $\gamma \ge \beta$ we obtain:
$$
\sum_{\gamma\ge \beta} P(B_{i,z,\gamma}| C_{i,z}) \le {1\over 2^{\Psi}}{2^{\mu}\over {2^{\mu}-1}} \le 0.01.
$$

Here, the first inequality follows from direct computation and the second inequality follows from substitution of $\mu$ and $\Psi$.
For $H$'s we have the following:
by Lemma \ref{fkljsdfkjlsdflkjdsf} we have that
$$
P(H_{i,z}\mid C_{i,z}) \le 0.02.
$$
Thus
$$
P(U_{i,z} \mid C_{i,z}) +P(H_{i,z}\mid C_{i,z}) + \sum_{\gamma\ge \beta} P(B_{i,z,\gamma}| C_{i,z}) < 0.1,
$$
and therefore the conditions of Corollary \ref{egtrweryrthrthhrtdhrtht} are satisfied.

It remains to show that the sum of the probabilities of the events $C_{i,z}$ is bounded from below by a constant.
Let $I$ be the set of all great rows in $S_{\beta}(\alpha)$. Note that by definition $|I| = X$, and also $P(C_i,z) = {2^{\beta-\Psi}\over t_{\alpha}}$ by definition of our sampling method from \ref{dskfjkjasdskjsdf}. Indeed,

\begin{equation}
\sum_{i\in I,z \in [w]} P(C_{i,z}) = wX{2^{\beta-\Psi}\over t_{\alpha}} \ge {BC\over 2^{\Psi}}.
\end{equation}

The last inequality follows from equation \ref{absBC}. Therefore, we apply Corollary \ref{egtrweryrthrthhrtdhrtht} and obtain the result for $GC = \min\{0.04,{0.8BC\over 2^{\Psi}}\}$.

\end{proof}
\fi


\subsubsection{Winning the $(\alpha,\beta)$-game}\label{dfjksdfkjsdfkjsdfkjsfd}
To apply Lemma \ref{sdfljsdldjfljeflkjfd} we have to show that there exists a pair $\alpha, \beta$ such that $S_\beta(\alpha)$ has sufficiently many great rows. We show the existence of such a pair by the following lemma.

\newcommand{\existenceLemma}{
Let $k\ge 5$. There exists a pair $\alpha,\beta$ such that the following is true. Denote by $X$ the number of great rows in $S_{\beta}(\alpha)$. Then
$$
\frac{wX2^\beta }{t_{\alpha}} \ge 1.
$$
}

\begin{lemma}\label{sdfsdfsdfsdfnsdvnsdvmnsdmnv}
\existenceLemma
\end{lemma}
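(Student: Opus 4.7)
The target inequality, after substituting the definitions of $w$ and $t_\alpha$ from Table \ref{table3}, is equivalent to $X \ge 2^{-(1-\mu)\beta - \alpha} n^{1/k}$. The plan is to produce such an $(\alpha,\beta)$ pair from the search ranges of Algorithm \ref{sdjnfsdfsjdfskjd} by a two-ingredient argument: first lower-bound $|S_\beta(\alpha)|$ by pigeonhole on the row-frequency distribution of the heavy element, then upper-bound the number of bad rows by a moment inequality on $G_k$, and finally verify that the gap between the two bounds exceeds the target.

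For the pigeonhole step, note the identity $f_1 = \sum_i f_1(\alpha, i) = \sum_u \sum_{i \in S_u(\alpha)} f_1(\alpha,i)$, which, by the definition of $S_u(\alpha)$, pinches $\sum_u |S_u(\alpha)|\, 2^u$ into $[f_1, 2f_1)$. Since $u$ ranges only over $O(\log t_\alpha)$ values (and $\log t_\alpha = O(\log n)$ at worst), there exists some index $u=\beta$ with $|S_\beta(\alpha)|\, 2^\beta \ge \Omega(f_1 / \log t_\alpha)$. Combining this with the assumption $f_1 \ge C_2\, 2^{\eta + 2\Psi + 1} n^{1/k}$ from Table \ref{table1} furnishes the desired lower bound on $|S_\beta(\alpha)|$ in terms of $n^{1/k}$ and $2^{-\beta}$.

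For the bad-rows step, by Definition \ref{dsfsefsdfsdfdsfdsfdsfdsf}, a row is bad when, for some $\gamma \ge \beta$, it contains more than $t_\alpha \xi$ distinct elements $l \in T_{TR}$ with $f_l(\alpha,i) > 2^{\beta-7}$. Double-counting (row, element) incidences and using $\sum_{l \in T_{TR}} f_l \le G_k / TR^{k-1}$ (since $f_l > TR$ for each $l \in T_{TR}$), the number of bad-at-$\gamma$ rows is at most
\[
\frac{G_k}{t_\alpha \cdot \xi \cdot 2^{\beta - 7} \cdot TR^{k-1}}.
\]
Substituting $TR = 2^{\gamma-\alpha+\eta-1}$, $\xi = 3^{-\gamma} 2^{-\beta - \mu\gamma}$, $t_\alpha = 2^{-\alpha} n^{1-1/k}$, and the bound $G_k \le 2^{\eta k} n$ from Table \ref{table1}, the exponent in $2$ simplifies so that, for $k \ge 5$, the factor in $\gamma$ is $3^\gamma \cdot 2^{(\mu - k + 1)\gamma}$; since $\log_2 3 + \mu - k + 1 < 0$ this is a strictly decreasing geometric sequence, so summing over $\gamma \ge \beta$ only costs a constant factor and preserves the $\gamma = \beta$ leading term. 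The total is of order $2^{\eta + k\alpha} \cdot 2^{-(k - 1 - \mu - \log_2 3)\beta} \cdot n^{1/k}$.

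Finally, the case analysis matches the pigeonhole-chosen $\beta'$ against the two subloops of Algorithm \ref{sdjnfsdfsjdfskjd}. If $\beta' \le 20\eta$, we select $(\alpha, \beta) = (-0.5\eta, \beta')$ from the first subloop, whereupon $k\alpha$ becomes strongly negative and the bad-rows bound is trivially dominated. If $\beta' > 20\eta$, we select $\alpha = \beta'/5$ from the second subloop; the inequality $\beta'(k - 2 - \log_2 3) \ge \eta + (k+1)\alpha$ needed to make the bad bound beat the target reduces, in this regime, to $\beta' \gtrsim \eta$, which is guaranteed by $\beta' > 20 \eta$. In each case the lower bound on $|S_\beta(\alpha)|$ exceeds the bad-rows bound plus the required target $2^{-(1-\mu)\beta - \alpha} n^{1/k}$, yielding $X \ge 2^{-(1-\mu)\beta - \alpha} n^{1/k}$. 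The main obstacle is the simultaneous constraint that (a) the pair $(\alpha, \beta)$ must fall in one of the discrete ranges of Algorithm \ref{sdjnfsdfsjdfskjd}, (b) the $\Psi$-slack in the lower bound on $f_1$ has to absorb the $\log t_\alpha$ pigeonhole loss, and (c) the exponent $k + 1 - \mu - \log_2 3$ governing the geometric decay of bad rows must remain bounded away from zero, which is precisely why the assumption $k \ge 5$ appears in the statement.
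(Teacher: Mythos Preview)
Your overall strategy---pigeonhole on the row-frequency distribution of the heavy element to get a lower bound on $|S_\beta(\alpha)|$, then a moment bound on $G_k$ to control the number of bad rows---matches the paper's Appendix~A. But the pigeonhole step, as you have written it, fails quantitatively and cannot be repaired with the slack you cite.

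You assert that over the $O(\log t_\alpha)$ dyadic scales some $\beta$ satisfies $|S_\beta(\alpha)|\,2^\beta \ge \Omega(f_1/\log t_\alpha)$, and then say the $2^{2\Psi}$ slack in the hypothesis $f_1 \ge C_2 2^{\eta+2\Psi+1} n^{1/k}$ will absorb the $\log t_\alpha$ loss. But $\Psi$ is an \emph{absolute constant} (Table~\ref{table2}), while $\log t_\alpha = \Theta(\log n)$; a constant cannot absorb a $\log n$. Concretely, with your bound one gets $wX2^\beta/t_\alpha \gtrsim 2^{\eta+2\Psi+\alpha-\mu\beta}/\log n$, and for small $\eta$ (e.g.\ $\eta=O(1)$) this is $o(1)$. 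The paper avoids this by using a \emph{weighted} pigeonhole (Lemma~\ref{cdvnjlcvljcx} with $\rho_u=1/u^2$), so that the loss at the chosen scale $u$ is $O(u^2)$ rather than the number of scales; since the argument also forces $u>1.5\eta+\Psi$ (by first discarding rows with $f_1(-0.5\eta,i)\le 2^{1.5\eta+\Psi}$), Fact~\ref{sdfjsdkjfkjsdf} then absorbs $u^2$ into $2^{0.5\mu\beta}$.

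There is a second structural gap. Your pigeonhole is run at some unspecified $\alpha$, after which you ``select'' $\alpha$ based on the resulting $\beta'$; this is circular. The paper runs the pigeonhole once at $\alpha=-0.5\eta$ to produce $u$, and if $u>20\eta$ it then \emph{refines} the single row class $S_u(-0.5\eta)$ inside the finer matrix $M_{u/5}$ (Corollary~\ref{dflkmgvldsfgldsflksdflk} and Fact~\ref{sdfljskdjfkjsdf}), showing that each such row splits into rows of $M_{u/5}$ whose heavy-element counts land in the window $\beta\in[0.8u-0.5\eta-2,\,u]$. That refinement step---not a second pigeonhole---is what produces the $(\alpha,\beta)$ pair in the second subloop, and your sketch omits it entirely.
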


The proof of Lemma \ref{sdfsdfsdfsdfnsdvnsdvmnsdmnv} can be found in Appendix \ref{bla}.
Now we are ready to prove the main result of section \ref{sdkjnfkasddksak}.
\begin{theorem}\label{sdfsdfsdfsdfsaldkflkaslkflsda}
If the assumptions in Table \ref{table1} are true then there exists a pair $(\alpha, \beta)$ such that an
$(\alpha,\beta)$-game will return $1$ with probability $0.9$. Furthermore, the player that samples $1$
will beat all winners (different from $1$) of all other $(\alpha',\beta')$-games.
The algorithm works in one pass.
\end{theorem}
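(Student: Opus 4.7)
The plan is to stitch together the three building blocks already developed, namely the existence Lemma \ref{sdfsdfsdfsdfnsdvnsdvmnsdmnv}, the ``team-winning'' Lemma \ref{sdfljsdldjfljeflkjfd}, and Fact \ref{asljffkjsadfkjds}, and then apply a cheap parallel-repetition step to amplify the constant success probability up to $0.9$.

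First I would invoke Lemma \ref{sdfsdfsdfsdfnsdvnsdvmnsdmnv} to fix a pair $(\alpha^{*},\beta^{*})$ (within the exhaustive ranges considered in Algorithm \ref{sdjnfsdfsjdfskjd}) for which the number $X$ of $\beta^{*}$-great rows lying in $S_{\beta^{*}}(\alpha^{*})$ satisfies $wX2^{\beta^{*}}/t_{\alpha^{*}} \ge 1 \ge BC$. This puts us in the hypothesis of Lemma \ref{sdfljsdldjfljeflkjfd}, which yields an absolute constant $GC>0$ such that, in a single $(\alpha^{*},\beta^{*})$-game, with probability at least $GC$ the heavy element $1$ survives all collisions and beats every teammate in at least one row of $M_{\alpha^{*}}$. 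By construction and Assumption (Table \ref{table1}), once $1$ wins its team in some row, its counter will never fall below the threshold, so by the end of the game its counter is at least $0.5f_{1}$, and by Fact \ref{asljffkjsadfkjds} this in particular means $1$ is the global winner among all $(\alpha,\beta)$-games since $0.5f_1 > f_\ell$ for every $\ell \ne 1$.

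To boost $GC$ up to the claimed $0.9$, I would run $R = O(\log(1/(1-0.9))/\log(1/(1-GC)))$ independent copies of the $(\alpha^{*},\beta^{*})$-game in parallel, using fresh randomness (hash functions $g$ and $h$, reservoir sampling) for each copy. Since the copies are independent, the probability that none of them produces $1$ as a team winner is at most $(1-GC)^{R} \le 0.1$. Taking the maximum counter over all copies (and over all $(\alpha,\beta)$-games as in Algorithm \ref{sdjnfsdfsjdfskjd}) yields $1$ as the output, again by Fact \ref{asljffkjsadfkjds}: whichever copy produces the winning team gives a final counter $\ge 0.5f_1$, dominating any counter produced by a copy in which a non-heavy element wins its team (whose counter is at most $f_\ell \le f_1/2$).

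Finally, to see that only one pass is needed, observe that all components --- hashing elements through $g$, reservoir sampling within each pool $\Gamma_{i,z}$, maintaining each player's counter, running the round comparisons at the scheduled indices $2^{\gamma}$, and running the $R$ parallel copies over all $(\alpha,\beta)$ pairs in Algorithm \ref{sdjnfsdfsjdfskjd} --- are purely incremental and depend only on the elements already seen. They can therefore be executed simultaneously on a single scan of $D$.

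The only real obstacle is Lemma \ref{sdfsdfsdfsdfnsdvnsdvmnsdmnv}: once we have the existence of $(\alpha^{*},\beta^{*})$ with enough great rows, the rest is essentially a clean application of Lemmas \ref{sdfljsdldjfljeflkjfd} and \ref{fkljsdfkjlsdflkjdsf} plus standard independent repetition. Everything else in this theorem --- the amplification, the combination across games, and the one-pass claim --- is bookkeeping built on the infrastructure already set up in Sections \ref{F}--\ref{dslkjkasdjksjdfkjsdg}.
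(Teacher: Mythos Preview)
Your proposal is correct and follows essentially the same route as the paper: invoke Lemma~\ref{sdfsdfsdfsdfnsdvnsdvmnsdmnv} to get a good $(\alpha^*,\beta^*)$, feed it into Lemma~\ref{sdfljsdldjfljeflkjfd} to win a team with constant probability, and then use Fact~\ref{asljffkjsadfkjds} (equivalently, $f_\ell \le G_k^{1/k} < 0.5f_1$ for $\ell\neq 1$ together with the fact that counters never overestimate) to propagate the team win to a global win across all $(\alpha',\beta')$-games. The paper's proof is exactly this three-step argument, stated in two sentences.

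The only substantive addition in your write-up is the explicit parallel-repetition step to boost $GC$ to $0.9$; the paper's own proof is silent on this and simply asserts the $0.9$ bound after obtaining the constant $GC$ from Lemma~\ref{sdfljsdldjfljeflkjfd}. Your amplification is a legitimate way to close that gap, and since $R=O(1)$ it does not affect the space analysis. One small wording issue: you should make explicit (as the paper does with ``we only consider the first $0.5f_1$ occurrences'') that the winning instance of $1$ must come from $HALF$ for the counter to reach $0.5f_1$; the existence/great-row machinery already guarantees enough such rows, but your sentence ``once $1$ wins its team in some row \dots\ its counter is at least $0.5f_1$'' is literally true only for rows in $HALF$.
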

\begin{proof}
Lemma \ref{sdfsdfsdfsdfnsdvnsdvmnsdmnv} implies that there exists a pair $\alpha,\beta$ that satisfies the conditions of
Lemma \ref{sdfljsdldjfljeflkjfd}. Thus, for this pair Lemma \ref{sdfljsdldjfljeflkjfd} implies that the heavy element will be a winner in at least one of the teams.
Further, we only consider the first $0.5f_1$ occurrences. Note that the assumptions in Table \ref{table2} imply that
$f_l \le G_k^{1/k} < 0.5f_1$ for any $l>1$.
Our algorithm implies that the estimator never exceeds the real frequency.
Thus, the heavy element will beat all winners in all other games and all other teams.
\end{proof}

\subsection{Modifications to the Game Algorithm}
Here we will modify the game several times without affecting the main claim of correctness. The modified version will allow us to save space.
The modifications will be applied for cases where $\beta \le C\log\log n$, for sufficiently large C. Also, see Observation \ref{wsgfwsergfwergf}.

\subsubsection{Saturated Rows }\label{sdfljsjfsljfdlsjef}
If a single row has too many instances of the heavy element, the counter can grow so large that it takes up too much space. Fortunately, there can only be very few rows that have this property, so they can be safely ignored.
The following modification will take place during the game with parameters $\alpha, \beta$ and during round $\gamma$.
All updates that are larger than $2^{4\gamma}$ will be ignored.
We claim that this modification does not affect the correctness of the algorithm.
Define row $i$ as a $\gamma$-saturated row for $\gamma\ge \beta$ if there exists
a row $j$ such that $f_1(\alpha,j) > 2^{4\gamma}$ and $j-i \le 2^{\gamma}$. The number of rows with $f_1(\alpha,j)> 2^{4\gamma}$ is at most $f_1\over  2^{4\gamma}$. Thus, the number of $\gamma$-saturated rows is at most $f_1\over  2^{3\gamma}$. Summing up for all $\gamma \ge \beta$ we conclude that the number of rows that are saturated for any $\gamma$ is at most $f_1\over  2^{3\beta-1}$. Using the bound on the number of good rows from Corollary \ref{dflkmgvldsfgldsflksdflk} we conclude that the total
number of good rows decreases by a negligible amount that does not affect the correctness.

\subsubsection{Storing Round information in Small Space}\label{dfdsfsdfdsf}
%
We will record each winning player's location within their team.
To do so, we will keep a sequence of length at most $3w \gamma \over 3^{\gamma}$ bits and
interpret it as follows.
Each sequence of $\gamma \log_2 3+1$ bits will indicate the offset in the group of $3^{\gamma}$ players that
could play in one group in round $\gamma$. Since at most one winner exists, $\gamma \log_2 3$ bits is sufficient to represent it. We will use an additional bit to indicate the case when all players are inactive.
Since $\gamma \log_2 3+1 \le 3\gamma$ the bound follows.

Using the above representation, it is possible to support the following operations.
Given a player number and the team number, it is possible to check whether the player is still active.
The counters and other data will be stored per group as a sequence of fixed-length words.
Also, given a team number, it is possible to list all winners and their counters.

\subsubsection{Reservoir Sampling with Small Space}\label{sdfwefwefwefweewf}
We now implement the sampling method for players using small space (see Section \ref{dskfjkjasdskjsdf}).  We can instantiate all counters that are needed to reservoir sample using $O(\beta)$ bits since $|\Gamma| = O(2^{\beta})$ with high probability.
Note that this space does not include the space that is needed to store the actual sample.
Only the implementation of the algorithm, i.e. additional structures, is discussed here.

We will use the simple reservoir sampling algorithm from \cite{reservoir}.
Recall that reservoir sampling can maintain one sample using independent coin flips for each new element of the pool.
The coin bias depends only on the length of the pool and
thus the implementation can be carried out with $O(\beta)$ bits if $|\Gamma| \le 2^{O(\beta)}$. We will guarantee this condition by making a player inactive if $|\Gamma| \ge 100* 2^{\beta}$ for the pool of its sample. Recall that
the sampling procedure is explained in  Lemma \ref{fkljsdfkjlsdflkjdsf}.
It is possible to check that the proof of Lemma \ref{fkljsdfkjlsdflkjdsf} is given for this condition (see $(\ref{dsfklmslkdflsdflsd})$). Thus, the statement of Lemma \ref{fkljsdfkjlsdflkjdsf} holds.


\subsection{Space Complexity: Preliminary Analysis}\label{wdfnkjwdfskjwfdjkwef}
Before we proceed to the final modification of our algorithm, let us analyze the space complexity of the current version. We will argue that only one change is needed to achieve the desired $O(n^{1-2/k})$ bound.

On a high level, our algorithm collects samples, maintains counters and compares them to declare winners.
To implement the algorithm it is necessary to store the information contained in the tables below.
For each triple of $\eta,\alpha, \beta$ we play the game.
The following is the explanation of all data structures that we use with their space complexities.
During each step of the game, our algorithm reads the next element of the stream and identifies
the row of this element in the matrix $M_{\alpha}$.
At any moment we store information about all past rows and the current rows.
We store a structure for each row $i \in [r_{\alpha}]$.
Note that given the number of the current row $j$ and the number of the past row $i$
it is straightforward to determine the round $\gamma$.

In the following table we fix $\alpha, \beta$, and $\gamma$.

\begin{table}[H]
\centering
\begin{tabular}{|l|l|l|}
\hline
Basic Info  & Bits&Explanation\\ \hline
IDs of samples &  $\log n$ &If x is sampled we have to remember x to count its frequency and \\
               & &compare with other samples. $\log n$ bits are needed to store one ID.\\ \hline
Frequency & $O(\gamma)$ &  To store the current frequency. \\
Counters & &For an $(\alpha,\beta)$-game and for round $\gamma\ge \beta$, \\
          & & $O(\gamma)$ bits are sufficient per counter. \\
         & & This is because we ignore all large updates (See Section \ref{sdfljsjfsljfdlsjef}.) \\ \hline
Location of winning &&\\
players in the team  & & \\
 & $O( \gamma)$ &  See Section \ref{dfdsfsdfdsf}. \\ \hline
Id of a row & $O(\log n)$ &  \\ \hline
Hash Function $g$&  $O(\log(n))$& Since $g$ is pairwise independent. See \cite{ams} for details.\\ \hline
Reservoir Counter &  $O(\beta)$& See Section \ref{sdfwefwefwefweewf}\\ \hline
\end{tabular}
\caption{Space Complexity of One Round of the Game}
\end{table}

Each structure will be stored as a sequence of fixed-length words. For each row $i\le j$ where $j$ is the current row we will store the value of $i$. Also, for each group of $3^{\gamma}$ players in the $\gamma$-th round (defined by $j-i$) we will store for each winner its ID, its counter, and its player's number.
Structures per row per game:

\begin{table}[H]
\centering
\begin{tabular}{|l|l|l|}
\hline
Data Structure  & Bits&Explanation\\ \hline
Array of IDs of samples &  $O(\log n{w\over 3^{\gamma}})$ &  \\ \hline
Array of Counters  & & \\ & $O(\gamma{w\over 3^{\gamma}})$ &  \\ \hline
Location of winning players & & \\
 & $O(\gamma{w\over 3^{\gamma}})$&   \\ \hline
Id of a row & $O(\log n)$ &  \\ \hline
Hash Functions&  $O(\log(n))$& One function per row. See Section \ref{dskfjkjasdskjsdf}.  \\ \hline
Reservoir Sampling Instances&  $O(\beta w)$&\\ \hline
\end{tabular}
\caption{Space Complexity Required to Store Each Row of Matrix}
\end{table}

Structures per game (all rows, recall that $\gamma \ge \beta$ and that there exist at most $2^{\gamma}$ rows with current round $\gamma$):

\begin{table}[H]
\centering
\begin{tabular}{|l|l|l|}
\hline
Data Structure  & Bits&Explanation\\ \hline
Arrays of IDs of samples &  $O({w}\log n)$ &  \\ \hline
Arrays of Counters  & & \\ & $O({w})$ &  \\ \hline
Location of winning players & & \\
 & $O({w})$&   \\ \hline
Id of a row & $O((\log n)r_{\alpha})$ &  \\ \hline
Hash Functions&  $O((\log n)r_{\alpha})$&  only for the last $\log n$ rows\\ \hline
Reservoir Sampling Instances&  $O(\beta w)$& only for the current row\\ \hline
\end{tabular}
\caption{Space Complexity of One $(\alpha,\beta)$-Game}
\end{table}

By ignoring all updates that are larger than $2^{4\gamma}$ we only need $O(\gamma)$ bits per counter in round $\gamma$.
Let us summarize the space for frequency counters. Also, we assume that $k>3$ in which case $r_\alpha \log n = o(n^{1-2/k})$ for all values of $\alpha$ in the range of Algorithm \ref{sdjnfsdfsjdfskjd}.
After the $\gamma$-th round every group of $3^\gamma$ players has at most one active member.
There are at most $2^\gamma$ teams between rounds $\gamma$ and $\gamma+1$.
It follows that the number of active players is
$O(w\sum_{\gamma=1}^{\lceil\log(q)\rceil} 2^\gamma3^{-\gamma}\gamma)$.
This is a converging sequence and thus, the total space is $O(w)$.
Similar arguments can be made for other structures.

Summing up for all $\beta,\alpha$ and noting that $w = {n^{1-2/k}\over 2^{\mu\beta}}$ we observe\footnote{Where $\mu$ is an absolute constant defined in Table \ref{table2}.} that
the arguments in this section imply that the following observation is true:
\begin{lemma}
If we could reduce the space for storing the IDs
from $\log n$ to $O(\gamma)$ then the algorithm will work with $O(n^{1-2/k})$ bits.
\end{lemma}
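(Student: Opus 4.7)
The plan is to carry out the bookkeeping indicated by the three tables of Section \ref{wdfnkjwdfskjwfdjkwef}, substituting $O(\gamma)$ in place of $\log n$ for every ID field, and to check that the resulting sums collapse to $O(n^{1-2/k})$. The key point is that there are two independent sources of geometric decay in play: the $3^{\gamma}$ elimination rate of active players within a team, and the $2^{-\mu\beta}$ decay built into $w = 2^{-\mu\beta}n^{1-2/k}$ across games.

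First, I would count games. By Algorithm \ref{sdjnfsdfsjdfskjd}, each of the parameters $\eta$, $\alpha$, $\beta$ ranges over $O(\text{RANGE}) = O(\log\log n)$ values, so the number of parallel games is bounded by a fixed polynomial in $\log\log n$, which is absorbed into the $O(\cdot)$ constant for the purposes of an $n^{1-2/k}$ bound. It therefore suffices to bound a single $(\alpha,\beta)$-game by $O(w)$ bits and then to sum over $\beta$.

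Second, I would fix $(\alpha,\beta)$ and aggregate per-player cost across rounds. After round $\gamma$, each team of $w$ initial players has at most $w/3^{\gamma}$ survivors, and at any instant at most $2^{\gamma}$ teams are simultaneously in round $\gamma$; thus the active population with current round $\gamma$ is at most $w\cdot(2/3)^{\gamma}$. Under the hypothesis of the lemma, each active player carries three structures of size $O(\gamma)$ bits apiece: the counter (size $O(\gamma)$ thanks to the saturation rule of Section \ref{sdfljsjfsljfdlsjef}, which discards any update larger than $2^{4\gamma}$), the location word inside the team (Section \ref{dfdsfsdfdsf}), and the ID itself under the assumed $O(\gamma)$ representation. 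The contribution of round $\gamma$ is therefore $O(\gamma)\cdot w\cdot(2/3)^{\gamma}$, and since $\sum_{\gamma\ge\beta}\gamma\,(2/3)^{\gamma}=O(1)$, the total active-player cost per game is $O(w)$.

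Third, I would handle the remaining ``non-player'' structures and sum over games. Row identifiers and the pairwise-independent hash function $g$ cost $O(\log n)$ per row, and only $O(r_{\alpha})$ rows are ever stored; since $k>3$ and $\alpha$ lies in the range prescribed by Algorithm \ref{sdjnfsdfsjdfskjd}, one checks that $r_{\alpha}\log n = o(n^{1-2/k})$. The reservoir-sampling state from Section \ref{sdfwefwefwefweewf} fits in $O(\beta w)$ bits per game, since a sample is declared inactive whenever $|\Gamma|\ge 100\cdot 2^{\beta}$. Adding these contributions gives $O(w)+O(\beta w) = O(\beta w)$ per game, and summing over $\beta\ge 0$ yields $n^{1-2/k}\sum_{\beta}\beta\,2^{-\mu\beta} = O(n^{1-2/k})$; the extra factor from summing over $\alpha$ and $\eta$ is only polylogarithmic in $\log n$ and is absorbed. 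The main obstacle is merely being careful about the per-round / per-team / per-row / per-game boundaries and verifying that the saturation modification genuinely bounds each counter by $O(\gamma)$ bits in round $\gamma$, so that the linear $\gamma$ factors from counters, IDs, and locations are dominated by the $(2/3)^{\gamma}$ and $2^{-\mu\beta}$ decays.
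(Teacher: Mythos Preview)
Your proposal is essentially correct and mirrors the paper's own justification: the text preceding the lemma in Section~\ref{wdfnkjwdfskjwfdjkwef} is exactly the bookkeeping you carry out (active players decay as $(2/3)^{\gamma}$ across rounds, counters and team-location words are $O(\gamma)$ bits apiece thanks to Section~\ref{sdfljsjfsljfdlsjef} and Section~\ref{dfdsfsdfdsf}, the row-level overhead $r_{\alpha}\log n$ is $o(n^{1-2/k})$ for $k>3$, and $w = 2^{-\mu\beta}n^{1-2/k}$ makes the sum over $\beta$ geometric).

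There is one genuine slip. You assert that the residual overcount from summing over $\alpha$ and $\eta$ is ``only polylogarithmic in $\log n$ and is absorbed.'' A factor of $(\log\log n)^{O(1)}$ is \emph{not} a constant and does not disappear into $O(n^{1-2/k})$ on its own. The paper's actual remedy (spelled out later in the proof of Theorem~\ref{wdfkjnwkejfkjwefkjwef}) is to exploit the coupling built into Algorithm~\ref{sdjnfsdfsjdfskjd}: every game has $u\ge 1.5\eta$ and $\beta\ge 0.8u$, so for fixed $\eta$ the total over all its games is at most $n^{1-2/k}\sum_{u\ge 1.5\eta}2^{-0.8\mu u}=O\bigl(n^{1-2/k}\,2^{-c\eta}\bigr)$, and the subsequent sum over $\eta$ is a second geometric series. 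Equivalently, the number of games sharing a given value of $\beta$ is polynomial in $\beta$ (not in $\log\log n$), and that polynomial is absorbed by the $2^{-\mu\beta}$ decay. With this correction your argument is complete and matches the paper's.
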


In the next section we will do precisely that: reduce the space for storing the IDs of each element sampled.
Before we proceed let us justify the fact that the modifications are necessary only for small $\beta$.
In the beginning of Section \ref{wdfnkjwdfskjwfdjkwef} we assumed that $\beta = O(\log\log n)$.
In the next observation we show that otherwise the problem can be solved without any modifications.

%
%

\begin{observation}\label{wsgfwsergfwergf}
The problem is solved without any modification when the correct value of $\beta$ is larger than $C\log\log n$, for sufficiently large C.
Indeed, the sampling complexity decreases exponentially with $\beta$.
Therefore, for any $\beta > C\log\log n$, again for sufficiently large C, the total space complexity is $O(n^{1-2/k})$ even if we use $\log n$ bits for counters and IDs.
\end{observation}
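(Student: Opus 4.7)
The plan is to substitute $w = 2^{-\mu\beta}n^{1-2/k}$ directly into the preliminary space accounting of Section \ref{wdfnkjwdfskjwfdjkwef}, but now \emph{without} invoking any of the modifications that replace IDs by signatures or trim counters to $O(\gamma)$ bits. We then verify that the $2^{\mu\beta}$ savings from a large $\beta$ more than absorb all the residual $\log n$ factors, so that the sum over all $(\alpha,\beta,\eta)$-games produced by Algorithm \ref{sdjnfsdfsjdfskjd} still fits in $O(n^{1-2/k})$ bits.

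First, I would revisit the per-game tables in Section \ref{wdfnkjwdfskjwfdjkwef}. Ignoring all modifications, each active player in an $(\alpha,\beta)$-game needs at most $O(\log n)$ bits for its ID and $O(\log n)$ bits for its frequency counter, and the reservoir sampling instance for the current row costs $O(\beta \log n) = O((\log n)^2)$ bits. The number of active players across all rounds of a single game telescopes, exactly as in Section \ref{wdfnkjwdfskjwfdjkwef}, via the geometric sum $\sum_{\gamma\ge\beta} 2^\gamma \cdot 3^{-\gamma} \cdot w \cdot O(\gamma)$, which is $O(w)$. Together with the $O(r_\alpha \log n) = o(n^{1-2/k})$ overhead for row identifiers and hash functions, the total cost of a single $(\alpha,\beta)$-game is bounded by $O(w (\log n)^2) + o(n^{1-2/k})$.

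Second, I would plug in $w = 2^{-\mu\beta}n^{1-2/k}$, obtaining a per-game bound of
\[
O\!\left(2^{-\mu\beta} n^{1-2/k}(\log n)^{2}\right) + o(n^{1-2/k}).
\]
For $\beta \ge C\log\log n$ with $C$ chosen so that $\mu C \ge 3$, we have $2^{\mu\beta} \ge (\log n)^{3}$, and hence the displayed quantity is $O(n^{1-2/k}/\log n)$. Algorithm \ref{sdjnfsdfsjdfskjd} loops over $\eta = 0,\dots,RANGE$, $\alpha$ in a range of size $O(RANGE)$, and $\beta$ in a range of size $O(RANGE)$, with $RANGE = O(\log\log n)$; so the number of $(\alpha,\beta)$-games with $\beta > C\log\log n$ is at most $O((\log\log n)^{3})$. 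Multiplying the per-game bound by this polylog-in-$\log n$ factor still gives $O(n^{1-2/k})$, and in fact leaves a growing $1/\log n$ slack.

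The only subtlety I anticipate is verifying that none of the ``modification'' steps were used in the proof of Theorem \ref{sdfsdfsdfsdfsaldkflkaslkflsda}: the correctness argument goes through the Noisy Lemma and Lemma \ref{sdfljsdldjfljeflkjfd} with the vanilla game, so the modifications in Section 2.4 are strictly space optimizations. Once that is observed, the claim reduces to the one-line inequality $2^{\mu\beta}(\log n)^{-O(1)} \ge 1$, which is exactly why the threshold $\beta > C\log\log n$ is the right cutoff. The mild obstacle is bookkeeping: making sure the per-row auxiliary structures (row IDs, pairwise-independent hash description, reservoir state) also fit inside the $O(n^{1-2/k})$ budget after summation; but each of these contributes only an $O(\log n)$ or $O((\log n)^2)$ term per row, and $r_\alpha = F_1/t_\alpha$ is polynomially smaller than $n^{1-2/k}$ under the assumption $F_1 = O(n)$ from Table \ref{table1}, so they are absorbed as well.
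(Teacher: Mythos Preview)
Your proposal is correct and follows the same route as the paper; indeed the paper offers no separate proof, since the justification is already contained in the two sentences of the observation itself (``the sampling complexity decreases exponentially with $\beta$ \dots even if we use $\log n$ bits''), and you have simply written out the arithmetic behind that sentence using the tables of Section \ref{wdfnkjwdfskjwfdjkwef}. One small remark: you bound the number of large-$\beta$ games by $O((\log\log n)^3)$ by appealing to $RANGE=O(\log\log n)$, but this makes the case $\beta>C\log\log n$ vacuous for large $C$; the cleaner way (and the way the paper implicitly intends it) is to note that the per-game costs $2^{-\mu\beta}n^{1-2/k}(\log n)^{O(1)}$ form a geometric series in $\beta$, so summing over \emph{all} $\beta>C\log\log n$, however far the range extends, still gives $O(n^{1-2/k})$.
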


\subsection{Signatures Instead of IDs}\label{dfsdfhssfhstrt}

\subsubsection{Concept}
As shown in the previous section, the Game algorithm does not provide an improvement in space complexity \cite{DBLP:journals/corr/abs-1212-0202}. In this section we describe the process of modifying the Game to use hash based signatures to store which elements have been sampled instead of storing the ID of the element. An intuitive explanation of this approach is provided in section \ref{adfsdfsefsdfsdf}.

\subsubsection{Signature Assignment Algorithm}
Let $s = O(\log(n))$. Let $sig$ be an $n\times s$ matrix with i.i.d. columns. Each column is a vector
with uniform zero-one entries that are $4$-wise independent. It follows that we need $O(\log^2(n))$ bits to represent $sig$. Denote by $R_j(i)$  the first $j$ bits in the $i$-th row of $sig$, i.e., $R_j(i) = \{sig_{i,1},\dots, sig_{i,j}\}$.
Let $\varrho > 100$ be a constant.

Recall that after a team is sampled from its initial row, it continues to play on every subsequent row until the end of the matrix. For each team, split the rounds they play into three phases. Phase one begins with the first round and ends after the $\lceil\log\log n\rceil$-th round.
Phase two starts right after phase one ends and ends after the $\lceil10\log\log n\rceil$-th round ends.
Phase three starts right after phase two ends and continues until the end of the game.

When a player samples an element, the ID of the element is not represented explicitly using $\log(n)$ bits.
Instead, initially the sample is assigned a signature $R_{\varrho\beta}(p)$ and this value is given to the player to count.  During the $\gamma$-th round in the first phase, the sample is represented by a signature $R_{\varrho\gamma}(p)$.
The signatures are extended as follows. If a player with a given signature has a counter that grows large enough, the next time we see an element with matching signature, we add more bits of identification based on the element seen to increase the resolution in counting future elements for that player.

The counter is incremented every time the hash function gives the right value and the signatures of the new stream element matches the stored signature for the sample. Thus, collisions are possible during the first round and the counters may be incorrect.

During the second phase we assign an ID to the player. Once the second phase is reached, the next time an element is read that has the same signature as the player, the ID of that element is given to the player.
After assigning the ID two different counters are kept. One counter is the old counter; its value does not change. Another counter is a counter that starts with the value of zero and counts the number of elements that have the same ID as the player.
The value of the counter during the second phase is the sum of the values of the old and new counters.
Thus, during the second phase the counters still can be incorrect, but no new counting errors will be introduced.

Finally, during the third phase we discard the old counter and use only the new counter.
In phase three we underestimate the original counter of the algorithm.

\begin{theorem}\label{sigok}
Theorem \ref{sdfsdfsdfsdfsaldkflkaslkflsda} is still correct after the modifications performed in Section \ref{dfsdfhssfhstrt}
\end{theorem}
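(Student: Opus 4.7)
The plan is to revisit the proof of Theorem \ref{sdfsdfsdfsdfsaldkflkaslkflsda} and, at each place where the deterministic ID of a sampled element was used, add a new family of ``noisy'' events to the application of the Noisy Lemma (Lemma \ref{dfdfljknjlfkjlrfekjf} and Corollary \ref{egtrweryrthrthhrtdhrtht}) that capture signature collisions. Because the signature length $\varrho\gamma$ grows linearly with $\gamma$ and $\varrho>100$, the resulting union-bound estimates collect an extra factor $2^{-\varrho\gamma}$ that easily outpaces the $3^\gamma$ competitors and the bad-element counts used inside Lemma \ref{sdfljsdldjfljeflkjfd}.

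First, for the heavy player, I would define for each round $\gamma\ge\beta$ in phase one the event $X_{i,z,\gamma}$ that some element $l\neq 1$ lying in the pool of one of the $2^\gamma$ rows of the round satisfies $R_{\varrho\gamma}(l)=R_{\varrho\gamma}(1)$. By the $4$-wise independence of the columns of $sig$ and the Markov inequality, the conditional probability given $C_{i,z}$ is at most $|\Gamma|\cdot 2^{-\varrho\gamma}=O(2^{\beta-\varrho\gamma})$ per row; the sum over $2^\gamma$ rows and over $\gamma\ge\beta$ is a geometric series dominated by $O(2^{-(\varrho-2)\beta})$, well below the slack $0.1$ available in Corollary \ref{egtrweryrthrthhrtdhrtht}. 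Adding the $X_{i,z,\gamma}$ to the $B_{i,z,\gamma}$ family from the original proof therefore preserves the lower bound $P(\bigcup A_{i,z})\ge GC$. Symmetrically, for players in competing groups, signature-based overcounting can only increase a counter through increments that simultaneously pass the $g$-filter and match the signature; combining the $2^{-\varrho\gamma}$ factor with the deterministic saturation cutoff $2^{4\gamma}$ from Section \ref{sdfljsjfsljfdlsjef} and the bad-element bound of Definition \ref{dsfsefsdfsdfdsfdsfdsfdsf}, the expected extra counts in round $\gamma$ remain exponentially below the threshold $TR=2^{\gamma-\alpha+\eta-1}$, so no inactive competitor is resurrected and the core argument of Lemma \ref{sdfljsdldjfljeflkjfd} is unchanged.

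Second, when the team enters phase two at round $\lceil\log\log n\rceil$, the signature length is at least $100\log\log n$, so a union bound over the $n$ possible elements and the at most $n$ subsequent stream positions shows that with probability $1-n^{-\Omega(1)}$ no element other than $1$ matches the heavy element's signature anywhere in the remaining rows of the game. Conditioned on this event, the first qualifying element observed in phase two is $1$ itself, so the ID attached to the player is correct. By the time phase three begins at round $\lceil 10\log\log n\rceil$, the new (ID-only) counter has been accumulating correct matches for a $(\log n)^{9}$ factor longer than the frozen old (signature-based) counter, so discarding the old counter at the start of phase three amounts to a relative loss of at most $1/\mathrm{poly}(\log n)$ and the final counter still exceeds the $0.5f_1$ requirement of Definition \ref{rewfkwejrkjwefkjwekjwer}.

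The main obstacle is controlling the collision-induced inflation of competitors' counters uniformly over $\gamma$, since the new noisy events and the original events $B_{i,z,\gamma}$ share the two-level hash function $g$ and are therefore not independent; I would handle this by conditioning on $g$ and treating the $sig$-randomness as an independent second layer, so that the saturation cap and the rapid growth of $\varrho\gamma$ together guarantee the extra events contribute only an $o(1)$ term to the Noisy Lemma's budget. Once this is verified, Lemma \ref{sdfljsdldjfljeflkjfd} goes through verbatim for the same $(\alpha,\beta)$ produced by Lemma \ref{sdfsdfsdfsdfnsdvnsdvmnsdmnv}, the winner of that game is still $1$ with constant probability, and Fact \ref{asljffkjsadfkjds} shows it beats all winners of all other games, thereby yielding the conclusion of Theorem \ref{sdfsdfsdfsdfsaldkflkaslkflsda} under the signature-based modification.
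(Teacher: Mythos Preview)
Your proposal follows the same overall architecture as the paper's proof---add signature-collision events to the Noisy Lemma budget, then treat the three phases separately---but it diverges from the paper in one structural respect and contains one genuine quantitative gap.

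\textbf{The structural divergence.} The paper does \emph{not} keep the original bad/great row analysis and merely append collision events. Instead it replaces ``dense'' (single-row) by ``thick'' (range-of-rows) and ``bad/great'' by ``faulty/perfect'' (Definitions~\ref{ewfewwefew} and~\ref{dejfnkjdsfkjfdkjdf}, Lemma~\ref{ewlkfmwlekfmlwekf}, Corollary~\ref{ewrrkfkjlkjwefelkjkweflkjkwef}). The reason is that once a competitor stores only a signature, its counter at round~$\gamma$ is fed by elements from \emph{all} of the rows $i,\dots,i+2^\gamma$, so the set of dangerous high-frequency elements is determined by their frequency over that range, not over the single sampling row~$i$. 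This restructuring costs an extra factor $4^\gamma$ in the bad-row count, which is exactly what pushes the requirement from $k\ge 5$ to $k\ge 7$. Your sentence ``combining the $2^{-\varrho\gamma}$ factor with \dots\ the bad-element bound of Definition~\ref{dsfsefsdfsdfdsfdsfdsfdsf}'' invokes the single-row definition where the range-of-rows definition is needed; without the thick-row replacement (or an explicit argument that a colliding $l'\in T_{TR}$ from a later row cannot inflate the competitor past $TR$), the step ``no inactive competitor is resurrected'' is not justified. The paper also separately bounds the aggregate contribution of low-frequency collisions via an expectation argument on $Y_{i,z',\gamma}$; your sketch gestures at this but conflates it with the high-frequency case.

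\textbf{The quantitative gap.} Your phase-two claim is false as written. At round $\lceil\log\log n\rceil$ the signature has $\varrho\lceil\log\log n\rceil$ bits, so a single collision has probability $(\log n)^{-\Theta(\varrho)}$; union-bounding over ``the $n$ possible elements and the at most $n$ subsequent stream positions'' yields $n^2(\log n)^{-\Theta(\varrho)}$, which is not $1-n^{-\Omega(1)}$ but in fact tends to infinity. The paper avoids this by never union-bounding over all of $[n]$: Fact~\ref{sdgfdfgdfgdsg} and Lemma~\ref{sdfkjwesfkjwefkj} always carry the $g$-filter factor $2^{\beta-\Psi}/t_\alpha$, so the effective union is over the $O(2^\gamma\cdot 2^\beta)$ pool elements, and the product $2^{\gamma}t_\alpha\cdot(2^{\beta-\Psi}/t_\alpha)\cdot 2^{-\varrho\gamma}=2^{\beta-(\varrho-1)\gamma}$ is what sums to at most $0.01$. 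Your phase-one paragraph already uses the pool correctly; the fix is to do the same in phase two rather than appealing to a global union bound over $[n]$.
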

The proof of Theorem \ref{sigok} can be found in Appendix \ref{signaturesproofs}.

\subsection{Proving Theorem \ref{wdfkjnwkejfkjwefkjwef}: Finding Heavy Hitters}

We now prove Theorem \ref{wdfkjnwkejfkjwefkjwef} from section \ref{Main Result}.

\begin{proof}
First we will work under the assumptions from Table \ref{table1} for $f_1$ and show we can find a heavy element with constant probability. Our analysis will be true for any heavy element, not necessarily $1$. The correctness follows from Theorem \ref{sigok}. Let us bound the space complexity of our algorithm. As we discuss in Section \ref{wdfnkjwdfskjwfdjkwef} we need to demonstrate that  $O(w\gamma)$ bits are sufficient to store the IDs of all players. This is indeed the case; as we show in Section \ref{dfsdfhssfhstrt}, in the first phase we need $3w \varrho\gamma$ bits. During all other phases we need only $o(w)$ bits.

Thus, for fixed $\beta$ the space complexity of playing an ($\alpha,\beta$)-game is $O(w)$. Recall the value of $w$ from Table \ref{table2} implies that the space complexity is $O({n^{1-2/k}\over 2^{\mu\beta}})$. Now, let us compute the space complexity of playing all games in parallel. We play this game for all $\eta \in [0,RANGE]$. For a fixed value of $\eta$ we play games on a range of parameters ($\alpha,\beta$) that are given in Algorithm \ref{sdjnfsdfsjdfskjd} (see also Section \ref{F}).
%
Summing up for all $\eta, \alpha, $ and $\beta$ we obtain a geometric series that sums to $O(n^{1-2/k})$.
Indeed, fix $\eta$ and observe that for the fixed $\eta$ the cost of all games played has the following upper bound.
\begin{equation}\label{ljfsjdfjsdfjs}
n^{1-2/k}\left(\sum_{u=1.5\eta}^{RANGE} 2^{-0.8\mu u}\right)
\end{equation}
Indeed, $\beta\ge 0.8u$ in all games and therefore the above bound is true.
The value in $(\ref{ljfsjdfjsdfjs})$ is further bounded by $\frac{n^{1-2/k}}{ 2^{C\eta}}$ for some absolute constant $\eta$.
Summing over all $\eta \in \{0,\dots, RANGE\}$ we conclude that the upper bound on the total cost of all games is $O(n^{1-2/k})$.

Therefore, the following statement is true, under the assumptions from Table \ref{table1}. Algorithm \ref{sdjnfsdfsjdfskjd} finds the heavy element, if one exists, uses $O(n^{1-2/k})$ bits,
and works correctly with a constant probability.

To finish the proof of Theorem \ref{wdfkjnwkejfkjwefkjwef}
let us now remove the assumptions from Table \ref{table1}. In the first pass we compute the ratio $F_0\over F_1$.
In the second pass we sample the stream with probability $F_0\over F_1$ (see Section \ref{edfdsfsdfadsfasdfadsf}).
During the second pass we will find some indices $i$ that will contain indices of the heavy elements
with high probability. During the third pass we will compute the values $f_i$ precisely.

So far, we have shown how to find heavy hitters given the assumption on the range of $f_1$ from Table \ref{table1}. To remove this assumption, for parameter $\rho$ it remains to be proven that if $f_i^k \ge \rho F_k$ then all such indices will be output with probability at least $1-\delta$
and that the total cost is as stated.
To show the first part, let $h [n]\mapsto [z]$ where $z=O({1\over \rho^2})$.
Let $x\in [z]$ and denote by $D_{x}$ the substream of $D$ defined as
$D_{x} = \{p_i: h(p_i) = x\}$.
The Markov inequality and union bound imply the following statement.
For sufficiently large $z$, with probability $0.01$ for all $\rho$-heavy elements the assumptions of the Game are correct.
Therefore the space complexity of finding $\rho$-heavy elements as stated in Theorem \ref{wdfkjnwkejfkjwefkjwef} becomes $O(\frac{1}{\rho^C}n^{1-2/k})$.  In a similar way, we can show the same bound by replacing $n$ with $F_0$.
\end{proof}

\subsection{Removing Assumptions}\label{sdflklkjsdfjldsjf}
\subsubsection{Values of $\eta$}\label{sdgsdfsgdgfsd}
Assuming that $F_1 = O(n)$ we show how to address the case when $\eta > 3\log\log n$.
If $\eta > 3\log\log n$ then $F_k\ge \log^3 n$. Recall that result from the pick-and-drop sampling
is that in this case we can apply the sampling using $O(n^{1-2/k}\over \log n)$ samples.
In this case the total space is at most $O(n^{1-2/k})$ bits.
When $\eta$ is odd we will consider $\eta' = \eta-1$ and repeat the analysis with the appropriate change in constants.

\subsubsection{$1$ does not expire}\label{wefljlnkewjfkjewfkjkwef}
In this section we will remove the assumption that $1$ does not expire (See Table \ref{table1}).
Recall that the $(\alpha, \beta)$-game is played on matrix $M_\alpha$.
Note that there is a one-to-one correspondence between the elements of the stream and the entries of matrix $M_\alpha$.
Suppose that $s$ corresponds to an entry $i,j$ in the matrix $M_\alpha$, specifically $s = (i-1)t_{\alpha}+j$.
If $p_s$ has been sampled during the ($\alpha,\beta$)-game then the value of the counter after the
$\gamma$-th round is at least $\sum_{l=i+1}^{i+2^\gamma} f_1(\alpha,l)$. If we can show that this value is larger than $2^\gamma$ then the player will not expire at the $\gamma$-th round. Below we will show that there are many elements $p_s$ such that $p_s=1$ and such that they will not expire at any ($\alpha,\beta$)-game.
Next we will be using results\footnote{See \cite{DBLP:journals/corr/abs-1212-0202} or Appendix \ref{asdfadsfasfasf} for additional details and proofs.} from \cite{DBLP:journals/corr/abs-1212-0202}.

\begin{definition}\label{def: dkjfjrglrlttrg1}
Let $U = \{u_1,\dots,u_M\}$ and $W=\{w_1,\dots,w_M\}$ be two
sequences of non-negative integers. Let $(i,j)$ be a pair such that $1\le i\le
M$ and $1\le j\le u_i$. Denote $(i,j)$ as a \emph{losing} pair (w.r.t. sequences $U,W$) if there exists $h, i\le h\le M$ such that:
\begin{equation}\label{sdlnlsdljsdjlskd}
-j + \sum_{l=i}^h(u_l-w_l) < 0.
\end{equation}
Denote any pair that is not a losing pair as a \emph{winning} pair.
\end{definition}
The following is Lemma $2.20$ in \cite{DBLP:journals/corr/abs-1212-0202}. For completeness, we provide the proof in Appendix \ref{kljdflkjdsfsfkldfknkdffknsdkksd}.
\begin{lemma}\label{fct:erergerger1}
If $\sum_{s=1}^t(u_s-w_s) > 0$ then there exist at least $\sum_{s=1}^t(u_s-w_s)$ winning pairs.
\end{lemma}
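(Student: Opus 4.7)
The plan is to reparametrize winning pairs via the partial-sum walk and then count them by a telescoping / Cycle-Lemma style argument. Put $S_0 = 0$ and $S_h = \sum_{l=1}^h (u_l - w_l)$, so the hypothesis is $S_t > 0$. From the definition, $(i,j)$ with $1 \le j \le u_i$ is losing iff some $h \in [i, M]$ satisfies $S_h - S_{i-1} < j$. Writing $m_i := \min_{i \le h \le M} S_h$, the pair is therefore winning iff $j \le m_i - S_{i-1}$. Since $m_i \le S_i = S_{i-1} + (u_i - w_i) \le S_{i-1} + u_i$, the bound $j \le u_i$ is automatic once $j \le m_i - S_{i-1}$, so the number of winning pairs in row $i$ is exactly $(m_i - S_{i-1})_+$ and the total count is
\[
W \;=\; \sum_{i=1}^M (m_i - S_{i-1})_+.
\]

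With this closed form in hand, the plan is to prove $W \ge S_t$. I would rely on two structural facts: the sequence $(m_i)_{i=1}^M$ is non-decreasing, since $m_i = \min(S_i, m_{i+1})$, and $m_i - S_{i-1} > 0$ precisely at those $i$ where $S_{i-1}$ is a running minimum of the walk from index $i-1$ onward. On the (possibly empty) maximal intervals of such ``record-low'' indices the contributions telescope, so $W$ equals the total height gain of the running minimum as it climbs from $m_{1} \ge S_0 = 0$ up toward $m_M = S_M$, properly accounting for plateaus. In particular, $W \ge S_M$ follows by a clean induction on $M$: peel off the last row $(u_M, w_M)$ and split into the cases $S_M = m_M$ (new minimum) versus $S_M \ge m_{M-1}$. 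In each case, the inductive hypothesis applied to the shortened sequences together with a one-row calculation recovers the bound.

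To pass from $W \ge S_M$ to $W \ge S_t$, the key observation is a monotonicity: shortening the sequences from length $M$ down to length $t$ cannot decrease the number of winning pairs contributed by rows $1, \ldots, t$, because removing the tail can only raise the running minimum $m_i$, and under the natural coupling every pair that was winning stays winning while some new pairs may be admitted. Applying the length-$t$ version of $W \ge S_M$ (with $M$ replaced by $t$) then gives the claim. The main obstacle in executing this plan is the induction in the previous paragraph: tracking how $(m_i - S_{i-1})_+$ changes when a cell $(u_M, w_M)$ is appended or removed is combinatorially delicate, since introducing a new minimum at the end can lower $m_i$ for many preceding $i$ at once. This is the batched analogue of the Dvoretzky--Motzkin / Cycle Lemma, and handling it carefully (e.g.\ by reorganizing the peeled row into its individual $\pm 1$ contributions and invoking a ballot-type matching on the resulting $\pm 1$ sequence) is the crux of the proof.
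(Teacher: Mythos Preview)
Your reformulation via partial sums $S_h=\sum_{l\le h}(u_l-w_l)$ and suffix minima $m_i=\min_{i\le h\le M}S_h$ is correct, and the closed form $W=\sum_i (m_i-S_{i-1})_+$ for the number of winning pairs is right. But the argument stops exactly where it has to start: you never establish $W\ge S_M$. Your case split ``$S_M=m_M$ versus $S_M\ge m_{M-1}$'' is vacuous (both hold always, since $m_M=S_M$ and $m_{M-1}\le S_M$ by definition), and you then explicitly label the induction ``the main obstacle'' and ``combinatorially delicate'' without carrying it out. Also, $t$ and $M$ are the same parameter here (a notational slip between the definition and the lemma), so the paragraph about passing from $S_M$ to $S_t$ is unnecessary --- and in any event your monotonicity runs the wrong way: shortening to length $t$ gives $W_t\ge S_t$, but $W_t$ upper-bounds, not lower-bounds, the row-$1,\dots,t$ contribution to $W_M$.

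The irony is that your own closed form finishes in one line. Put $\mu_i=\min_{i-1\le h\le M}S_h$, so that $\mu_{i+1}=m_i$ and $\min(m_i,S_{i-1})=\mu_i$. Then $(m_i-S_{i-1})_+=m_i-\min(m_i,S_{i-1})=\mu_{i+1}-\mu_i$, and the sum telescopes:
\[
W=\sum_{i=1}^M(\mu_{i+1}-\mu_i)=\mu_{M+1}-\mu_1=S_M-\min_{0\le h\le M}S_h\ \ge\ S_M-S_0=S_M.
\]
No induction, Cycle Lemma, or ballot matching is needed. By contrast, the paper proves the lemma by a direct induction on the length, peeling from the \emph{front} with a three-case analysis (a non-positive prefix sum; $(1,u_1)$ winning; or a threshold $u$ in row $1$ separating winning from losing). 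Your partial-sum route is slicker once the telescoping is seen, but as submitted it is only a setup, not a proof.
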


\begin{definition}\label{dsfdsfsdffsdgsdgsdgdfs}
Denote two sequences $V = \{v_1,\dots, v_{r_0}\}$ and $W= \{w_1,\dots, w_{r_0}\}$ as follows:
\begin{equation}\label{cxncxnnxcnxcxncc}
v_i = f_1(0,i), \ \ \ w_i = 2^{\eta}.
\end{equation}
By Lemma \ref{fct:erergerger1} there exist at least
\begin{equation}\label{cxncxnnxcnxcxnc}
\sum_{i=1}^{r_0} (v_i - w_i) = f_1 - 2^{\eta}r_0
\end{equation}
winning pairs $(i,j)$ w.r.t. $(V,W)$ (see Definition \ref{def: dkjfjrglrlttrg1}).
The equality follows since $\sum_{i=1}^{r_0} v_i = f_1$ and $\sum_{i=1}^{r_0} w_i = 2^{\eta}r_0$.
Note that there is a injection of the set of the winning pairs to the set of appearances of $1$ in the stream. To see that, consider a winning pair $(i,j)$ where $1\le i\le r_0$ and $1\le j\le v_i$.

Let $J$ be the column in the $i$-th row of $M_0$ where $1$ appears for the $j$-th time (in the $i$-th row).
Let $s = (i-1)t_0+J$. It is not hard to see that $p_s = 1$ and that two distinct winning pairs produce two distinct elements of the stream.
Let $STEADY$ be a set of all such elements of the stream.
\end{definition}

Below we will show that steady elements do not expire during the games.
In the reminder of the paper we will restrict our analysis to steady elements.
Our assumptions on $f_1$ from Table \ref{table1} imply that $f_1 - 2^{\eta}r_0 > 0.99f_1$.
Therefore a constant fraction of all appearances will not be discarded.

\begin{fact}\label{wdkjwefkjwe}
A steady element will not expire during a $(0,\beta)$-game.
\end{fact}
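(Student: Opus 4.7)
The plan is to read off the required counter lower bound directly from the winning-pair inequality in Definition \ref{def: dkjfjrglrlttrg1} and match it against the threshold $TR=2^{\gamma+\eta-1}$ coming from the $(0,\beta)$-game (recall $\alpha=0$, so $TR=2^{\gamma-\alpha+\eta-1}=2^{\gamma+\eta-1}$). The entire fact is really a direct translation of the combinatorial guarantee that defines ``steady'' into the dynamic guarantee that defines ``not expiring.''

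Concretely, let $p_s$ be a steady element arising from a winning pair $(i,j)$ with respect to the sequences $V=\{v_l\}$, $W=\{w_l\}$ of Definition \ref{dsfdsfsdffsdgsdgsdgdfs}, i.e.\ $v_l=f_1(0,l)$ and $w_l=2^\eta$. Suppose that in the $(0,\beta)$-game a player samples $p_s$ in row $i$ (at the $j$-th occurrence of $1$ in that row). I will argue, for every round $\gamma\ge\beta$, that the player's counter at the end of that round is at least $TR$.

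The first step is to give a deterministic lower bound on the counter after $2^\gamma$ rows have passed. Since the game never overcounts, and since the counter is incremented on every further appearance of $1$ after the sampling event, the counter is at least
\[
(v_i-j)+\sum_{l=i+1}^{i+2^\gamma} v_l \;=\; \sum_{l=i}^{i+2^\gamma} v_l \;-\; j,
\]
because there are $v_i-j$ remaining occurrences of $1$ in row $i$ after position $J$, together with all occurrences in rows $i+1,\dots,i+2^\gamma$. The second step is to feed the winning-pair property into this bound. Setting $h=i+2^\gamma\ge i$ in Definition \ref{def: dkjfjrglrlttrg1} gives $\sum_{l=i}^{i+2^\gamma}(v_l-w_l)\ge j$, equivalently
\[
\sum_{l=i}^{i+2^\gamma} v_l \;\ge\; j+(2^\gamma+1)2^\eta.
\]
Combining the two displays yields $\text{counter}\ge (2^\gamma+1)2^\eta \ge 2^{\gamma+\eta} > 2^{\gamma+\eta-1}=TR$, which is strict, so the counter never drops to $TR$ at any round $\gamma\ge\beta$.

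This argument is uniform over $\gamma$, so the player survives the counter check at every round of the $(0,\beta)$-game, proving the fact. The only delicate bookkeeping is to make sure the lower bound on the counter really does pick up the remaining $v_i-j$ occurrences in the sampling row and all occurrences through row $i+2^\gamma$, which is the natural reading of the game description in Section \ref{F}; beyond this, there is no obstacle, as the algebra reduces to one application of Definition \ref{def: dkjfjrglrlttrg1} at $h=i+2^\gamma$.
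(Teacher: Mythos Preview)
Your proof is correct and follows essentially the same approach as the paper's: both translate the winning-pair inequality $\sum_{l=i}^{h}(v_l-w_l)\ge j$ at $h=i+2^\gamma$ into the counter lower bound $2^{\gamma+\eta}\ge TR$. Your bookkeeping is in fact slightly cleaner than the paper's, since you directly lower-bound the counter by $\sum_{l=i}^{i+2^\gamma}v_l-j$ (including the remaining $v_i-j$ occurrences in row $i$), whereas the paper passes through the weaker sufficient condition $\sum_{l=i+1}^{i+2^\gamma}f_1(0,l)\ge TR$ and then invokes $j\le f_1(0,i)$ somewhat loosely.
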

\ifNOTtalk
\begin{proof}
Let $p_s$ be a steady element. It is sufficient to show that if a player samples $p_s$ then for any $\gamma$ the counter of the player after reading $2^\gamma$ rows is at least $TR$.

Fix $\gamma > 0$ and let $(i,j)$ be the corresponding entry of $M_0$ from Definition \ref{dsfdsfsdffsdgsdgsdgdfs}.
In this case it is sufficient to show that
\begin{equation}\label{dlfndsljfjsdfjksdkjsd}
\sum_{l=i+1}^{i+2^\gamma}f_1(0,i) \ge TR.
\end{equation}

This is indeed true because $(i,j)$ is a winning pair in the sequence $(V,W)$. Thus, by the definition of a winning pair we have
\begin{equation}\label{dlfndsljfjsdfjksdkjsd1}
\sum_{l=i}^{i+2^\gamma}v_l-j \ge \sum_{l=i}^{i+2^\gamma}w_l.
\end{equation}

Next recall that $w_l=2^\eta$ and $v_l = f_1(0,l)$. Thus $(\ref{dlfndsljfjsdfjksdkjsd1})$ becomes
\begin{equation}\label{dlfndsljfjsdfjksdkjsd2}
\sum_{l=i}^{i+2^\gamma}f_1(0,l)-j \ge 2^{\eta+\gamma}.
\end{equation}
Recall  $TR \le 2^{\eta+\gamma}$ for $\alpha=0$ (this follows from the definition of $TR$ in Table \ref{table1}.)
Also by the definition $j\le f_1(0,i)$ and thus $(\ref{dlfndsljfjsdfjksdkjsd2})$ implies $(\ref{dlfndsljfjsdfjksdkjsd})$.

%
%
%
\end{proof}

\begin{fact}\label{wdkjwefkjwe}
Let $\alpha<0$.
A steady element will not expire during an $(\alpha,\beta)$-game.
\end{fact}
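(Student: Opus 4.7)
The plan is to reduce the case $\alpha<0$ to the $\alpha=0$ case that was just handled, exploiting the fact that the matrices $M_\alpha$ and $M_0$ are two different row-reshapings of the \emph{same} underlying stream. Since $t_\alpha = 2^{-\alpha}n^{1-1/k} = 2^{-\alpha}t_0$ and we are in the regime $\alpha<0$, one row of $M_\alpha$ contains exactly $2^{-\alpha}$ consecutive rows of $M_0$. Consequently, reading $2^\gamma$ rows of $M_\alpha$ after the sampling position covers the same stream positions as reading $2^{\gamma-\alpha}$ rows of $M_0$, which is the crucial identity.

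First I would set up notation. Let $p_s$ be a steady element and let $(i_0,j)$ be its coordinates in $M_0$ (so $p_s$ is the $j$-th occurrence of $1$ in row $i_0$ of $M_0$). Let $i_\alpha$ be its row index in $M_\alpha$. Because each row of $M_\alpha$ consists of $2^{-\alpha}$ consecutive rows of $M_0$, the segment of the stream constituting rows $i_\alpha, i_\alpha+1,\dots, i_\alpha+2^\gamma$ of $M_\alpha$ is contained in (in fact, coincides up to end-alignment with) the segment constituting rows $i_0, i_0+1,\dots, i_0+2^{\gamma-\alpha}$ of $M_0$. Therefore the counter of a player that sampled $p_s$ after its $\gamma$-th round in the $(\alpha,\beta)$-game is at least
\begin{equation*}
\sum_{l=i_0+1}^{i_0+2^{\gamma-\alpha}} f_1(0,l) \;+\; \bigl(f_1(0,i_0)-j+1\bigr) \;=\; \sum_{l=i_0}^{i_0+2^{\gamma-\alpha}} f_1(0,l) - j + 1.
\end{equation*}

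Next I would invoke the winning-pair property. Since $p_s$ is steady, $(i_0,j)$ is a winning pair for the sequences $(V,W)$ of Definition \ref{dsfdsfsdffsdgsdgsdgdfs}. Applying $(\ref{sdlnlsdljsdjlskd})$ with the choice $h = i_0 + 2^{\gamma-\alpha}$ and recalling $v_l = f_1(0,l)$, $w_l = 2^\eta$ gives
\begin{equation*}
\sum_{l=i_0}^{i_0+2^{\gamma-\alpha}} f_1(0,l) - j \;\ge\; \sum_{l=i_0}^{i_0+2^{\gamma-\alpha}} 2^\eta \;=\; (2^{\gamma-\alpha}+1)\,2^\eta \;\ge\; 2^{\gamma-\alpha+\eta}.
\end{equation*}
Combining these two inequalities, the counter is bounded below by $2^{\gamma-\alpha+\eta}+1$. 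Since $TR = 2^{\gamma-\alpha+\eta-1}$ by the definition in Table \ref{table3}, the counter exceeds $2\cdot TR > TR$, so the player does not expire in the $\gamma$-th round. As $\gamma\ge \beta$ was arbitrary, the steady element never expires in the $(\alpha,\beta)$-game.

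The only mildly delicate point is the alignment between the two matrices: one must make sure that reading $2^\gamma$ whole rows of $M_\alpha$ starting at $i_\alpha$ does not undercount occurrences of $1$ relative to rows $i_0,\ldots,i_0+2^{\gamma-\alpha}$ of $M_0$. This follows immediately from the nested partition structure ($t_\alpha$ is an integer multiple of $t_0$, so rows of $M_0$ refine rows of $M_\alpha$), hence no interval-splitting issue arises and the argument carries over verbatim from Fact \ref{wdkjwefkjwe} with $\gamma$ replaced by $\gamma-\alpha$. I do not expect any substantive obstacle beyond this bookkeeping.
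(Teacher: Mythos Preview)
Your proposal is correct and takes essentially the same approach as the paper: both reduce to the $\alpha=0$ case by observing that one row of $M_\alpha$ comprises $2^{-\alpha}$ consecutive rows of $M_0$, so $2^\gamma$ rows of $M_\alpha$ cover $2^{\gamma-\alpha}$ rows of $M_0$, and then apply the winning-pair inequality with $h=i_0+2^{\gamma-\alpha}$ to get a counter lower bound of $2^{\eta+\gamma-\alpha}\ge TR$. Your write-up is in fact more careful than the paper's about the alignment bookkeeping (the paper simply asserts the correspondence), and your containment argument---that rows $i_\alpha,\dots,i_\alpha+2^\gamma$ of $M_\alpha$ cover at least rows $i_0,\dots,i_0+2^{\gamma-\alpha}$ of $M_0$---is the right direction for the inequality, even though the segments do not literally coincide at the left endpoint.
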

\begin{proof}
Note that each row of $M_{\alpha}$ corresponds to $2^{-\alpha}$ rows in $M_0$.
Therefore $2^\gamma$ rows in $M_\alpha$ correspond to the $2^{-\alpha}$ rows in $M_0$.
Since $(i,j)$ is a winning pair, we obtain that after reading any $2^\gamma$ rows in $M_\alpha$ the counter is at least
$2^{\eta + \gamma-\alpha}$.

Specifically, if $p_s$ is steady and $(i,j)$ is the corresponding entry in $M_\alpha$ then after reading
rows $i+1,\dots,i+2^\gamma$ in $M_\alpha$ the counter will be at least $2^{\eta + \gamma-\alpha}$.
The fact follows since $TR \le 2^{\eta + \gamma-\alpha}$.

%
\end{proof}

\begin{fact}\label{dsjsdkkjlsdkjsd}
Let $\alpha>0$.
A steady element will not expire during an $(\alpha,\beta)$-game.
\end{fact}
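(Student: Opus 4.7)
My plan is to mirror the $\alpha<0$ proof of the previous fact, now reversing the roles of $M_\alpha$ and $M_0$: since $t_\alpha = 2^{-\alpha}t_0$ for $\alpha>0$, each row of $M_0$ is split into $2^\alpha$ consecutive rows of $M_\alpha$, so $2^\gamma$ rows of $M_\alpha$ contain exactly $2^{\gamma-\alpha}$ rows' worth of $M_0$ stream positions. This is the alignment we need in order to transfer the winning-pair bound of Lemma~\ref{fct:erergerger1} in $M_0$ over to a lower bound on the in-game counter.

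Let $p_s$ be steady, with coordinates $(I,J)$ in $M_0$ (a winning pair for $(V,W)$) and coordinates $(i,j)$ in $M_\alpha$. After the team for row $i$ has read its own row plus $2^\gamma$ additional rows of $M_\alpha$, the counter is at least the number of occurrences of $1$ in the stream segment from $p_s$ through the end of row $i+2^\gamma$ of $M_\alpha$; this segment has length $(2^\gamma+1)t_\alpha - j + 1 \ge 2^{\gamma-\alpha}t_0$. Setting $m := 2^{\gamma-\alpha}$ and assuming $\gamma\ge\alpha$, a short arithmetic check ($(2^\gamma+1)t_\alpha - j \ge m t_0 - J$, since $j\le t_\alpha$ and $J\ge 1$) shows the segment covers every stream position from $p_s$ through the end of row $I+m-1$ of $M_0$.

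Applying the winning-pair inequality at $h = I+m-1$ gives $\sum_{l=I}^{I+m-1} f_1(0,l) \ge J + m\cdot 2^\eta$. Since $p_s$ is the $J$-th occurrence of $1$ in row $I$ of $M_0$, subtracting the $J-1$ earlier occurrences in that row bounds the counter from below by $m\cdot 2^\eta + 1 = 2^{\gamma - \alpha + \eta} + 1 \ge 2\, TR$, so the player does not expire.

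The only subtlety is verifying that $\gamma \ge \alpha$ actually holds in every $(\alpha,\beta)$-game with $\alpha>0$. Reading off Algorithm~\ref{sdjnfsdfsjdfskjd}, the branch with $\alpha>0$ uses $\alpha = u/5$ with $u > 20\eta$, so $\alpha > 4\eta$, and $\beta \ge 0.8u - 0.5\eta - 2 = 4\alpha - 0.5\eta - 2 > \alpha$; since $\gamma\ge\beta$ we conclude $\gamma > \alpha$. The main obstacle, as in the $\alpha<0$ case, is just the boundary alignment between $M_\alpha$ and $M_0$, which the conservative choice $m = 2^{\gamma-\alpha}$ (rather than trying to squeeze in an extra partial row of $M_0$) sidesteps.
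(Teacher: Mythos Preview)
Your proof is correct and follows essentially the same approach as the paper: translate $2^\gamma$ rows of $M_\alpha$ into (a block of) rows of $M_0$ via the relation $t_\alpha = 2^{-\alpha}t_0$, then invoke the winning-pair inequality of Lemma~\ref{fct:erergerger1} to lower-bound the counter. The paper's own argument is terser---it simply notes that $2^\gamma$ rows of $M_\alpha$ contain at least $2^{\gamma-\alpha}-1 \ge 2^{\gamma-\alpha-1}$ full rows of $M_0$ and deduces the counter is at least $2^{\gamma-\alpha-1}\cdot 2^\eta = TR$---whereas you keep the starting row $I$ (partial from $p_s$) and squeeze out a factor of two, obtaining $2^{\gamma-\alpha+\eta}+1 \ge 2\,TR$. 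You also make explicit the verification of $\gamma>\alpha$ from the parameter ranges in Algorithm~\ref{sdjnfsdfsjdfskjd}, which the paper simply asserts.

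One small notational slip worth flagging: you use $J$ both as the column index of $p_s$ in $M_0$ (in the length comparison $(2^\gamma+1)t_\alpha - j \ge m t_0 - J$) and as the occurrence index in the winning-pair inequality $\sum_{l=I}^{I+m-1} f_1(0,l) \ge J + m\cdot 2^\eta$. These are different quantities (the column index is at least the occurrence index), but fortunately each use only needs $J\ge 1$, and the stronger inequality you prove in the length check still implies the segment containment you need. So the argument stands; just be aware the two $J$'s are not the same object.
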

\ifNOTtalk
\begin{proof}
Note that each row of $M_{0}$ corresponds to $2^{\alpha}$ rows in $M_0$.
Therefore reading $2^\gamma$ rows in $M_\alpha$ must include reading of at least $2^{\gamma-\alpha} -1$ full rows in $M_0$ (we do not count at most $2^alpha$ rows in $M_\alpha$ that correspond to the first partial row in $M_0$).
Since we assume that $\gamma>\alpha$ we conclude that the number of rows that will be read in $M_0$ is at least $2^{\gamma-\alpha-1} = TR$.

%
%
%
\end{proof}
\fi

\subsubsection{$F_1 \le C_2n$.}\label{edfdsfsdfadsfasdfadsf}
In this section we will remove the assumption that $F_1 \le C_2n$.
Instead we will show that the second pass of this algorithm allows us to disregard this assumption.
In the first pass we will compute $F_1$ and approximate $F_0$. In the second pass we will subsample the stream using $p={n/F_1}$.
The length of the sample stream is at most $10n$ w.p. $0.9$, so we will choose $C_2 \ge 10$.

See Appendix \ref{asdfadsfasfasf} for the proofs.



\newpage

\section{Martingale Sketches}\label{dsjkfkjsdfkjdsfkjsdfkj}

\subsection{Introduction}
Having established a streaming algorithm which can efficiently compute the heavy hitters of a stream, we present a reduction of the problem of frequency moment approximation to that of finding heavy hitters. In general, this analysis will show that the problem of approximating the sum of am implicit vector is the same problem as finding the heavy elements of that vector, up to a constant factor. As a direct corollary, and using our new heavy hitter algorithm, we obtain a new lowest bound on space required for this problem. The analysis that follows is independent of the previous analysis, and as such the notation, constants, and variables herein do not carry over their previous meanings.

\subsection{Preliminaries}

Recall that a sequence of random variables with finite mean $B = \{b_0,\dots, b_t\}$ is a martingale if for all $i=1,\dots, t:$
\begin{equation}\label{dwdwsf}
E(b_i|b_{i-1}, \dots, b_0) = b_{i-1}.
\end{equation}
Without loss of generality we will assume that $b_0$ is a fixed\footnote{If this is not the case then we can add $b_{-1} = E(b_0)$ and define a new martingale $b'_i = b_{i-1}$.} number. Note that for any $i$:
\begin{equation}\label{dwdwfdfgdsf}
E(b_i) = b_0.
\end{equation}

\begin{definition}
Let $Z=\{Z_1,\dots, Z_t\}$ be a sequence of random variables and $B = \{b_0,\dots, b_t\}$ be a martingale.
Let $u,\epsilon \in (0,1)$ be parameters.
We say that $Z$ is an $(\epsilon, u)$-\textbf{fixing sequence}\footnote{It is possible to replace $0.1$ with any constant $c<1$ and obtain similar results.} with respect to $B$ if for all $i=0\dots t-1$:
\begin{equation}\label{dwdffwsf}
P(|Z_{i+1}+b_{i+1} - b_{i}| \ge \epsilon_i b_i) \le 0.1(1-u)u^i,
\end{equation}
where
\begin{equation}\label{dfdfwdwffdsf}
 \epsilon_i = (0.1(1-u)u^i)\epsilon.
\end{equation}
\end{definition}

\begin{lemma}\label{sdjkkjhdsf}(Fixing Lemma)
Let $Z$ be an $(\epsilon, u)$-fixing sequence w.r.t. martingale $B$.
Define $S= b_t + \sum_{i=1}^t Z_i.$
Then
\begin{equation}\label{dfrerdfwdwsf}
 P(|S - b_0|\ge \epsilon b_0) \le 0.2.
\end{equation}
\end{lemma}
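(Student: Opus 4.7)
The plan is to rewrite $S - b_0$ as a telescoping sum of one-step errors $W_i := Z_{i+1} + b_{i+1} - b_i$, namely
$$S - b_0 = (b_t - b_0) + \sum_{i=1}^t Z_i = \sum_{i=0}^{t-1} W_i,$$
so that the task reduces to bounding $\sum_i |W_i|$ by $\epsilon b_0$ with probability at least $0.8$. I would then decompose the failure set into two bad events and union-bound them, each costing at most $0.1$ in probability.

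The first bad event is that some $|W_i|$ exceeds its fixing threshold. By hypothesis $(\ref{dwdffwsf})$ and a union bound,
$$P\Big(\bigcup_{i=0}^{t-1}\{|W_i| \ge \epsilon_i b_i\}\Big) \le \sum_{i=0}^{t-1} 0.1(1-u)u^i \le 0.1\sum_{i=0}^{\infty}(1-u)u^i = 0.1,$$
so with probability at least $0.9$ every step satisfies $|W_i| < \epsilon_i b_i$. Call this event $G$.

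The second bad event controls the $b_i$'s themselves, since $G$ alone bounds $|W_i|$ only in terms of the random quantity $b_i$. Here I would invoke Doob's maximal inequality on the (non-negative) martingale $B$: since $E[b_t] = b_0$, taking threshold $\lambda = 10 b_0$ gives
$$P\Big(\max_{0 \le i \le t} b_i \ge 10 b_0\Big) \le \frac{b_0}{10 b_0} = 0.1.$$
Call the complementary event $H$. Non-negativity of the $b_i$ is implicit in the definition, since $\epsilon_i b_i$ is used as a one-sided deviation threshold in $(\ref{dwdffwsf})$.

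Finally, on $G \cap H$, the triangle inequality together with the geometric estimate $\sum_{i \ge 0} 0.1(1-u)u^i \le 0.1$ from $(\ref{dfdfwdwffdsf})$ yields
$$|S - b_0| \le \sum_{i=0}^{t-1} |W_i| < \sum_{i=0}^{t-1} \epsilon_i b_i \le \Big(\max_i b_i\Big)\sum_{i=0}^{t-1} \epsilon_i < (10 b_0)(0.1\epsilon) = \epsilon b_0,$$
so $P(|S - b_0| \ge \epsilon b_0) \le P(\bar G) + P(\bar H) \le 0.2$. The only real obstacle is the simultaneous (over all $i$) control of the $b_i$'s: a per-coordinate Markov application would accumulate a union bound of order $0.1 t$ and be useless, so Doob's martingale inequality is precisely what makes the telescoping argument close.
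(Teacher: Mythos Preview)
Your proof is correct and shares the paper's overall skeleton: telescope $S-b_0=\sum_{i=0}^{t-1}W_i$, define a first bad event ``some $|W_i|\ge\epsilon_i b_i$'' (probability $\le 0.1$ by the fixing hypothesis and a union bound), define a second bad event controlling the random $b_i$'s (probability $\le 0.1$), and finish by a union bound.

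The difference lies in the second bad event. You invoke Doob's maximal inequality to get $\max_i b_i<10b_0$, then bound $\sum_i\epsilon_i b_i\le(\max_i b_i)\sum_i\epsilon_i<\epsilon b_0$. The paper instead applies Markov's inequality \emph{once} to the scalar random variable $\sum_{i=0}^{t-1}\epsilon_i b_i$: since $E[b_i]=b_0$ for every $i$, linearity gives $E\bigl[\sum_i\epsilon_i b_i\bigr]=b_0\sum_i\epsilon_i\le 0.1\epsilon b_0$, so $P\bigl(\sum_i\epsilon_i b_i\ge\epsilon b_0\bigr)\le 0.1$ directly. Your closing remark that ``a per-coordinate Markov application would accumulate a union bound of order $0.1t$'' is true, but the paper sidesteps this not via Doob but by the simpler observation that one Markov on the \emph{weighted sum} suffices. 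Both arguments implicitly use $b_i\ge 0$ (yours for Doob, the paper's for Markov), which you correctly flag as implicit in the threshold $\epsilon_i b_i$. The paper's route is slightly more elementary; yours gives the stronger intermediate conclusion that all $b_i$ are simultaneously bounded, which is not needed here but could be useful elsewhere.
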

\ifNOTtalk
\begin{proof}
Note that $b_t = b_0 + \sum_{i=0}^{t-1} (b_{i+1}-b_{i})$ and thus by definition of $S$:
$$
S = b_t + \sum_{i=1}^t Z_i =  b_0 + \sum_{i=0}^{t-1} (b_{i+1}-b_{i}+Z_{i+1}).
$$
Therefore $S - b_0 = \sum_{i=0}^{t-1} (b_{i+1}-b_{i}+Z_{i+1})$ and
\begin{equation}\label{dfrsddserdfwsdsdfdwsf}
|S-b_0| = |\sum_{i=0}^{t-1} (b_{i+1}-b_{i}+Z_{i+1})| \le \sum_{i=0}^{t-1} |b_{i+1}-b_{i}+Z_{i+1}|,
\end{equation}
\begin{equation}\label{dfrsdddsrdfwddsdswsf}
P(|S-b_0|\ge \epsilon b_0)\le P(\sum_{i=0}^{t-1} |b_{i+1}-b_{i}+Z_{i+1}|\ge \epsilon b_0).
\end{equation}

Let $\{X_i\}_{i=0}^{t-1}$ and $\{Y_i\}_{i=0}^{t-1}$ be two sequences of random variables.
If $\sum_{i=0}^{t-1} X_i \ge \sum_{i=0}^{t-1} Y_i$ then there exists at least one $i$ such that $X_i \ge Y_i$.
Therefore, $P(\sum_{i=0}^{t-1} X_i \ge \sum_{i=0}^{t-1} Y_i) \le P(\cup_{i=0}^{t-1}(X_i \ge Y_i))$. Applying the inequality with $X_i = |b_{i+1}-b_{i}+Z_{i+1}|$ and $Y_i = \epsilon_i b_i$ we obtain:
\begin{equation}\label{dfdserdfwdwsf}
P(\sum_{i=0}^{t-1} |b_{i+1}-b_{i}+Z_{i+1}|\ge \sum_{i=0}^{t-1} \epsilon_i b_i)\le P(\bigcup_{i=0}^{t-1}(|b_{i+1}-b_{i}+Z_{i+1}|\ge \epsilon_i b_i))
\end{equation}
By union bound:
\begin{equation}\label{dwsf}
P(\bigcup_{i=0}^{t-1}(|b_{i+1}-b_{i}+Z_{i+1}|\ge \epsilon_i b_i)) \le \sum_{i=0}^{t-1}P(|b_{i+1}-b_{i}+Z_{i+1}|\ge \epsilon_i b_i).
\end{equation}
By applying $(\ref{dwdffwsf})$:
\begin{equation}\label{dwdfdfsf}
\sum_{i=0}^{t-1}P(|b_{i+1}-b_{i}+Z_{i+1}|\ge \epsilon_i b_i) \le 0.1(1-u)\sum_{i=0}^{t-1} u^i = 0.1(1-u^t),
\end{equation}
and since $0 < u < 1$,
\begin{equation}\label{ddeflkewlrflkerf}
0.1(1-u^t) \le 0.1.
\end{equation}
Thus, by $(\ref{dfdserdfwdwsf}),(\ref{dwsf}), (\ref{dwdfdfsf})$ and $(\ref{ddeflkewlrflkerf})$:
\begin{equation}\label{dwsfddssd}
P(\sum_{i=0}^{t-1} |b_{i+1}-b_{i}+Z_{i+1}|\ge \sum_{i=0}^{t-1} \epsilon_i b_i) \le 0.1.
\end{equation}
Let us bound the total error:
\begin{equation}\label{ddssdwsfddssd}
E(\sum_{i=0}^{t-1} \epsilon_i b_i)= \sum_{i=0}^{t-1} \epsilon_i b_0  = \epsilon b_0\sum_{i=0}^{t-1} 0.1(1-u)u^i \le 0.1\epsilon b_0.
\end{equation}
Here the first equality follows from $(\ref{dwdwfdfgdsf})$, the second equation follows from $(\ref{dfdfwdwffdsf})$ and the last inequality
follows from $(\ref{ddeflkewlrflkerf})$.
Thus, by the Markov inequality:
\begin{equation}\label{ddsdfdsdwsffddffdfdddssd}
P(\sum_{i=0}^{t-1} \epsilon_i b_i \ge \epsilon b_0) \le 0.1.
\end{equation}

Let $X,C,A$ be three random variable. Since $(X < C) \cap (C < A) \implies X < A$, by the contrapositive, $X \ge A \implies (X \ge C) \cup (C \ge A)$. Thus
\begin{equation}\label{dfljjnjdfgljdfdglkjfdgljdfd}
P(X \ge A) \le P(X \ge C) + P(C \ge A)
\end{equation}
Let us apply $(\ref{dfljjnjdfgljdfdglkjfdgljdfd})$ with the following variables
$X = \{\sum_{i=0}^{t-1}|b_{i+1} - b_i + Z_{i+1}|\}, \ A = \epsilon b_0, C = \left(\sum_{i=0}^{t-1} \epsilon_i b_i\right)$.
\begin{equation}\label{ljlerfljdsjgjegkj}
P(\sum_{i=0}^{t-1}|b_{i+1} - b_i + Z_{i+1}| \ge \epsilon b_0) \le P(\sum_{i=0}^{t-1}|b_{i+1} - b_i + Z_{i+1}| \ge \left(\sum_{i=0}^{t-1} \epsilon_i b_i\right)) + P(\left(\sum_{i=0}^{t-1} \epsilon_i b_i\right) \ge \epsilon b_0)\le 0.2.
\end{equation}
Here the first inequality follows from $(\ref{dfljjnjdfgljdfdglkjfdgljdfd})$ and the last inequality follows from $(\ref{dwsfddssd})$ and  $(\ref{ddsdfdsdwsffddffdfdddssd})$.
Combining $(\ref{ljlerfljdsjgjegkj})$ with $(\ref{dfrsdddsrdfwddsdswsf})$ gives the result.

\end{proof}
\fi

\begin{definition}\label{weewrfwefwe}
Let $C=\{c_0,\dots, c_t\}$ be a sequence of random variables and let $u\in (0,1)$ be a constant.
We say that $C$ is a $u$-\textbf{geometric} sequence if for all $i=1,\dots, t$:
\begin{equation}\label{dwddfdfgcfdsvsdffwsf}
P(c_i \ge u^ic_0) \le 0.1(1-u)u^i.
\end{equation}
\end{definition}

\begin{lemma}\label{sdjkkjfsdssfdhdffdsf}
Let $C$ be a $u$-\textbf{geometric} sequence.
Then
\begin{equation}\label{dfrsdffddffdfderdfwdwsf}
 P(\sum_{i=1}^tc_i \ge {c_0\over 1-u}) \le 0.1.
\end{equation}
\end{lemma}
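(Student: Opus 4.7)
The plan is to prove this by a straightforward union bound combined with the contrapositive of summing geometric series. The key observation is that if every $c_i$ (for $i \ge 1$) is below its threshold $u^i c_0$, then the total sum is automatically bounded by a geometric series: $\sum_{i=1}^t c_i < c_0 \sum_{i=1}^t u^i \le c_0 \cdot \frac{u}{1-u} < \frac{c_0}{1-u}$. Contrapositively, the event $\{\sum_{i=1}^t c_i \ge c_0/(1-u)\}$ is contained in the event $\bigcup_{i=1}^t \{c_i \ge u^i c_0\}$.

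From there I would just apply union bound together with the defining inequality $(\ref{dwddfdfgcfdsvsdffwsf})$ from Definition \ref{weewrfwefwe} to obtain
\[
P\!\left(\bigcup_{i=1}^t \{c_i \ge u^i c_0\}\right) \le \sum_{i=1}^t 0.1(1-u)u^i \le 0.1(1-u)\cdot \frac{u}{1-u} = 0.1 u \le 0.1,
\]
which yields the claim. There is no real obstacle here; the only subtlety is making sure the geometric-series bound is strict enough that the complement event implies $\sum_{i=1}^t c_i < c_0/(1-u)$ rather than merely $\le$, but this is immediate since the inequalities $c_i < u^i c_0$ are strict and the sum is finite.
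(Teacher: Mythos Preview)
Your proposal is correct and matches the paper's own proof essentially line for line: the paper also argues that $\bigcap_{i=1}^t\{c_i < u^ic_0\} \subseteq \{\sum_{i=1}^t c_i < c_0/(1-u)\}$, takes the contrapositive, and then applies the union bound with $(\ref{dwddfdfgcfdsvsdffwsf})$ to get the $0.1$ bound. Your observation that the final bound is actually $0.1u \le 0.1$ is a minor sharpening of the paper's computation, but the argument is the same.
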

\ifNOTtalk
\begin{proof}
Consider three events $A = (\cap_{i=1}^t (c_i < u^ic_0))$, $B=(\sum_{i=1}^t c_i < c_0\sum_{i=1}^t u^i )$ and $C=(\sum_{i=1}^t c_i < c_0(1-u)^{-1})$.
Then
\begin{equation}\label{dwddfdssdsdddfgffwsf}
A \subseteq B \subseteq C.
\end{equation}
Thus, $P(\bar{C}) \le P(\bar{A})$:
\begin{equation}\label{dwdddsfdsfdffdssdsdfgfdddfgffwsf}
P(\sum_{i=1}^t c_i \ge c_0(1-u)^{-1}) \le P(\bigcup_{i=1}^t (c_i \ge u^ic_0)),
\end{equation}
By union bound and $(\ref{dwddfdfgcfdsvsdffwsf})$:
\begin{equation}\label{dwdddsfdsfdfsdfdffgfdssdsdddfgffwsf}
P(\bigcup_{i=1}^t (c_i \ge u^ic_0)) \le  \sum_{i=1}^t P(c_i \ge u^ic_0) \le 0.1(1-u)\sum_{i=1}^tu^i\le 0.1.
\end{equation}
Combining $(\ref{dwdddsfdsfdffdssdsdfgfdddfgffwsf})$ and $(\ref{dwdddsfdsfdfsdfdffgfdssdsdddfgffwsf})$ we obtain the result.
\end{proof}
\fi

\begin{theorem}(Fixing Theorem)\label{kjdkhdhsa}
Let $B$ be a martingale sequence of random variables and let $Z$ be an $(\epsilon,u)$-fixing sequence w.r.t. $B$. Suppose that the value of $b_0$ is unknown, but we are given $b_t$. Also, suppose that $c_j$ is the cost of computing $Z_j$ for $j \in \{1, 2, \ldots, t\}$.
Let $c_0$ be a parameter and let $C = \{c_0, c_1, \ldots, c_t\}$.

If $C$ is $u$-geometric, then the following statement is true w.p. at least $0.7$.
The value of $b_t + \sum_{i=1}^t Z_i$ is a $(1\pm \epsilon)$-approximation of $b_0$ and the total cost of $Z$ is at most $(1-u)^{-1}c_0$.
\end{theorem}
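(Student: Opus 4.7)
The plan is to combine the two preparatory lemmas (Lemma \ref{sdjkkjhdsf} and Lemma \ref{sdjkkjfsdssfdhdffdsf}) via a simple union bound. Lemma \ref{sdjkkjhdsf} already tells us that since $Z$ is an $(\epsilon,u)$-fixing sequence with respect to the martingale $B$, the quantity $S := b_t + \sum_{i=1}^t Z_i$ satisfies $P(|S-b_0|\ge \epsilon b_0) \le 0.2$, which is exactly the approximation guarantee we need for the first conclusion. Lemma \ref{sdjkkjfsdssfdhdffdsf}, applied to the $u$-geometric sequence $C$, tells us that $P(\sum_{i=1}^t c_i \ge c_0/(1-u)) \le 0.1$, which is exactly the cost bound we need for the second conclusion.

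My proof would therefore proceed in three short steps. First, define the two bad events $E_1 = \{|S - b_0|\ge \epsilon b_0\}$ and $E_2 = \{\sum_{i=1}^t c_i \ge c_0/(1-u)\}$. Second, invoke the two lemmas verbatim to conclude $P(E_1)\le 0.2$ and $P(E_2)\le 0.1$. Third, apply the union bound to get $P(E_1\cup E_2)\le 0.3$, so that the complementary good event, on which both the $(1\pm\epsilon)$-approximation and the cost bound hold simultaneously, occurs with probability at least $0.7$.

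There is essentially no obstacle: the real technical work lives inside Lemma \ref{sdjkkjhdsf} (where the telescoping of the martingale increments and the carefully decaying error budget $\epsilon_i = 0.1(1-u)u^i\,\epsilon$ are used so that $\sum \epsilon_i b_i \le \epsilon b_0$ in expectation and a Markov step completes the argument) and Lemma \ref{sdjkkjfsdssfdhdffdsf} (where the event-containment $\bigcap_i \{c_i < u^i c_0\}\subseteq \{\sum c_i < c_0/(1-u)\}$ together with the geometric decay handles the cost). Once those are in hand, the Fixing Theorem is merely their conjunction, and the only thing to be slightly careful about is that the approximation event and the cost event are both measurable with respect to the same underlying probability space on which $B$, $Z$, and $C$ are jointly defined, so a single union bound is legitimate.
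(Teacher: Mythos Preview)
Your proposal is correct and is essentially identical to the paper's own proof: define $S = b_t + \sum_{i=1}^t Z_i$, invoke Lemma~\ref{sdjkkjhdsf} and Lemma~\ref{sdjkkjfsdssfdhdffdsf} to bound the two bad events by $0.2$ and $0.1$ respectively, and take a union bound to obtain probability at least $0.7$ for the conjunction.
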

\begin{proof}
We will apply Lemmas \ref{sdjkkjhdsf} and \ref{sdjkkjfsdssfdhdffdsf}.
The total cost of computing $Z$ is $\sum_{i=1}^t c_i$.
Denote $S= b_t + \sum_{i=1}^t Z_i.$ By $(\ref{dfrsdffddffdfderdfwdwsf})$ and $(\ref{dfrerdfwdwsf})$:
$$
 P((|S - b_0|\ge \epsilon b_0) \cup (\sum_{i=1}^t c_i \ge (1-u)^{-1}c_0)) \le 0.2 + 0.1 = 0.3.
$$
\end{proof}

\subsection{Reduction from $L_1$ to Heavy Elements}

\begin{definition}\label{deddsfdsfsdf}
Let $t = O(\log n)$.  Let $H$ be a $t\times n$ matrix with the random entries $h_{i,j}$ satisfying the following properties
\begin{enumerate}
\item $h_{i,j} \in \{0,1\}$,
\item $E(h_{i,j}) = 0.5$,
\item $\{h_{i,j}\}_{j=1}^n$ are pairwise independent (for a fixed $i$),
\item The rows of $H$ are independent.
\end{enumerate}
\end{definition}

We call such $H$ a \textbf{domain-sampling} matrix\footnote{Note that such matrices (or essentially equivalent) have been widely used in streaming and related areas.}. Let $\beta$ be a fixed vector, $\beta = \{\beta_1,\dots, \beta_n\}$.
Define matrix $V$ as follows.
\begin{equation}\label{sampleone}
v_{1,j} = 2\beta_jh_{1,j},
\end{equation}
\begin{equation}\label{sampletwo}
v_{i,j} = 2v_{i-1, j}h_{i,j}  \text{ for } i > 1.
\end{equation}

Let $V_i$ and $H_i$  be the $i$-th rows of $V$ and $H$ respectfully.
In other words, $V_i$ represents the entries of $\beta$ that have been sampled (and scaled up by a factor of $2^i$), with non-sampled entries replaced with $0$.
In particular, the following property follows from the definitions:
\begin{equation}\label{weffljwelkjrrwkjef}
v_{i,j} = 2^{i}\left(\prod_{l=1}^{i} h_{l,j} \right)\beta_j.
\end{equation}
Define new random variables
\begin{equation}
b_0 = |\beta|
\end{equation}
and
\begin{equation}\label{sderertertretfsdsdg}
b_{i} = |V_i|.
\end{equation}

Direct computations imply that
\begin{equation}\label{gfbjgfdbljfdlgjbghlj}
b_{i+1} = 2\sum_{i=1}^n h_{i+1,j}v_{i,j},
\end{equation}
or, more succinctly:
\begin{equation}\label{gfbjgfdbljfdlgtythyhjbghlj}
b_{i+1} = 2\langle V_i, H_{i+1}\rangle,
\end{equation}
where $\langle X,Y\rangle$ indicates the inner product of vectors $X$ and $Y$.

It follows that $B = \{b_0,\dots, b_t\}$ is a martingale.
\begin{fact}\label{wfnldjfljdfljwdf}
$B = \{b_0,\dots, b_t\}$ is a martingale.
\end{fact}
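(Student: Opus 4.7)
The plan is to verify the defining property $E(b_{i+1}\mid b_i,\dots,b_0)=b_i$ directly from the construction of $V$. The key observation is that $V_i$ (and hence $b_i$) is a deterministic function of the first $i$ rows of $H$, namely $H_1,\dots,H_i$, together with the fixed vector $\beta$, while the coordinates $\{h_{i+1,j}\}_j$ are independent of $H_1,\dots,H_i$ by the row-independence assumption in Definition~\ref{deddsfdsfsdf}. So conditioning on the sigma-algebra generated by $(b_0,\dots,b_i)$ is implied by conditioning on the richer sigma-algebra generated by $(H_1,\dots,H_i)$, and it suffices to compute $E(b_{i+1}\mid H_1,\dots,H_i)$ and check that it equals $b_i$; the desired martingale property then follows by the tower property.

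The computation itself is essentially one line given the identity (\ref{gfbjgfdbljfdlgjbghlj}), which rewrites $b_{i+1}$ (for $i\ge 1$; the $i=0$ case follows by the same reasoning from (\ref{sampleone})) as $2\sum_{j=1}^n h_{i+1,j}v_{i,j}$. Since each $v_{i,j}$ is measurable with respect to $H_1,\dots,H_i$, and each $h_{i+1,j}$ is independent of this collection with $E(h_{i+1,j})=1/2$, linearity of conditional expectation gives
\begin{equation*}
E(b_{i+1}\mid H_1,\dots,H_i)=2\sum_{j=1}^n v_{i,j}\,E(h_{i+1,j})=\sum_{j=1}^n v_{i,j}=b_i,
\end{equation*}
where the last equality uses that $v_{i,j}=2^i(\prod_{l\le i}h_{l,j})\beta_j$ has the same sign as $\beta_j$, so $|V_i|=\sum_j v_{i,j}$ when $\beta$ has non-negative entries (and otherwise one runs the same computation with $|v_{i,j}|$, absorbing $|h_{i+1,j}|=h_{i+1,j}$).

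For the base case $i=0$, note $b_0=|\beta|$ is a deterministic constant, and $E(b_1)=E(\sum_j 2\beta_j h_{1,j})=\sum_j\beta_j=b_0$ by (\ref{sampleone}); this both confirms (\ref{dwdwfdfgdsf}) and anchors the induction. I do not expect any obstacle here: the construction is tailored so that each new row of $H$ acts as an unbiased $\{0,2\}$-valued scaling on the previous $V_{i}$, and pairwise independence within a row is not even needed for the martingale property (only row-wise independence and the correct mean matter). The pairwise independence will become important later for controlling second moments and verifying the $(\epsilon,u)$-fixing property via Theorem~\ref{kjdkhdhsa}, but not for Fact~\ref{wfnldjfljdfljwdf} itself.
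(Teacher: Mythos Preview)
Your proof is correct and follows essentially the same approach as the paper: both arguments rest on the row-independence of $H$ (Definition~\ref{deddsfdsfsdf}, item 4) together with $E(h_{i+1,j})=1/2$, applied to the identity $b_{i+1}=2\sum_j h_{i+1,j}v_{i,j}$. The only presentational difference is that you condition on the richer $\sigma$-algebra $\sigma(H_1,\dots,H_i)$ and invoke the tower property, whereas the paper fixes an atom $\Psi=\{b_s=a_s:s\le i-1\}$ and verifies $E(b_i\one_\Psi)=a_{i-1}P(\Psi)$ directly; these are two standard packagings of the same computation. Your parenthetical about negative entries of $\beta$ is harmless but unnecessary here, since in the application $\beta_j=f_j^k\ge 0$.
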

\ifNOTtalk
\begin{proof}

First observe that each random variable $b_i$ has finite range, and, therefore,
finite mean. We have to show that $E(b_i| b_{i-1},\dots,b_{0}) = b_{i-1}$ for any $i\ge 1$.
Below we consider the case $i>1$. The case $i=1$ is similar.

For real numbers $a_0,\dots, a_{i-1}$ denote $\Psi = (b_s=a_s, s=0,\dots,i-1)$.
It follows from the definition of conditional expectation that
$$
E(b_i| b_{i-1},\dots, b_0)=  \sum_{\Psi}\frac{E(b_i\one_{\Psi})}{P(\Psi)}\one_{\Psi}\,,
$$
where the sum is taken over all events $\Psi,\,P(\Psi)>0$. (It is clear that the set of such $\Psi$ is finite.)
So, it is enough to show that $E(b_i\one _{\Psi})= a_{i-1}P(\Psi)$ for any such $\Psi$.

We have, according to $(\ref{gfbjgfdbljfdlgjbghlj})$,
$$
E(b_i\one _{\Psi})= 2\sum_{j=1}^n E(h_{i,j}v_{i-1,j}\one_{\Psi}).
$$
According to (4) from Definition \ref{deddsfdsfsdf} and $(\ref{weffljwelkjrrwkjef})$, the random variables $h_{i,j}, 1\le j\le n$, are independent of
$v_{i-1}\one_{\Psi}$. So,
$$
E(h_{i,j}v_{i-1,j}\one_{\Psi}) = E(h_{i,j})E(v_{i-1,j}\one_{\Psi})= \frac12 E(v_{i-1,j}\one_{\Psi})\,.
$$
From here
$$
E(b_i\one _{\Psi})= \sum_{j=1}^n E(v_{i-1,j}\one_{\Psi}) = E\left(\sum_{j=1}^n v_{i-1,j}\one_{\Psi} \right) =
E\left[\left(\sum_{j=1}^n v_{i-1,j}\right)\one_{\Psi} \right]\,.
$$
But,  according to $(\ref{sderertertretfsdsdg})$,  $\sum_{j=1}^n v_{i-1,j} = |V_{i-1}|= b_{i-1}= a_{i-1}$ on the event $\Psi$. Therefore,
$$
E(b_i\one _{\Psi})= a_{i-1}P(\Psi)\,.
$$

\end{proof}
\fi

Define the fixing sequence w.r.t. $B$ as follows.
Let $\epsilon, u$ be parameters (we will define $u$ in the next section).
\noindent
Define
\begin{equation}\label{sdfsdsdg}
\alpha_i = 0.1\epsilon_i^2 u^i(1-u),
\end{equation}
where we use previously defined $(\ref{dfdfwdwffdsf})$
$$
 \epsilon_i = (0.1(1-u)u^i)\epsilon.
$$
Let $S_i$ be a (possibly empty) subset of $\{1,2,\dots, n\}$ such that:
\begin{equation}\label{s set def}
\{j: v_{i,j} \ge \alpha_i |V_i|\} \subseteq S_i.
\end{equation}
Also let $S_0$ be defined similarly with
\begin{equation}
\{j: \beta_j \ge \alpha_0 |\beta|\} \subseteq S_0
\end{equation}
Finally, let us define our sequence $Z = \{Z_1,\dots, Z_t\}$ as:
\begin{equation}\label{adfjsdfkjsdlfkj}
Z_1 = \sum_{j\in S_1} (1-2h_{i,j})\beta_j\\
\end{equation}
\begin{equation}\label{sderertertretfsdsdeeefwwrwg}
Z_{i+1} = \sum_{j\in S_i} (1-2h_{i+1,j})v_{i,j}.
\end{equation}

\begin{lemma}\label{fixingsequence}
$Z$ is a fixing sequence with respect to $B$.
\end{lemma}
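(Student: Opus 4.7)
The plan is to show, by direct computation, that $Z_{i+1} + b_{i+1} - b_i$ collapses to a zero-mean random sum over the \emph{complement} of $S_i$, and then to apply Chebyshev's inequality using the defining property of $S_i$ that controls the maximum entry of $V_i$ outside $S_i$. The whole argument is conditional on the first $i$ rows of $H$ (equivalently, on $V_i$ and $S_i$), after which the randomness in $h_{i+1,j}$ is fresh and pairwise independent.

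First, using $(\ref{gfbjgfdbljfdlgjbghlj})$ together with $b_i = |V_i| = \sum_j v_{i,j}$, I would write
\begin{equation*}
b_{i+1} - b_i \;=\; \sum_{j=1}^n (2h_{i+1,j}-1)\, v_{i,j}.
\end{equation*}
Adding the definition $(\ref{sderertertretfsdsdeeefwwrwg})$ of $Z_{i+1}$, the contributions from $j \in S_i$ cancel exactly, leaving
\begin{equation*}
Z_{i+1} + b_{i+1} - b_i \;=\; \sum_{j \notin S_i} (2h_{i+1,j}-1)\, v_{i,j}.
\end{equation*}
(The case $i=0$ is handled identically from $(\ref{adfjsdfkjsdlfkj})$ with $\beta$ in place of $V_0$.)

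Next, condition on the first $i$ rows of $H$, so that $V_i$ and $S_i$ are fixed. By property (4) of Definition \ref{deddsfdsfsdf} the variables $\{h_{i+1,j}\}_{j=1}^n$ are independent of this conditioning, each satisfies $E[2h_{i+1,j}-1]=0$ and $\mathrm{Var}(2h_{i+1,j}-1)=1$, and by property (3) they are pairwise independent. Hence the conditional variance is
\begin{equation*}
\mathrm{Var}\Bigl(\sum_{j\notin S_i}(2h_{i+1,j}-1)v_{i,j}\,\Big|\,V_i\Bigr) \;=\; \sum_{j \notin S_i} v_{i,j}^2 \;\le\; \max_{j\notin S_i} v_{i,j} \cdot \sum_{j\notin S_i} v_{i,j} \;\le\; \alpha_i |V_i| \cdot |V_i| \;=\; \alpha_i b_i^2,
\end{equation*}
where the last inequality uses the defining property $(\ref{s set def})$: every coordinate outside $S_i$ is at most $\alpha_i |V_i|$ in magnitude.

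Finally, Chebyshev's inequality applied to the zero-mean sum gives
\begin{equation*}
P\bigl(|Z_{i+1}+b_{i+1}-b_i| \ge \epsilon_i b_i \,\big|\, V_i\bigr) \;\le\; \frac{\alpha_i b_i^2}{\epsilon_i^2 b_i^2} \;=\; \frac{\alpha_i}{\epsilon_i^2} \;=\; 0.1(1-u)u^i,
\end{equation*}
by the choice $(\ref{sdfsdsdg})$ of $\alpha_i$. Taking expectation over $V_i$ removes the conditioning and yields $(\ref{dwdffwsf})$, which is precisely the $(\epsilon,u)$-fixing condition. The only subtle step is the cancellation in the first display, which relies on the explicit form of $Z_{i+1}$ designed to kill the contributions of the coordinates in $S_i$; everything after that is a routine second-moment computation.
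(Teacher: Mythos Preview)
Your proof is correct and follows essentially the same route as the paper: both compute $Z_{i+1}+b_{i+1}$ (equivalently $Z_{i+1}+b_{i+1}-b_i$), observe that the $S_i$-terms cancel leaving a pairwise-independent zero-mean sum over $j\notin S_i$, bound its variance by $\alpha_i b_i^2$ via the defining property of $S_i$, and finish with Chebyshev and the choice of $\alpha_i$. The only cosmetic difference is that the paper packages the second-moment step as a standalone observation about an abstract vector $Y$ and then substitutes, whereas you work directly with $V_i$ and condition on the first $i$ rows of $H$.
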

\ifNOTtalk
\begin{proof}
Before proving the lemma let us make the following observation.
Let $Y \in R^{n}$ be a (random or fixed) vector. Let $\alpha, \epsilon$ be parameters independent of $Y$. Let $S$ be a (possibly empty) subset of $\{1,2,\dots, n\}$ such that $\{j: y_j \ge \alpha |Y|\} \subseteq S$.
Let $Q \in \{0,1\}^n$ be a random vector with  pairwise independent entries $q_i$ such that $E(q_i) = 0.5$ and such that $Q$ is independent of $Y$ and $S$. Consider
\begin{equation}\label{dfsdefdsfdsfdssdsfsdfsdf}
X = \sum_{j\in S} y_j + 2\sum_{j\notin S} q_jy_j.
\end{equation}
Let $Y$ be fixed. Then $E(X) = |Y|$ and
$$
Var(X) = 4\sum_{i\notin S} y^2_iVar(q_i) = \sum_{i\notin S} y_i^2
$$
$$
\sum_{i\notin S} y_i^2 \le \sum_{i\notin S}\alpha|Y|y_i = \alpha|Y|\sum_{i\notin S}y_i \le \alpha|Y|^2
$$
Thus, by Chebyshev inequality:
\begin{equation}\label{dfsdsfsdfsdf}
P(\big|X - |Y|\big| \ge \epsilon|Y|) \le {\alpha \over \epsilon^2}.
\end{equation}
Integrating over $R^n$ with respect to the distribution of $Y$ we obtain $(\ref{dfsdefdsfdsfdssdsfsdfsdf})$ for random $Y$.

By definition $(\ref{dwdffwsf})$, we have to show that
$$
P(|Z_{i+1}+b_{i+1} - b_{i}| \ge \epsilon_i b_i) \le 0.1(1-u)u^i,
$$
We begin by observing that:
\begin{equation}\label{dwdffwgfhfghfghsf}
Z_{i+1} + b_{i+1} =  \sum_{j\in S_i} (1-2h_{i+1,j})v_{i,j} + \sum_{j=1}^n 2h_{i+1,j}v_{i,j},
\end{equation}
where we use $(\ref{sderertertretfsdsdeeefwwrwg})$ and $(\ref{gfbjgfdbljfdlgjbghlj})$.
After regrouping we obtain:
\begin{equation}\label{dwdffwrtretrettyrgfhfghfghsf}
Z_{i+1} + b_{i+1} = \sum_{j\in S_i} v_{i,j} + 2\sum_{j\notin S_i} h_{i+1,j}v_{i,j}.
\end{equation}
Our goal now is to apply the observation $(\ref{dfsdsfsdfsdf})$. Indeed, let us substitute in the above settings
\begin{equation}\label{dwefregerfgrfgdffwrtretrettyrgfhfghfghsf}
X=Z_{i+1} + b_{i+1}, Y = V_i, Q = H_{i+1}, S = S_i, \alpha = \alpha_i, \epsilon = \epsilon_i,
\end{equation}
(it is easy to check that such substitutions are valid and also that $|Y| = b_i$).  As a result, the equality  $(\ref{dwdffwrtretrettyrgfhfghfghsf})$ becomes the equality $(\ref{dfsdefdsfdsfdssdsfsdfsdf})$. Recall that $(\ref{dfsdefdsfdsfdssdsfsdfsdf})$ implies $(\ref{dfsdsfsdfsdf})$; thus, by reversing the substitutions in $(\ref{dfsdsfsdfsdf})$ we obtain:
\begin{equation}\label{sdererrreerretertretfsdsdg}
P(|Z_{i+1}+b_{i+1} - b_{i}|\ge \epsilon_ib_i) \le {\alpha_i \over \epsilon_i^2}.
\end{equation}
Note that the left side in $(\ref{sdererrreerretertretfsdsdg})$ is equal to the left side in $(\ref{dwdffwsf})$ and thus it remains to bound the right side. To do that, we recall that $\alpha_i$ is defined in $(\ref{sdfsdsdg})$ such that ${\alpha_i \over \epsilon_i^2} = 0.1(1-u)u^i$.
Thus, $(\ref{sdererrreerretertretfsdsdg})$ gives us $(\ref{dwdffwsf})$ for every $i$ and the lemma follows.
\end{proof}

\fi

Since $Z$ is a fixing sequence for $B$ we can approximate $|\beta| = b_0$ by $b_t + \sum_{i=1}^t Z_i$.
The idea is that it is sufficient to compute the fixing sequence $Z$ and $b_m$.
We will prove that it is possible to do so by fixing the cost as well if the cost function is geometric.

\section{Proving Theorem \ref{sdsdkflksdflksdf}}\label{sdjfkjsdfkjskdjfkjsdf}
\begin{proof}

First, we will assume that the $AHE$ algorithm $\mathcal{A}$ is deterministic.
Construct a random $t \times n$ domain-sampling matrix (Definition \ref{deddsfdsfsdf}) $H$.
The space complexity of maintaining $H$ is polylogarithmic and thus it is $o(w)$.
The data stream $D$ and matrix $H$ defines the sequence of vectors $\{V_i\}_{i=0}^t$ using (\ref{sampleone}) and (\ref{sampletwo}) as follows.
$\beta$ is the vector with entries $f_i^k$. $V_i$ is defined as in (\ref{sampleone}) and (\ref{sampletwo}).

For each vector $V_i$ we use algorithm $\mathcal{A}$ to find all heavy elements. In particular, $\mathcal{A}$ will output all
$v_{i,j}$ such that $v_{i,j} \ge \alpha|V_i|$.
To do that we apply $\mathcal{A}$ on the subset of $D_i \subseteq D$ defined by the matrix $H$ as follows.
\begin{equation}\label{ljlsdnvclkjndskjcdskjsdjksdjk}
D_i = \{p_j \in D : \prod_{l=1}^{i} h_{i,p_j} \neq 0\}.
\end{equation}
That is, $D_i$ is a subset of $D$ with all elements that are not zeroed by $H$, up to the $i$-th row.
It is straightforward to see that the entries of $V_i$ are $k$-th powers of the frequencies in $D_i$.
Thus, $\mathcal{A}$ can be used to find the heavy elements in $V_i$.

Let $b_i = |V_i|$, and let $B=\{b_0,b_1,b_2, \ldots b_t\}$.  Note that $b_0 = |V_0| = F_k$.  Construct the sequence $Z$ as defined in (\ref{sderertertretfsdsdeeefwwrwg}).

By Lemma \ref{fixingsequence}, $Z$ is a fixing sequence for $B$.  By (\ref{dfrerdfwdwsf}), $P(|b_t + \sum_{i=1}^t Z_i - b_0| \ge \epsilon b_0) \le 0.2$, which is a $(1 \pm \epsilon)$ approximation of $b_0 = F_k$.
To compute the approximation we only need $b_t$ and $Z_i$ for all $i$.
Note that since $t = O(\log n)$ then with high probability $F_0(V_t) = O(1)$ and therefore we can compute
$b_t = |V_t|$ precisely using $O(\log n)$ bits.
Note that to compute the value of $Z_i$ it is only necessary to know the value of heavy elements for all $V_i$s.
Therefore by applying $\mathcal{A}$ on $D_i$ it is possible to compute $Z_i$.
We conclude that using a sequence of algorithms $\mathcal{A}$ it is possible to approximate $F_k$.

It remains to bound the cost of the algorithm.\footnote{C is the constant from Theorem \ref{wdfkjnwkejfkjwefkjwef}}
Let\footnote{The analysis will work for any $q>0.5$.}
\begin{equation}\label{dfljklsdjfjkjsdfjksd}
q=0.6, u = \left({1\over 2q}\right)^{(1-2/k)\over 3C}.
\end{equation}
Note that $u<1$.
Fix error parameter $\epsilon$ and compute the parameters $\alpha_i$ for $D_i$ according to $(\ref{sdfsdsdg})$.
By $(\ref{welfjlweflkwflwke})$, the cost of computing the heavy element is:
\begin{equation}\label{kwbnfkjwfkjwefjkew}
O({1\over \alpha_i^C}(F_0(D_i))^{1-2/k}).
\end{equation}
Let us bound ${1\over \alpha_i^C}$:
\begin{equation}\label{kwbnfkjwfkjwefjkew}
\alpha_i = 0.1\epsilon_i^2 u^i(1-u) = ((0.1(1-u)u^i)\epsilon)^2 u^i(1-u) = 0.01(1-u)^3\epsilon^2u^{3i}.
\end{equation}
Here the first equality follows from $(\ref{sdfsdsdg})$, the second equality follows from $(\ref{dfdfwdwffdsf})$ and the third equality
follows from direct computations.
It follows that
\begin{equation}\label{fdvjfdkjfdkjfdjk}
\alpha_i^C = (0.01(1-u)^3\epsilon^2)^C u^{3iC} = (0.01(1-u)^3\epsilon^2)^C {1\over (2q)^{i(1-2/k)}}.
\end{equation}
Thus

\begin{equation}\label{kwbnfkjwfkjwefjkew}
{1\over \alpha_i^C}(F_0(D_i))^{1-2/k} \le C_0((2q)^i F_0(D_i))^{1-2/k},
\end{equation}
where $C_0$ is a constant defined as
\begin{equation}\label{kwbnfkjwfkjwefjkew}
C_0 = (0.01(1-u)^3\epsilon^2)^C.
\end{equation}

Note that Definition \ref{deddsfdsfsdf} of matrix $H$ and $(\ref{ljlsdnvclkjndskjcdskjsdjksdjk})$ imply that
\begin{equation}\label{kwbnfkjwfkjwefjkew}
E(F_0(D_i)) = 2^{-i}F_0(D) \le 2^{-i}n.
\end{equation}
Denote sequence $d_i = 2^iF_0(D_i)$.
Then $E(d_i) = d_0$.
Thus, by $(\ref{kwbnfkjwfkjwefjkew})$ we have
\begin{equation}\label{kwbnfkjwfkjwefjkew}
{1\over \alpha_i^C}(F_0(D_i))^{1-2/k} \le C_0((2q)^i F_0(D_i))^{1-2/k} = C_0((2q)^i 2^{-i}d_i)^{1-2/k} =  C_0 (d_iq^i)^{1-2/k}.
\end{equation}

Thus, we can apply Lemma \ref{geometric} with our choice of the random variables $d_i$ and constant $q$ and using $\gamma = 1-2/k$.
We conclude that the costs form a $\theta$-geometric sequence for some $\theta$ that depends only on $q=0.6$
and thus $\theta$ is an absolute constant. Thus, we can conclude by applying Lemma \ref{sdjkkjfsdssfdhdffdsf} that the total cost of the algorithm is \begin{equation}\label{felkmlfkelkfglkefg}
C_0((2q)^i F_0(D_i))^{1-2/k} = C_0 (1-\theta)^{-1}F_0(D)^{1-2/k} = O({1\over \epsilon^{2C}}F_0(D)^{1-2/k}) =  O({1\over \epsilon^{2C}}n^{1-2/k}).
\end{equation}
The theorem is correct when $\mathcal{A}$ is deterministic.

%

Now consider the case when $\mathcal{A}$ is randomized and repeat the above arguments with the following change.
Algorithm $\mathcal{A}$ will be applied on $D_i$ with a probability of error $\delta_i = {1\over 10* 2^i}$.
By the union bound the probability that any instance of the algorithm errs is at most $0.2$.
Note also that the randomness of $H$ is independent of the randomness of the algorithms $\mathcal{A}$.
Thus, the above arguments for the deterministic case conditioned on the event that all instances of the algorithm give correct answers.

The cost of the $i$-th computations will be increased by a factor of $O(i)$ and will become
$$
(iq^i d_i),
$$
where $q = 0.6$. Note that for $i>40$ we have $i(0.6)^i < (0.7)^i$ and thus
$$
(iq^i d_i) \le 40(0.7)^i d_i.
$$
Thus, we can define $q' = 0.7$ and repeat the arguments for the deterministic case.
Therefore the theorem is correct.
\end{proof}

\subsection{Sketches with Geometric Cost}
We will show that it is possible to construct $Z$ using geometric cost in the streaming model.

Let $D = \{d_0,\dots, d_t\}$ be a sequence of random variables,
where $d_0$ is a fixed number and such that $E(d_i) = d_0$. Let $\gamma$ be a parameter and $q$ be a constant such that $0<q<1$.
Define $C = \{c_0,\dots, c_t\}$ as follows:
\begin{equation}\label{sdererrreerretertretfsdsdfdsfdsfdg}
c_0 = (xd_0)^{\gamma};
\end{equation}
\begin{equation}\label{sdererrreerretertretfsdxcdscsdfdsfdsfdg}
c_i = (d_i q^i)^{\gamma},
\end{equation}
where $x$ is a constant that depends on $q$.
\begin{lemma} \label{geometric}
There exist $x$ and $\theta$ that depend only on $q$ and such that $0<\theta <1$ such that $C$ is $\theta$-geometric.
\end{lemma}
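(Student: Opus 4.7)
The plan is to apply Markov's inequality directly to each $d_i$ and then to choose $\theta$ and $x$ (as functions of $q$, given the fixed parameter $\gamma>0$) so that the resulting tail bound matches the right-hand side of (\ref{dwddfdfgcfdsvsdffwsf}) in Definition \ref{weewrfwefwe}. The hypothesis $E(d_i) = d_0$ yields $P(d_i \ge Md_0) \le 1/M$ by Markov for any $M>0$; the task is to express the event $c_i \ge \theta^i c_0$ in this form and extract a suitable geometric decay.

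First I would unwind the definitions (\ref{sdererrreerretertretfsdsdfdsfdsfdg}) and (\ref{sdererrreerretertretfsdxcdscsdfdsfdsfdg}). Since $\gamma>0$, the inequality $c_i \ge \theta^i c_0$ is equivalent, after taking $\gamma$-th roots, to $d_i \ge (\theta^{1/\gamma}/q)^i \cdot x\, d_0$. Markov's inequality then gives
$$
P(c_i \ge \theta^i c_0) \;\le\; \frac{1}{x}\left(\frac{q}{\theta^{1/\gamma}}\right)^i \;=\; \frac{\theta^i}{x}\left(\frac{q}{\theta^{1+1/\gamma}}\right)^i,
$$
and the target is to bound this above by $0.1(1-\theta)\theta^i$.

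Next I would choose $\theta \in (q^{\gamma/(\gamma+1)},1)$; this interval is nonempty because $q<1$. With this choice, $\theta^{1+1/\gamma} > q$, so the ratio $r = q/\theta^{1+1/\gamma}$ is strictly less than $1$ and $r^i$ is decreasing in $i\ge 1$. Hence it suffices to make the resulting bound hold at $i=1$, which amounts to choosing
$$
x \;\ge\; \frac{q/\theta^{1+1/\gamma}}{0.1(1-\theta)}.
$$
Both $\theta$ and $x$ depend only on $q$ (given the fixed $\gamma$), establishing the conclusion.

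The main technical point is the need to absorb the factor $\theta^i$ appearing on the right-hand side of (\ref{dwddfdfgcfdsvsdffwsf}): Markov alone delivers decay at rate $(q/\theta^{1/\gamma})^i$, and we must enforce strict sub-unity of $q/\theta^{1+1/\gamma}$, which is why the lower threshold on $\theta$ is $q^{\gamma/(\gamma+1)}$ rather than the naive $q^\gamma$. Beyond this, no independence or boundedness assumptions on the $d_i$ are needed; Markov's inequality together with the mean-preserving hypothesis $E(d_i)=d_0$ is all that is required.
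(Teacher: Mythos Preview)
Your proof is correct and follows essentially the same approach as the paper: rewrite the event $c_i\ge \theta^i c_0$ as a tail event for $d_i$ and apply Markov's inequality using $E(d_i)=d_0$. The only cosmetic difference is that the paper sets $\theta=q^{\gamma/(\gamma+1)}$ exactly (so that $q/\theta^{1+1/\gamma}=1$ and the Markov bound equals $0.1(1-\theta)\theta^i$ on the nose with $x=10(1-\theta)^{-1}$), whereas you take $\theta$ strictly above this threshold and absorb the residual factor $r^i<1$ into the choice of $x$; either works.
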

\ifNOTtalk
\begin{proof}
Put
\begin{equation}\label{sdeccxzxczxcrecsdfdsvdsdsfg}
\theta= q^{\gamma\over \gamma +1},
\end{equation}
\begin{equation}\label{sderecsdfdsvdsdsfg}
x= 10(1-\theta)^{-1}, \delta = \theta^{1/\gamma}.
\end{equation}
Then we have
\begin{equation}\label{sderecsdcxdsfdsffdsvdsdsfg}
q = \delta\theta, \ \ \ \  q^i = (\delta\theta)^i,
\end{equation}
\begin{equation}\label{sderecsdcxdsfdsffdsvdsdsfg}
{q^i\over x\delta^i} = 0.1\theta^i(1-\theta).
\end{equation}

Further, direct computations imply
\begin{equation}\label{sdererdsdcscsxcrreerretertretfsdsdfdsfdsfdg}
P(c_i \ge c_0\theta^i) =
\end{equation}
\begin{equation}\label{sderecsdsdsdrrreerretertretfsdsdfdsfdsfdg}
P((d_iq^i)^{\gamma} \ge (\delta^{ i}xd_0)^{\gamma}) =
\end{equation}
\begin{equation}\label{sderecsdsdsdrrreerretertretfsdsdfdsfdsfdfdsvdsdsfg}
P(d_i \ge \left({\delta\over q}\right)^{ i}xd_0) \le {q^i\over x\delta^i}.
\end{equation}
The last inequality follows since $E(d_i) = d_0$ and by Markov inequality.

Using $(\ref{sderecsdsdsdrrreerretertretfsdsdfdsfdsfdfdsvdsdsfg})$ and $(\ref{sderecsdcxdsfdsffdsvdsdsfg})$ we conclude that
for each $i$
$$
P(c_i \ge c_0\theta^i) \le 0.1\theta^i(1-\theta).
$$
Therefore, by Definition \ref{weewrfwefwe} $C$ is a $\theta$-geometric sequence and the lemma is correct.
\end{proof}
\fi

\section{Proving Theorem \ref{ekmrmflkerfflkreflker1}}\label{sdljfksjdfjksdfkjsd}

\subsection{A Single Pass}
We now present a proof that it is possible to reduce the number of passes in our algorithm.
The current version of the algorithm requires three passes.
The first pass is needed to compute $F_1$. Using $F_1$ we compute the value of the sampling probability.
During the second pass we sample the stream
and apply the algorithm for heavy elements that has been described in the first part of the paper.
The last pass is needed to compute the exact frequencies of each heavy element.
After these three passes we will apply the Martingale Sketch algorithm to compute the approximation of $F_k$.
To reduce the number of passes to one we will argue that $(1)$ Martingale Sketching can work with approximations instead of precise values and
$(2)$ the algorithm for finding heavy elements can be modified to work without knowing the value of $F_1$.

\subsubsection{Eliminating the First Pass}

\begin{lemma}
The algorithm can be adapted to work without knowledge of $F_1$.
\end{lemma}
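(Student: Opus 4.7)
The plan is to replace the explicit first pass (which computes $F_1$) by an online doubling scheme that dynamically adjusts the subsampling rate as the stream is consumed. Since $F_1$ is just the running count of elements, it is trivially maintained in $O(\log m) = O(\log n)$ bits under the polynomial-gap hypothesis of Theorem \ref{ekmrmflkerfflkreflker1}, so the real work is to make the sampling probability $p = n/F_1$ (from Section \ref{edfdsfsdfadsfasdfadsf}) ``catch up'' to the correct value as the stream grows, without ever knowing $F_1$ in advance.

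Concretely, I would initialize an over-estimate $\hat{F}_1 := C_2 n$ and run the existing second-pass machinery of Algorithm \ref{sdjnfsdfsjdfskjd} with sampling probability $p := n/\hat{F}_1$. In parallel, maintain a counter $\tau$ of the number of stream elements read so far. Whenever $\tau$ exceeds $\hat{F}_1$, perform the following atomic update: set $\hat{F}_1 \leftarrow 2\hat{F}_1$, set $p \leftarrow p/2$, and, independently for every currently stored player (its ID or signature, its counter, the associated reservoir-sampling state from Section \ref{sdfwefwefwefweewf}), retain it with probability $1/2$ and discard it otherwise. All subsequent stream elements are then processed at the new, halved rate.

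The correctness of this scheme follows from the standard fact that the composition of $k$ independent $\mathrm{Bernoulli}(1/2)$ filters is distributionally identical to a single $\mathrm{Bernoulli}(2^{-k})$ filter. Hence, at the end of the stream, when $\hat{F}_1^{\text{final}}$ is within a factor of $2$ of the true $F_1$, the joint state of every data structure is statistically identical to that produced by the three-pass algorithm applied with the correct sampling probability $p = \Theta(n/F_1)$. The guarantees of Theorem \ref{sdfsdfsdfsdfsaldkflkaslkflsda} (and hence Theorem \ref{wdfkjnwkejfkjwefkjwef}) therefore transfer verbatim to the modified algorithm. Under the polynomial-gap assumption there are only $O(\log n)$ doublings, so union-bounding over the ``heavy element survives each halving'' events preserves the constant success probability, since by the assumptions in Table \ref{table1} the heavy element contributes $\Omega(f_1) \gg \log n$ surviving occurrences at the final scale.

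The main obstacle, and the step that deserves the most care, is confirming that the matrix-row structure of the $(\alpha,\beta)$-game remains consistent as $\hat{F}_1$ grows: recall that $r_\alpha = F_1/t_\alpha$ depends on $F_1$, so doubling $\hat{F}_1$ effectively doubles the number of rows expected to appear. I would resolve this by viewing each doubling as simply extending the matrix $M_\alpha$ with additional rows at the halved sampling density; by the independence of the per-row hash functions $g$ in Definition \ref{sdfkljsdfkjksdkjf}, the cumulative distribution over all rows is identical to that of a single-rate execution on the full stream. Finally, the space bound is preserved since at each moment only the currently active structures are stored and their total size matches the $O(n^{1-2/k})$ accounting of Section \ref{wdfnkjwdfskjwfdjkwef}, with only an additive $O(\log n)$ overhead for $\tau$ and $\hat{F}_1$.
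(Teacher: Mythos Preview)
Your proposal has a genuine gap in the claim that the final state is ``statistically identical'' to a single-rate run. The retroactive step you propose---discarding each stored \emph{player} with probability $1/2$---is not the same as retroactively applying a $\mathrm{Bernoulli}(1/2)$ filter to the \emph{stream elements} already processed. The subsampling of Section~\ref{edfdsfsdfadsfasdfadsf} thins the stream itself; this affects (i) the counter values (a retained player's counter was incremented at the old, higher rate and is therefore roughly twice what it would be under the final rate), (ii) the row boundaries of $M_\alpha$ (rows are intervals of $t_\alpha$ elements of the \emph{subsampled} stream, so a row formed at rate $p$ covers half as much of the original stream as one formed at rate $p/2$), and (iii) which element each player sampled in the first place via the reservoir. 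A per-player coin flip addresses none of these, so the thresholds $TR$, the within-team competitions, and the great/bad-row accounting all break. Your last paragraph concedes the row-structure issue but does not resolve it: ``extending the matrix with additional rows at the halved sampling density'' yields a matrix with heterogeneous rows, which is exactly what the analysis of Section~\ref{dsfsdlkjfjkljsdflkjsdf} forbids.

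The paper does not attempt distributional equivalence at all. It also halves $p$ at each doubling of the observed length, but rather than patching the state it simply keeps the winner of each completed execution (there are $O(\log n)$ of them, negligible space) and argues by induction on $x=\log(F_1/n)$. For the current doubling, either a constant fraction of $f_1$ lies in the suffix $\check D$ (where the sampling rate is already correct, so the base analysis applies directly), or a constant fraction lies in the prefix $\hat D$; in the latter case one hashes $\hat D$ into a constant number of substreams to restore the $f_1^k\ge G_k$ invariant and invokes the inductive hypothesis on the shorter prefix, with the exponential drop in space paying for the extra substreams. The point is that \emph{some} phase sees the heavy element under valid parameters, and its winner survives the final comparison; no single phase is asked to simulate the global single-rate run.
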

\begin{proof}

In this section we prove that it is possible to find a heavy element without knowing the value of $F_1$ in advance.
The algorithm will be modified as follows.
Initially we assume that $F_1 \le 2n$. When $F_1 \ge 2n$ we begin to sample the stream with sampling rate $p=0.5$.
When the length of the stream is doubled, we also halve the sampling probability $p$.
We run in parallel several instances of the algorithm with different parameters.
We keep the winner of the previous execution until the end of all games. If we assume that $m$ and $n$ are polynomially far, there will be at most $O(\log n)$ winners and we can keep all of them with negligible extra memory.

We now present an inductive argument. Without loss of generality suppose that $f_1^k\ge G_k$. This assumption does not affect the space complexity. However,
the assumption implies that if the algorithm outputs $1$ during any of the executions, then $1$ will be the overall winner.
The H\"{o}lder inequality implies that $f_1n^{1-1/k} \ge 2^{\eta}F_1$ for some integer $\eta$.
We will show by induction on $x = \log (F_1/n)$ that
if $f_1^k \ge G_k$ then there exists an algorithm that in one pass finds the heavy element
and uses $O({F_0^{1-2/k}\over 2^{\eta\mu}})$ bits.

First, consider the  base case when $x \le 2$. In this case $F_1 = O(n)$ and the correctness follows from the previous section.

Now, suppose that the statement is correct for $x$ and let us prove it for $x+1$.
Let $\hat{D}$ be the prefix of the stream of length $2^{x-k}n$ and $\check{D}$ be the remaining suffix of the stream.
In the same way, we will use $\hat{x}$ and $\check{x}$ for other variables.

In order to prove the inductive step, we address three cases. Let z be an absolute constant.

\begin{itemize}
\item If $\check{f}_1 \ge {z}f_1$, meaning we have a large amount of the weight of $f_1$ in the suffix, then $\check{f}_1$ is still a heavy element with respect to $F_1$. Note that the right sampling probability for $\check{D}$ ensures that the observations from Section \ref{asdfadsfasfasf} will be correct. Thus, in this case $1$ will be outputted with a constant probability.

\item In the second case, we have $\hat f_1 > (1-z)f_1$ and $\hat G_k < (1-z)^k G_k$. This means we have a large amount of the weight of $f_1$ in the prefix, but not a large amount of the weight of non 1 elements. Thus, we can show that:

$$\hat G_k < (1-z)^k G_k < (1-z)^kf_1^k \le \hat f_1^k$$ And therefore $\hat  f_1^k \ge \hat G_k$.

In this case, as $f_1^k$ is always greater than $G_k$, our inductive assumption holds true, and our algorithm will work with the specified space bound.

\item

 In the last case we have
$\hat f_1 > (1-z) f_1$ and $\hat G_k > (1-z)^k G_k$. This means we have a large amount of 1's in the prefix, but also a large amount of the weight of non one elements.

To solve this problem, we use a hash function to separate the stream $\hat{D}$ into a constant number of substreams.
The Markov inequality implies that $1$ will be a heavy element in the substream it is hashed to with probability $0.99$. Further, we show that our induction holds with the following inequalities. Assume we hash into $y$ substreams, a constant. $C$ is another constant to represent the probabilistic nature of hashing. For the substream that 1 is hashed too, the following holds true:

$$\hat G_k \le {C\over y}G_k < {C\over y}f_1^k \le {C \over y(1-z)^k} \hat f_1^k \le \hat f_1^k $$

and therefore our assumption holds true.

However, we must account for the additional space for creating each substream. By induction, the space complexity decreases by an exponential factor in terms of $2^k$, which accounts for the increase created by the substreams. Thus, summing over all substreams and repeating the experiments to amplify probabilities, the bound still remains sufficiently small. Therefore the statement is true and it is possible to find the heavy element in one pass maintaining the previous space complexity.
\end{itemize}

\end{proof}

\subsubsection{Eliminating the Third Pass}
The key idea is that it is possible to choose an exponentially decreasing error such that the cost of approximation
will form a geometric cost. This can be done by repeating the arguments from Section \ref{sdfjnskdjfkjsdfkjs}.
As a result, we can maintain in one pass all approximations of the heavy elements without increasing the total cost of our algorithm.

Also, the total error that will be introduced will be bounded by $\epsilon F_k$ with a constant probability.
This can be shown by bounding another geometric series of expected weights of all heavy elements in the martingale sequence.

Finally, our AHE algorithm may output non-heavy elements with approximations of the form $\tilde{f}_l\le f_l$.
To guarantee that these additional outputs will not affect the final result, we will do the following.
If we need to approximate all $\rho$-heavy elements we will use our AHE algorithm to find all $\rho^2$-heavy elements and only keep the $1\over \rho$ elements with the largest counters. By doing so we will ensure that all $\rho$-heavy elements will be included with $(1\pm \epsilon)$-approximation.
Also, we will output at most $1\over \rho$ elements that are not $\rho^2$-heavy. Thus in total the weight of these ``noisy'' elements will be at most $\rho F_k$. As a result, we can reduce the error of the noise to be negligible.

\subsection{$k> 3$}
The restriction $k \ge 7$ follows from Lemma \ref{ewrrkfkjlkjwefelkjkweflkjkwef} and Corollary \ref{ewrferfwefwefwerfwerfwerfewf} (bound in $(\ref{sdkjfksdfkjsd})$).
Let $\sigma$ be a sufficiently large constant. In the latter case replacing $2$ and $3$ with $\sigma$ and $\sigma+1$ in the logarithm and choosing sufficiently small $\mu$ decreases the bound to $k>3$ in $(\ref{sdkjfksdfkjsd})$ while increasing the cost of the solution by a constant factor.
See also Observation \ref{sdfkjskjdkjsdfjksdwelklewlkwe}.
In the former case the bound on the number of thick rows in Lemma \ref{ewlkfmwlekfmlwekf} increases the bound on the number of dense rows in Lemma \ref{wdfsdfsdfsdf} by a factor of $\upsilon^2$.
Repeating the arguments of Corollary \ref{ewrferfwefwefwerfwerfwerfewf} in Lemma \ref{ewrrkfkjlkjwefelkjkweflkjkwef} translates into increasing the bound factor on the number of bad rows by a factor of  $4^{\gamma}$. Thus, in
$(\ref{sdkjfksdfkjsd})$ the bound on $k$ is increased by $2$.
To eliminate this problem, observe the following. The new bounds come from the fact that more elements can compete with $1$: namely elements from other rows that have the same signature as the signature of $1$. Thus, the bounds in Corollary \ref{ewrferfwefwefwerfwerfwerfewf} can be decreased by a factor of $2^{\rho\gamma}$, and the bound in $(\ref{sdkjfksdfkjsd})$ still holds.

%
%
%
%

\bibliographystyle{plain}
\bibliography{Bibliography}

{\appendix}

%
%
%

\section{Proving Lemma \ref{sdfsdfsdfsdfnsdvnsdvmnsdmnv}}\label{bla}
In this section we prove Lemma \ref{sdfsdfsdfsdfnsdvnsdvmnsdmnv}.
The proof of Lemma \ref{sdfsdfsdfsdfnsdvnsdvmnsdmnv} has two main steps. First, we show (in Section \ref{sdfgsdgsdgsdg} ) that there exists a pair with large value of $|S_\beta(\alpha)|$. Second, we show (in Section \ref{dfvsdrgdrgrgrg}) that the number of bad rows is small for any pair $\alpha, \beta$. As a result, the number of great rows for the chosen pair must be sufficiently large.

\subsubsection{Bounding $|S_\beta(\alpha)|$}\label{sdfgsdgsdgsdg}


In this section we show that there exists at least one pair $(\alpha, \beta)$ with a sufficiently large $|S_{\beta}(\alpha)|$ that satisfies Lemma \ref{sdfljsdldjfljeflkjfd}.

\begin{lemma}\label{cdvnjlcvljcx}
Let $N$ be a parameter and let $V\in \mathbb{R}^N$ be a vector with integer entries $v_i$ where $0\le v_i \le m$.
For $u=1,\dots, \lceil \log{m}\rceil$ define $Q_u = \{i : 2^{u-1} \le v_i < 2^{u}\}$.
Let $C$ be an absolute constant and let $\{\rho_u\}_{u=1}^{\infty}$ be a sequence of non-negative real numbers such
that
\begin{equation}\label{sdfljedfjwlseeee}
\sum_{u=1}^{\infty} \rho_u \le C.
\end{equation}
Then there exists at least one integer $u \in\{1,\dots, \lceil \log{m}\rceil\}$ such that
\begin{equation}\label{sdfljedfjwls}
\sum\limits_{i\in Q_u} v_i \ge {\rho_u|V| \over C}
\end{equation}
\end{lemma}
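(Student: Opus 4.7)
The plan is to prove the lemma by an averaging/pigeonhole argument: if no index $u$ satisfies the desired inequality, the total mass of $V$ is strictly less than $|V|$, which is absurd.

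First I would observe that the sets $\{Q_u\}_{u=1}^{\lceil \log m\rceil}$ form a partition of $\{i : v_i \ge 1\}$, because each integer $v_i$ with $1 \le v_i \le m$ lies in exactly one dyadic interval $[2^{u-1}, 2^u)$. Indices with $v_i = 0$ contribute nothing to $|V|$. Therefore
\begin{equation*}
|V| = \sum_{i=1}^N v_i = \sum_{u=1}^{\lceil \log m \rceil} \sum_{i \in Q_u} v_i.
\end{equation*}

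Next, suppose for contradiction that for every $u \in \{1,\dots,\lceil \log m\rceil\}$ the inequality in $(\ref{sdfljedfjwls})$ fails, i.e.\ $\sum_{i \in Q_u} v_i < \rho_u |V|/C$. Summing over $u$ and combining with the displayed equality yields
\begin{equation*}
|V| = \sum_{u=1}^{\lceil \log m \rceil} \sum_{i \in Q_u} v_i < \frac{|V|}{C}\sum_{u=1}^{\lceil \log m \rceil} \rho_u \le \frac{|V|}{C}\sum_{u=1}^{\infty} \rho_u \le |V|,
\end{equation*}
using the hypothesis $(\ref{sdfljedfjwlseeee})$ in the final step. This is a contradiction (assuming $|V|>0$; the case $|V|=0$ is trivial since then both sides of $(\ref{sdfljedfjwls})$ vanish for any $u$).

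There is no real obstacle here beyond bookkeeping; the only subtlety is confirming the partition property of $\{Q_u\}$ and handling the degenerate case $|V| = 0$. I would present the argument in the contrapositive form above, so that the single displayed chain of inequalities makes the contradiction immediate.
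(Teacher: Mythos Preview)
Your proof is correct and essentially identical to the paper's own argument: both assume for contradiction that no $u$ satisfies the inequality, then sum over $u$ using the partition of the nonzero entries into dyadic buckets to obtain $|V| < |V|$. You add slightly more detail (the partition property and the $|V|=0$ case), but the approach is the same.
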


\ifNOTtalk
\begin{proof}
Assume, that no such $u$ exists. Then
\begin{equation}
|V| = \sum\limits_{u=1}^{\lceil \log{m}\rceil}\sum_{i\in Q_u}v_i < \frac{|V|}{C} \sum\limits_{u=1}^{\lceil \log{m}\rceil}\rho_u \le |V|.
\end{equation}
Here the equality follows from the fact that $v_i$ are non-negative integers, the first inequality follows from our assumption and the last inequality follows from
$(\ref{sdfljedfjwlseeee})$.
Thus, $|V|<|V|$ and our assumption must be wrong.
\end{proof}
\fi



\begin{corollary}\label{wslrkjkfljsjdfl}
There exists\footnote{It is important to note that the lemma does not hold in general. Indeed, $f_1\over 2^{u+1}u^2$ might be larger than $r_{-0.5\eta}$ for small values of $u$.  However, for our algorithm it is sufficient to establish the existence of at least one such $u$.} $u> 1.5\eta+\Psi$ such that
\begin{equation}\label{edfkjksjdfkjsdfkje12}
|S_u(-0.5\eta)| \ge {f_1\over 2^{u+2}u^2}.
\end{equation}
\end{corollary}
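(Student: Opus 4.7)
The plan is to combine a pigeonhole argument (similar in spirit to Lemma \ref{cdvnjlcvljcx}) with a preliminary estimate that discards rows whose heavy-element count is too small. Set $v_i = f_1(-0.5\eta, i)$ for $i \in [r_{-0.5\eta}]$, so $\sum_i v_i = f_1$ and $S_u(-0.5\eta) = \{i : 2^{u-1} \le v_i < 2^u\}$. It suffices to find some $u > 1.5\eta + \Psi$ with $\sum_{i \in S_u(-0.5\eta)} v_i \ge f_1/(4u^2)$, since each term of that sum is less than $2^u$, giving $|S_u(-0.5\eta)| \cdot 2^u$ as an upper bound and hence $|S_u(-0.5\eta)| \ge f_1/(2^{u+2}u^2)$. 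I would not invoke Lemma \ref{cdvnjlcvljcx} directly with $\rho_u = 0$ for small $u$, since then it could return an uninformative small-$u$ index; instead I would redo the pigeonhole restricted to $u > 1.5\eta + \Psi$.

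First I would bound the contribution of the ``small-$u$'' rows. From $t_{-0.5\eta} = 2^{0.5\eta}n^{1-1/k}$ and the Table~\ref{table1} assumption $F_1 \le C_2 n$, one gets
\[
r_{-0.5\eta} \;=\; F_1/t_{-0.5\eta} \;\le\; C_2 \, 2^{-0.5\eta} n^{1/k}.
\]
Each row with $v_i < 2^{1.5\eta+\Psi}$ contributes less than $2^{1.5\eta+\Psi}$, so the total contribution of all such rows is at most $2^{1.5\eta+\Psi} \cdot C_2 \, 2^{-0.5\eta} n^{1/k} = C_2 \, 2^{\eta+\Psi} n^{1/k}$. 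The assumption $f_1 \ge C_2 \, 2^{\eta+2\Psi+1} n^{1/k}$ then yields that this total is at most $f_1/2^{\Psi+1} \le f_1/2$. Consequently the rows with $u > 1.5\eta + \Psi$ account for at least $f_1/2$ of the total count.

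Next I would close the argument by contradiction. Suppose no $u > 1.5\eta + \Psi$ satisfies $\sum_{i \in S_u(-0.5\eta)} v_i \ge f_1/(4u^2)$. Summing the reverse inequalities,
\[
\sum_{u > 1.5\eta + \Psi} \; \sum_{i \in S_u(-0.5\eta)} v_i \;<\; f_1 \sum_{u \ge 1} \frac{1}{4u^2} \;=\; \frac{f_1 \pi^2}{24} \;<\; \frac{f_1}{2},
\]
which contradicts the lower bound from the previous paragraph. Hence some $u > 1.5\eta + \Psi$ satisfies the required sum lower bound, and dividing by $2^u$ gives $|S_u(-0.5\eta)| \ge f_1/(2^{u+2}u^2)$.

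The only delicate point is the interplay of the two Table~\ref{table1} assumptions: the bound on $F_1$ is what keeps $r_{-0.5\eta}$ small enough, while the $\Psi$-term in the lower bound on $f_1$ provides precisely the slack needed to beat the convergent tail $\sum 1/(4u^2) < 1/2$. In other words, $\Psi$ is calibrated exactly so that the ``small-$u$'' mass is at most $f_1/2$ and the pigeonhole contradiction is non-trivial.
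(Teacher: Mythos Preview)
Your argument is correct and follows essentially the same route as the paper: bound the total mass in rows with $f_1(-0.5\eta,i)\le 2^{1.5\eta+\Psi}$ by at most $f_1/2$ using the assumptions $F_1\le C_2 n$ and $f_1\ge C_2 2^{\eta+2\Psi+1}n^{1/k}$, then apply a $\sum 1/u^2$ pigeonhole to the remaining mass. The only cosmetic difference is packaging: the paper zeroes out the small rows to form an auxiliary vector $V$ and invokes Lemma~\ref{cdvnjlcvljcx} with $\rho_u=1/u^2$ (arguing afterward that the resulting $u$ must exceed $1.5\eta+\Psi$ since the small-$u$ buckets of $V$ are empty), whereas you run the same pigeonhole inline by contradiction.
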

\ifNOTtalk
\begin{proof}
Denote:
\begin{equation}\label{sdnjsdfkjlsdfkj}
\kappa = -0.5\eta.
\end{equation}
Consider $M_{\kappa}$ and let $IND$ be a set of indices of rows in $M_\kappa$ defined as $IND = \{i: f_1(\kappa, i) \le 2^{1.5\eta+\Psi}\}$. We have:
\begin{equation}\label{edfkjksjdfkjsdfkj}
\sum_{i\in IND} f_1(\kappa, i) \le 2^{1.5\eta+\Psi}|IND| \le 2^{1.5\eta+\Psi}r_{\kappa}.
\end{equation}
Here the first inequality follows from the definition of $IND$ and the second inequality follows since $IND$ is a set or rows in $M_\kappa$ and thus its cardinality cannot exceed the total number of rows $r_{\kappa}$.
Note that
\begin{equation}\label{sdfngkdsfkjsdffkjfdkjdfkjdf}
r_{\kappa} = {F_1\over t_\kappa} \le {C_2n2^{\kappa}\over n^{1-1/k}}= {C_22^{\kappa}n^{1/k}}.
\end{equation}
Here the first equality and the first inequality follow from the definitions and assumptions from Table \ref{table1} and Table \ref{table2}.
Thus,
\begin{equation}\label{dsfjndskjfkjsdfkjds}
\sum_{i\in IND} f_1(\kappa, i) \le 2^{1.5\eta+\Psi} (C_22^{\kappa}n^{1/k}) = C_22^{\eta+\Psi}n^{1/k}\le {f_1\over 2^{\Psi}}.
\end{equation}
Here the first inequality follows from $(\ref{sdfngkdsfkjsdffkjfdkjdfkjdf})$ and $(\ref{edfkjksjdfkjsdfkj})$, the equality follows from $(\ref{sdnjsdfkjlsdfkj})$  and the last inequality follows from the assumption on $f_1$ from Table \ref{table1}. Since $\Psi>1$ we have:
\begin{equation}\label{dsjnhfksdjkjsd}
\sum_{i\in IND} f_1(\kappa, i) \le 0.5f_1.
\end{equation}

\noindent
Consider vector $V \in R^{r_\kappa}$ with entries $\{v_i\}_{i=1}^{r_\kappa}$ that are defined as follows:
\begin{equation}\label{slnflnesfljld}
v_i = f_1(\kappa, i) \text{ if } i\notin IND \text{ and } v_i = 0 \text{ otherwise.}
\end{equation}

\noindent
The following inequality follows from the definition of $V$ and $(\ref{dsjnhfksdjkjsd})$:

\begin{equation}\label{dfdssdfsdfdsf}
|V| \ge 0.5f_1.
\end{equation}

\noindent
Consider $\rho_u = 1/u^2$.
By applying Lemma \ref{cdvnjlcvljcx} we conclude\footnote{Recall that $\sum_{u=1}^{\infty} \rho_u = {\pi^2\over 6} <2$.} that there exists $u$ such that

\begin{equation}\label{dsjnhfksdjkjsd2}
\sum\limits_{i\in Q_u} v_i \ge {|V| \over 2u^2},
\end{equation}

\noindent
where $Q_u$ defined in Lemma \ref{cdvnjlcvljcx}.
Note that $(\ref{dfdssdfsdfdsf})$ implies that $|V|$ is strictly positive and thus if $u$ satisfies $(\ref{dsjnhfksdjkjsd2})$ then $u > 1.5\eta+\Psi$. Indeed, if $u\le 1.5\eta+\Psi$ then it follows that $Q_u \subseteq IND$ and thus $\sum\limits_{i\in Q_u} v_i =0$.
It is easy to check that if $u$ satisfies $(\ref{dsjnhfksdjkjsd2})$ then
\begin{equation}\label{ereregfdfgdfgdfgdf}
Q_u = S_u(\kappa).
\end{equation}

Summarizing all of the above, we conclude that there exists at least one integer $u > 1.5\eta+\Psi$ such that
\begin{equation}\label{sdfsdfsdfsdf}
|S_u(\kappa)|2^u > \sum_{i\in S_u(\kappa)} f_1(\kappa, i) = \sum\limits_{i\in Q_u} v_i \ge {|V|\over 2u^2} \ge {f_1\over 4u^2}.
\end{equation}
Here the first inequality follows from the definition of $S_u(\kappa)$, the first equality follows from $(\ref{ereregfdfgdfgdfgdf})$ and $(\ref{slnflnesfljld})$, the second inequality follows from $(\ref{dsjnhfksdjkjsd2})$
and the last inequality follows from $(\ref{dfdssdfsdfdsf})$.
Scaling $(\ref{sdfsdfsdfsdf})$ by a factor of $2^{-u}$ gives $(\ref{edfkjksjdfkjsdfkje12})$.
\end{proof}
\fi

\begin{corollary}\label{rfmkjrsdkdfskjdfgkdgf}
Consider $u$ as in Corollary \ref{wslrkjkfljsjdfl} and suppose that $\beta$ is defined as in Algorithm \ref{sdjnfsdfsjdfskjd}. Then $\beta > 0.8\Psi$.
\end{corollary}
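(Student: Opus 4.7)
The plan is to prove Corollary \ref{rfmkjrsdkdfskjdfgkdgf} by a direct case analysis, depending on which branch of Algorithm \ref{sdjnfsdfsjdfskjd} the value $u$ from Corollary \ref{wslrkjkfljsjdfl} triggers. The only ingredient beyond the algorithm's definition is the bound $u > 1.5\eta + \Psi$ that Corollary \ref{wslrkjkfljsjdfl} already supplies, together with the fact that $\Psi$ is a large absolute constant given by Table \ref{table2}.

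First, I would observe that Algorithm \ref{sdjnfsdfsjdfskjd} treats its outer index in two regimes. In Step 1(a), the index $u$ ranges over $\lceil 1.5\eta\rceil,\dots,20\eta$, and $\beta$ is set to $u$ (with $\alpha=-0.5\eta$). In Step 1(b), $u$ ranges over $20\eta+1,\dots,RANGE$ (with $\alpha=u/5$), and for each such $u$ the inner loop iterates $\beta$ over the integers in $[0.8u-0.5\eta-2,\,u]$. Thus, for the particular $u$ delivered by Corollary \ref{wslrkjkfljsjdfl}, either $\beta=u$ (Step 1(a)) or $\beta \ge 0.8u-0.5\eta-2$ (Step 1(b)).

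Case 1, $u\le 20\eta$: here $\beta=u$, and Corollary \ref{wslrkjkfljsjdfl} yields $u>1.5\eta+\Psi\ge \Psi>0.8\Psi$, so the claim is immediate. Case 2, $u>20\eta$: the smallest admissible $\beta$ is $0.8u-0.5\eta-2$, and substituting $u>1.5\eta+\Psi$ gives
\[
\beta \;>\; 0.8(1.5\eta+\Psi) - 0.5\eta - 2 \;=\; 0.7\eta + 0.8\Psi - 2.
\]
For $\eta\ge 3$ this already exceeds $0.8\Psi$ strictly. For the remaining values $\eta\in\{0,1,2\}$, I would invoke the fact that $\Psi=\mu^{-6}(\log(2^\mu/(2^\mu-1))+100+k)$ with $\mu=2^{-10}$, so $\Psi$ is an absolute constant much larger than any $O(1)$ slack; formally one can replace the $-2$ by $0$ by noting that the algorithm only plays integer values of $\beta$ and that the inequality $0.7\eta+0.8\Psi-2>0.8\Psi$ can be enforced by increasing $\Psi$ by the tiny additive constant $2/\mu^{-6}$ (which leaves the definition unchanged up to the same $O(1)$ slack built into the definition).

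The main (very mild) obstacle is the constant bookkeeping in Case 2 for small $\eta$; the rest of the argument is a one-line substitution. Since $\Psi$ is chosen as a loose upper bound in the first place, the $-2$ offset is absorbed in the constant and the stated inequality $\beta>0.8\Psi$ follows in both cases.
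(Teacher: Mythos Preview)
Your approach is essentially the paper's, only more explicit. The paper's proof is two lines: ``By the description of Algorithm \ref{sdjnfsdfsjdfskjd} we have that $\beta \ge 0.8u$. Corollary \ref{wslrkjkfljsjdfl} implies that $u>\Psi$.'' You unfold the same two ingredients into a case split and track the constants honestly.

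One remark on the bookkeeping you flag in Case~2. Your worry about $\eta\in\{0,1,2\}$ is legitimate: with $\beta$ allowed to start at $0.8u-0.5\eta-2$, the bound you derive is $\beta>0.7\eta+0.8\Psi-2$, which falls just short of $0.8\Psi$ when $\eta$ is tiny. The paper sidesteps this by asserting $\beta\ge 0.8u$ outright, which silently drops the $-0.5\eta-2$ offset. So the paper's proof has the same $O(1)$ slack you identified; it just does not mention it. Your instinct that the slack is immaterial is correct: the only downstream use of this corollary is to feed Fact~\ref{sdfjsdkjfkjsdf}, whose hypothesis $\beta>0.8\Psi$ is only a convenient threshold guaranteeing $\beta$ exceeds the fixed constant $(4!)a^4b$; any bound of the form $\beta>0.8\Psi-O(1)$ (or even $\beta>0.7\Psi$) would serve equally well, given how enormous $\Psi=\mu^{-6}(\cdots)$ is. Your attempt to ``increase $\Psi$ by $2/\mu^{-6}$'' is not the cleanest way to phrase this, though; it would be more accurate simply to note that the stated inequality holds up to an additive $O(1)$, which suffices for every later application.
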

\begin{proof}
By the description of Algorithm \ref{sdjnfsdfsjdfskjd} we have that $\beta \ge 0.8u$. Corollary \ref{wslrkjkfljsjdfl} implies that $u>\Psi$.
\end{proof}

\includegraphics[scale=0.5]{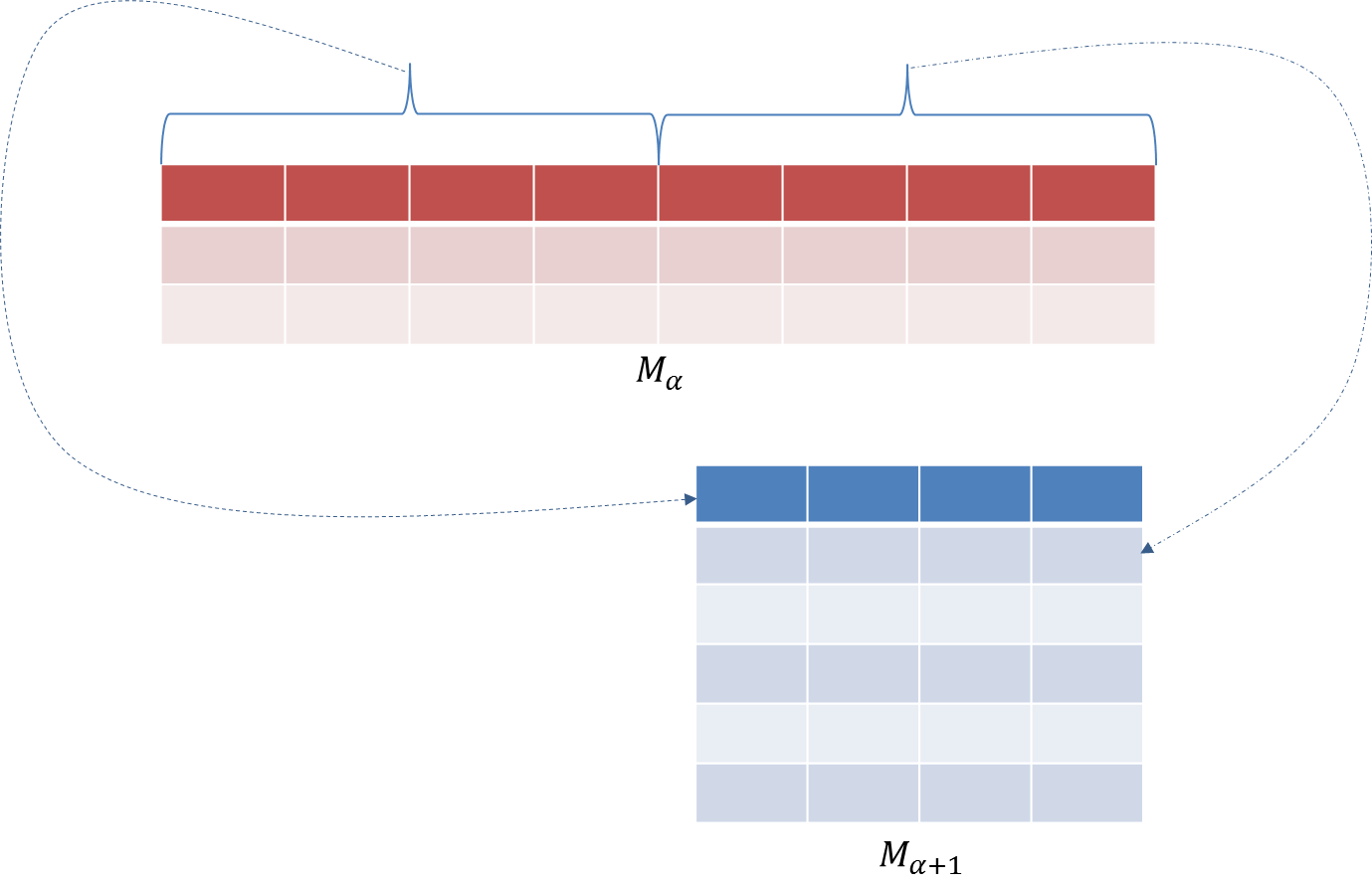}
$$\text{Picture 3: Relation between matrices}$$

\begin{corollary}\label{dflkmgvldsfgldsflksdflk}
Consider $u$ as in Corollary \ref{wslrkjkfljsjdfl}.
Given $u$, Algorithm \ref{sdjnfsdfsjdfskjd} defines a sequence of pairs $\alpha, \beta$.
There exists at least one pair $(\alpha, \beta)$ from that sequence such that
\begin{equation}\label{wefjnnewlkjkjwkerdkjwe}
|S_{\beta}(\alpha)| \ge {f_1\over 2^{\beta+0.5\mu\beta}}.
\end{equation}
\end{corollary}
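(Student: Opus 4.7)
I would prove Corollary \ref{dflkmgvldsfgldsflksdflk} by case analysis on the value of $u$ (from Corollary \ref{wslrkjkfljsjdfl}) relative to $20\eta$, matching the two branches of Algorithm \ref{sdjnfsdfsjdfskjd}. Throughout I use the bound $|S_u(-0.5\eta)| \ge f_1/(2^{u+2}u^2)$ together with $u > 1.5\eta + \Psi$, and the fact that $\Psi = \mu^{-6}(\cdots)$ is chosen large enough that $2^{0.5\mu\beta}$ dominates any polynomial in $u$ whenever $\beta$ is at least a constant fraction of $\Psi$.

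\emph{Case A: $u \le 20\eta$.} Here Algorithm \ref{sdjnfsdfsjdfskjd} selects the pair $(\alpha,\beta)=(-0.5\eta, u)$, so Corollary \ref{wslrkjkfljsjdfl} gives directly $|S_\beta(\alpha)| \ge f_1/(2^{\beta+2}\beta^2)$. To conclude I need $2^{0.5\mu\beta} \ge 4\beta^2$, equivalently $0.5\mu\beta \ge 2 + 2\log\beta$. Since $\beta=u > \Psi$ by Corollary \ref{rfmkjrsdkdfskjdfgkdgf}, and since $\mu\Psi$ exceeds $\log\beta = O(\log\log n)$ by the choice of $\Psi$, this inequality holds and Case A is complete.

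\emph{Case B: $u > 20\eta$.} The algorithm sets $\alpha = u/5$ and sweeps $\beta$ over the integers in $\{0.8u - 0.5\eta - 2, \dots, u\}$, so I must transfer mass from $S_u(-0.5\eta)$ into some $S_\beta(u/5)$ within that range. Each row of $M_{-0.5\eta}$ partitions into exactly $2^{0.5\eta+u/5}$ consecutive sub-rows of $M_{u/5}$ whose ``$1$''-counts sum to the original count. Define $T := \sum_{i \in S_u(-0.5\eta)} f_1(-0.5\eta,i) \ge f_1/(8u^2)$ and, for each $\beta$, let $M_\beta$ be the number of sub-rows of $M_{u/5}$ that come from rows of $S_u(-0.5\eta)$ and lie in $S_\beta(u/5)$. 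Then $\sum_\beta M_\beta 2^\beta \ge T$. Sub-rows with $\beta < 0.8u - 0.5\eta - 2$ contribute at most
$$|S_u(-0.5\eta)| \cdot 2^{0.5\eta+u/5} \cdot 2^{0.8u-0.5\eta-2} \;=\; |S_u(-0.5\eta)| \cdot 2^{u-2} \;\le\; T/2,$$
and sub-rows with $\beta > u$ contribute nothing, since each sub-row inherits at most $f_1(-0.5\eta,i) < 2^u$ ones. Pigeonholing the remaining mass $\ge T/2$ over the $O(u)$ admissible values of $\beta$ yields some $\beta^*$ in the algorithm's sweep with $|S_{\beta^*}(u/5)| \ge M_{\beta^*} = \Omega(f_1/(u^3 \cdot 2^{\beta^*}))$. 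Because $\beta^* \ge 0.8u - 0.5\eta - 2 \ge 0.775u - 2$ (using $\eta < u/20$) and $u > \Psi$, the inequality $2^{0.5\mu\beta^*} \gtrsim u^3$ follows from the same $\Psi$-based estimate used in Case A, completing the proof.

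The main technical care lies in the cutoff calculation in Case B: the algorithm's lower boundary $0.8u - 0.5\eta - 2$ must be matched exactly by the exponent produced by averaging the total mass $T$ over the $2^{0.5\eta+u/5}$ sub-rows, so that the contribution from $\beta$ below this threshold is strictly less than $T$ rather than merely $O(T)$. Any looser choice would leave the pigeonhole inconclusive, which is presumably why the algorithm's sweep is calibrated at this precise value.
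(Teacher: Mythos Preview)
Your proof is correct and follows essentially the same two-case structure as the paper: Case~A is identical, and in Case~B both you and the paper transfer mass from rows of $S_u(-0.5\eta)$ to their $2^{\alpha+0.5\eta}$ sub-rows in $M_{u/5}$, discard the sub-rows whose $1$-count falls below $2^{u-\alpha-0.5\eta-2}$ (your cutoff calculation matches the paper's threshold exactly), and then pigeonhole the remaining $\ge T/2$ mass over the $O(u)$ admissible values of $\beta$. The paper phrases the final step as a contradiction using weights $1/\beta^3$ rather than a direct pigeonhole, but the two are equivalent; it then invokes Fact~\ref{sdfjsdkjfkjsdf} ($2^{10}\beta^3 \le 2^{0.5\mu\beta}$ for $\beta>0.8\Psi$) where you invoke the analogous ``$\Psi$ large enough'' estimate.

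One minor wording issue: in Case~A your justification ``$\mu\Psi$ exceeds $\log\beta = O(\log\log n)$'' is not the right inequality (and $\mu\Psi$ is a constant while $\log\beta$ need not be). What you actually need is $0.5\mu\beta \ge 2 + 2\log_2\beta$ for all $\beta>\Psi$, which holds because the linear side dominates the logarithmic one once $\beta$ exceeds the large constant $\Psi$; this is precisely Fact~\ref{sdfjsdkjfkjsdf} in the paper. Also, $\beta=u>\Psi$ follows from Corollary~\ref{wslrkjkfljsjdfl} directly (since $u>1.5\eta+\Psi$), not from Corollary~\ref{rfmkjrsdkdfskjdfgkdgf}, which only gives $\beta>0.8\Psi$.
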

\ifNOTtalk
\begin{proof}

Consider $u$ as in Corollary \ref{wslrkjkfljsjdfl}.
Algorithm \ref{sdjnfsdfsjdfskjd} distinguishes between two cases: when $u\le 20\eta$ and when $u> 20\eta$.

Consider the case when $u\le 20\eta$. In this case Algorithm \ref{sdjnfsdfsjdfskjd} defines only one pair: $\alpha = -0.5\eta$ and $\beta=u$. For this pair Corollary \ref{wslrkjkfljsjdfl} immediately implies
\begin{equation}\label{dskjckjsdcjkvcsksck}
|S_{\beta}(\alpha)| \ge {f_1\over 2^{\beta+2}\beta^2} > {f_1\over 2^{\beta+10}\beta^3}.
\end{equation}

Consider the case
\begin{equation}\label{wekmfrlmkwdfkljsdkjfwdksjfsdf}
u>20\eta.
\end{equation}
In this case Algorithm \ref{sdjnfsdfsjdfskjd} defines a sequence of pairs, where $\alpha = u/5$ and
$\beta \in \{0.8u-0.5\eta-2, \dots, u\}$.
We will show that for at least one pair:
\begin{equation}\label{wekmfrlmkwdfkljsdkjfwdksjfsdf}
|S_{\beta}(\alpha)| \ge {f_1\over 2^{\beta+10}\beta^3}.
\end{equation}

If we assume that $(\ref{wekmfrlmkwdfkljsdkjfwdksjfsdf})$ is correct then the proof of Corollary \ref{dflkmgvldsfgldsflksdflk} follows immediately.
Indeed, Corollary \ref{rfmkjrsdkdfskjdfgkdgf} and Fact \ref{sdfjsdkjfkjsdf} imply that $2^{10}\beta^3 \le 2^{0.5\mu\beta}$ and $(\ref{wefjnnewlkjkjwkerdkjwe})$ follows.

It remains to show that $(\ref{wekmfrlmkwdfkljsdkjfwdksjfsdf})$ is correct.
Consider a fixed row from $M_{-0.5\eta}$ with index $i'$ such that
\begin{equation}\label{kwerketfetrfkuekfkfjfewd}
i'\in S_u(-0.5\eta).
\end{equation}
By the definition of the matrices $M_{\chi}$ that is given in Section \ref{F} the subset of stream $D$ that corresponds to a one row in $M_{-0.5\eta}$ is equal to a subset of $D$ that correspond to $2^{\alpha+0.5\eta}$ rows in $M_{\alpha}$. An example is presented in Picture 3.
Let $IND(i')$ be the set $IND(i') \subseteq [r_\alpha]$ of row indices in $M_\alpha$ that corresponds to the row index $i'$ in $M_{-0.5\eta}$.
The following facts can be easily verified.
\begin{fact}\label{dskkdfksdfkjfskjds}
$\sum_{i\in IND(i')}f_1(\alpha,i) = f_1(-0.5\eta,i').$
\end{fact}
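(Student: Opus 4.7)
The plan is to argue that Fact \ref{dskkdfksdfkjfskjds} follows directly from the way the matrices $M_\chi$ are defined as reorganizations of the same stream $D$ into consecutive intervals. Both $M_\alpha$ and $M_{-0.5\eta}$ arise by splitting $D$ into rows of length $t_\alpha$ and $t_{-0.5\eta}$ respectively. Since $t_{-0.5\eta}/t_\alpha = 2^{\alpha+0.5\eta}$, one row of $M_{-0.5\eta}$ occupies exactly the same contiguous block of stream positions as $2^{\alpha+0.5\eta}$ consecutive rows of $M_\alpha$. By the definition of $IND(i')$ given just before the fact, these are precisely the rows indexed by $IND(i')$.

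First I would make explicit that $\bigcup_{i\in IND(i')}\{\text{stream positions of row $i$ of }M_\alpha\}$ equals $\{\text{stream positions of row $i'$ of }M_{-0.5\eta}\}$, and that this union is disjoint because rows of $M_\alpha$ partition the stream. Second, I would invoke the definition of $f_l(\chi,j)$ from Table \ref{table2}, which says $f_l(\chi,j)$ counts occurrences of $l$ among the positions assigned to row $j$ of $M_\chi$. Combining these two observations, summing $f_1(\alpha,i)$ over $i\in IND(i')$ counts each occurrence of the symbol $1$ within the block corresponding to row $i'$ of $M_{-0.5\eta}$ exactly once, so the total equals $f_1(-0.5\eta,i')$.

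The only subtlety is ensuring the partition is clean, i.e.\ that $2^{\alpha+0.5\eta}$ is a positive integer and that the interval boundaries align. Integrality follows from the Table \ref{table1} assumption that $\eta$ is even (making $0.5\eta$ an integer) together with $\alpha+0.5\eta\ge 0$ in the range of Algorithm \ref{sdjnfsdfsjdfskjd} where this corollary is applied; alignment of boundaries is covered by the divisibility-of-$F_1$ footnote convention adopted in Section \ref{F}. There is no genuine obstacle: the fact is a bookkeeping statement about a coarser grouping of the stream being a union of finer groupings.
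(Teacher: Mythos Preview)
Your argument is correct and is exactly the straightforward verification the paper has in mind: the paper does not give a separate proof but simply writes ``The following facts can be easily verified'' before stating this fact. Your explanation that one row of $M_{-0.5\eta}$ covers the same contiguous block of stream positions as the $2^{\alpha+0.5\eta}$ rows of $M_\alpha$ indexed by $IND(i')$, together with the definition of $f_l(\chi,j)$, is precisely the intended reasoning.
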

\begin{fact}\label{sdfjnskdjfkjsdf}
If $i'\neq i''$ then sets $IND(i')$ and $IND(i'')$ are disjoint.
\end{fact}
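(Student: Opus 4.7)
The plan is to unpack the definitions of $M_\alpha$, $M_{-0.5\eta}$, and the set $IND(\cdot)$, and observe that the claim reduces to disjointness of two blocks in a common interval partition of the stream. By the construction of Section \ref{F}, each row of $M_\chi$ corresponds to a contiguous block of $t_\chi = 2^{-\chi}n^{1-1/k}$ consecutive stream positions, and these blocks tile the stream without overlap.

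First I would note that under the case currently being considered we have $\alpha = u/5 > 0 > -0.5\eta$, so $t_\alpha < t_{-0.5\eta}$, and each row of $M_{-0.5\eta}$ is exactly tiled by $t_{-0.5\eta}/t_\alpha = 2^{\alpha + 0.5\eta}$ consecutive rows of $M_\alpha$ (an integer power of two under our choices of $\alpha$ and $\eta$). By the definition of $IND(i')$ given just before Fact \ref{dskkdfksdfkjfskjds}, this set is precisely the index block of those nested $M_\alpha$-rows, namely $IND(i') = \{(i'-1)\cdot 2^{\alpha+0.5\eta}+1,\dots, i'\cdot 2^{\alpha+0.5\eta}\}$.

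For distinct $i' < i''$ this immediately gives $\max IND(i') = i'\cdot 2^{\alpha+0.5\eta} < (i''-1)\cdot 2^{\alpha+0.5\eta}+1 = \min IND(i'')$, so $IND(i')\cap IND(i'')=\emptyset$. I do not foresee any real obstacle here: the statement just encodes the fact that a coarse interval partition refines to a finer one and distinct coarse cells have disjoint refinements. This disjointness is exactly what will next let one add the row contributions $|IND(i')|$ and the masses $f_1(\alpha,i)$ over $i'\in S_u(-0.5\eta)$ without overcounting (together with Fact \ref{dskkdfksdfkjfskjds}), feeding into the lower bound on $|S_\beta(\alpha)|$ required to establish $(\ref{wekmfrlmkwdfkljsdkjfwdksjfsdf})$ in Corollary \ref{dflkmgvldsfgldsflksdflk}.
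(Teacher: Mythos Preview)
Your proposal is correct. The paper itself provides no proof of this fact, merely stating (together with the companion Fact \ref{dskkdfksdfkjfskjds}) that it ``can be easily verified.'' Your argument is precisely the natural verification: because rows of $M_\chi$ are consecutive length-$t_\chi$ blocks of the stream and $t_{-0.5\eta}/t_\alpha = 2^{\alpha+0.5\eta}$ is an integer in the case $\alpha = u/5$, the sets $IND(i')$ are consecutive, pairwise disjoint index blocks. One cosmetic quibble: the chain $\alpha = u/5 > 0 > -0.5\eta$ should read $\alpha > 0 \ge -0.5\eta$ to accommodate $\eta = 0$, but this is harmless since all you actually use is $\alpha + 0.5\eta > 0$, which follows from $u > 1.5\eta + \Psi > 0$.
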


It follows that
\begin{equation}\label{fsdljvlkjskdkjsdkjdskjsd}
|IND(i')| = 2^{\alpha+0.5\eta}
\end{equation}
Consider the following partition of $IND$ into two subsets:
\begin{equation}\label{wekmfrlmkwdfkljsdkjfwdksjfsdf1}
IND(i')_\le = \{i: i\in IND(i'), f_1(\alpha,i) \le {2^{u-\alpha-0.5\eta-2}}\},
\end{equation}
\begin{equation}\label{wekmfrlmkwdfkljsdkjfwdksjfsdf2}
IND(i')_> = \{i: i\in IND(i'), f_1(\alpha,i) > {2^{u-\alpha-0.5\eta-2}}\}
\end{equation}
We have
\begin{equation}\label{wekmfrlmkwdfkljsdkjfwdksjfsdf3}
\sum_{i\in IND(i')_\le} f_1(\alpha,i) \le |IND(i')|{2^{u-\alpha-0.5\eta-2}} = 2^{\alpha+0.5\eta}{2^{u-\alpha-0.5\eta-2}} = 2^{u-2} \le 0.5f_1(-0.5\eta,i').
\end{equation}
Here the first inequality follows from $(\ref{wekmfrlmkwdfkljsdkjfwdksjfsdf1})$, the first equality follows from $(\ref{fsdljvlkjskdkjsdkjdskjsd})$ and the last inequality follows from $(\ref{kwerketfetrfkuekfkfjfewd})$.
Also,
\begin{equation}\label{wekmfrlmkwdfkljsdkjfwdksjfsdf4}
\sum_{i\in IND(i')_\le} f_1(\alpha,i) + \sum_{i\in IND(i')_>} f_1(\alpha,i) = \sum_{i\in IND(i')} f_1(\alpha,i) = f_1(-0.5\eta,i').
\end{equation}
Here the first equality follows from $(\ref{wekmfrlmkwdfkljsdkjfwdksjfsdf1})$ and $(\ref{wekmfrlmkwdfkljsdkjfwdksjfsdf2})$
and the second equality follows from Fact \ref{dskkdfksdfkjfskjds}.
Together, $(\ref{wekmfrlmkwdfkljsdkjfwdksjfsdf4})$ and $(\ref{wekmfrlmkwdfkljsdkjfwdksjfsdf3})$ imply:
\begin{equation}\label{ijsdfodsfgkjhgksjdgfkdskjdk}
\sum_{i\in IND(i')_>} f_1(\alpha,i) > 0.5f_1(-0.5\eta,i').
\end{equation}
Let
\begin{equation}\label{ljrenfkjsdfkjksdfkjlsd}
SET = \cup_{i'\in S_u(-0.5\eta)}IND(i')_>.
\end{equation}
Summing over all $i'$ we obtain
\begin{equation}\label{wejirowertuofregudsuiogfudsgfuoi1}
\sum_{i\in SET} f_1(\alpha,i) = \sum_{i'\in S_u(-0.5\eta)}\sum_{i\in IND(i')_>} f_1(\alpha,i) >
\end{equation}
\begin{equation}\label{wejirowertuofregudsuiogfudsgfuoi2}
0.5\sum_{i'\in S_u(-0.5\eta)}f_1(-0.5\eta,i') \ge 0.5  |S_u(-0.5\eta)|2^{u-1} \ge 0.5{f_1\over 2^{u+2}u^2}2^{u-1} = {f_1\over 16u^2}.
\end{equation}
Here the first equality follows from Fact \ref{sdfjnskdjfkjsdf}, the first inequality follows from $(\ref{ijsdfodsfgkjhgksjdgfkdskjdk})$, the second inequality follows from the definition of $S_u(-0.5\eta)$ and the third inequality follows from our choice of $u$ and from $(\ref{edfkjksjdfkjsdfkje12})$.
To conclude, we obtain
\begin{equation}\label{wejirowertuofregudsuiogfudsgfuoi}
\sum_{i\in SET} f_1(\alpha,i) > {f_1\over 16u^2}.
\end{equation}

Denote $\omega=0.8u-0.5\eta-2$.
Next let us show the following fact.
\begin{fact}\label{sdfljskdjfkjsdf}
\begin{equation}\label{fsdlnvjdfnjdsjfjsdjj}
SET \subseteq \cup_{\beta = (\omega)}^uS_{\beta}(\alpha).
\end{equation}
\end{fact}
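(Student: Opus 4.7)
The plan is to show the inclusion pointwise: for an arbitrary $i \in SET$, exhibit a $\beta \in \{\omega, \omega+1, \dots, u\}$ with $i \in S_\beta(\alpha)$. Recall from Table \ref{table2} that $i \in S_\beta(\alpha)$ is defined by the two-sided bound $2^{\beta-1} \le f_1(\alpha,i) < 2^\beta$, so it suffices to prove that the integer $f_1(\alpha,i)$ satisfies $2^{\omega} < f_1(\alpha,i) < 2^u$; then one can simply take $\beta = \lfloor \log_2 f_1(\alpha,i)\rfloor + 1$, which automatically lies in the required range.

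For the lower bound, I would use the definition $(\ref{wekmfrlmkwdfkljsdkjfwdksjfsdf2})$ of $IND(i')_>$ together with the arithmetic identity $u - \alpha - 0.5\eta - 2 = \omega$, which follows from the case assumption $\alpha = u/5$ (so $u - u/5 = 0.8u$). Since $i \in SET = \cup_{i' \in S_u(-0.5\eta)} IND(i')_>$ by $(\ref{ljrenfkjsdfkjksdfkjlsd})$, there exists some such $i'$ with $f_1(\alpha,i) > 2^{u-\alpha-0.5\eta-2} = 2^{\omega}$.

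For the upper bound, I would invoke the geometric picture relating the two matrices (Picture 3): a single row of $M_{-0.5\eta}$ is partitioned into the sub-rows indexed by $IND(i')$ of the finer matrix $M_\alpha$. This is exactly the content of Fact \ref{dskkdfksdfkjfskjds}, which gives $\sum_{j \in IND(i')} f_1(\alpha,j) = f_1(-0.5\eta, i')$. Since $f_1(\alpha,\cdot)$ is non-negative and $i \in IND(i')$, each summand is bounded by the total, giving $f_1(\alpha,i) \le f_1(-0.5\eta, i') < 2^u$, where the last inequality uses $i' \in S_u(-0.5\eta)$.

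Combining the two bounds, $2^\omega < f_1(\alpha,i) < 2^u$, so the unique $\beta$ with $2^{\beta-1} \le f_1(\alpha,i) < 2^\beta$ satisfies $\omega + 1 \le \beta \le u$, whence $i \in S_\beta(\alpha) \subseteq \cup_{\beta = \omega}^u S_\beta(\alpha)$. There is no substantive obstacle here; the fact is essentially a bookkeeping exercise, and the only place one has to be careful is in verifying the identity $u - \alpha - 0.5\eta - 2 = \omega$ under the case assumption $\alpha = u/5$ of Corollary \ref{dflkmgvldsfgldsflksdflk}, which reconciles the threshold in $(\ref{wekmfrlmkwdfkljsdkjfwdksjfsdf2})$ with the lower endpoint $\omega$ in the claimed union.
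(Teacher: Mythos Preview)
Your proof is correct and follows essentially the same approach as the paper: both establish the two-sided bound $2^{\omega} < f_1(\alpha,i) < 2^u$ via the definition of $IND(i')_>$ for the lower bound and Fact~\ref{dskkdfksdfkjfskjds} together with $i' \in S_u(-0.5\eta)$ for the upper bound. Your argument is in fact slightly more direct than the paper's, since you explicitly invoke the identity $u-\alpha-0.5\eta-2 = \omega$ under the case assumption $\alpha = u/5$, whereas the paper routes through the auxiliary inequality~$(\ref{fdljvdsfkjkjdskjdskjdf})$ before concluding.
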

\begin{proof}
Let $i\in SET$. In particular, $i\in [r_\alpha]$ and thus there must exist some $\beta$ such that
\begin{equation}\label{elrkfldjfldjsfsd}
i\in S_\beta(\alpha).
\end{equation}
Let us bound the value of $\beta$.
First, observe that since $i\in SET$ it follows from $(\ref{ljrenfkjsdfkjksdfkjlsd})$ and $(\ref{wekmfrlmkwdfkljsdkjfwdksjfsdf2})$ that \begin{equation}\label{dsfljnkjdsfkjdffkjdsfk}
f_1(\alpha,i) > {2^{u-\alpha-0.5\eta-2}}.
\end{equation}
Therefore we obtain a lower bound on $\beta$:
\begin{equation}\label{ldsljsdfkjlsdfkjlsdfkj}
\beta > u-\alpha-0.5\eta-2.
\end{equation}
Indeed, if $\beta \le u-\alpha-0.5\eta-2$, we obtain a contradiction with $(\ref{dsfljnkjdsfkjdffkjdsfk})$:
\begin{equation}\label{wejirowertuofregudsuiogfudsgfuoi1}
f_1(\alpha,i) < 2^{\beta}\le {2^{u-\alpha-0.5\eta-2}}.
\end{equation}
Here the first inequality follows from $(\ref{elrkfldjfldjsfsd})$.
To obtain the upper bound on $\beta$ observe that
\begin{equation}\label{sdlfklkjdflkjfkjlfkljdfdw}
f_1(\alpha,i) \le f_1(-0.5\eta,i') < 2^u.
\end{equation}
Here the first inequality follows from Fact \ref{dskkdfksdfkjfskjds} and the second inequality follows from $(\ref{kwerketfetrfkuekfkfjfewd})$.
Thus, we obtain the upper bound on $\beta$:
\begin{equation}\label{jldskjfsdkjfsdkjfkjsdf}
\beta \le u.
\end{equation}
Indeed, if $\beta>u$ then we obtain a contradiction\footnote{Recall that $\beta$ and $u$ are integers.} with $(\ref{sdlfklkjdflkjfkjlfkljdfdw})$:
\begin{equation}\label{sdlfklkjdflkjfkjlfkljdfdw1}
f_1(\alpha,i) \ge 2^{\beta-1} \ge 2^{u}.
\end{equation}

Recall that we are proving the lemma for the case $(\ref{wekmfrlmkwdfkljsdkjfwdksjfsdf})$.
In that case Algorithm \ref{sdjnfsdfsjdfskjd} and Corollary \ref{wslrkjkfljsjdfl} imply that $\alpha = u/5, 20\eta < u$ and also $u>\Psi$. Direct computations imply that
\begin{equation}\label{fdljvdsfkjkjdskjdskjdf}
u-\alpha-0.5\eta -2\ge 0.25u.
\end{equation}
Our bounds on $\beta$ in $(\ref{jldskjfsdkjfsdkjfkjsdf})$ and $(\ref{ldsljsdfkjlsdfkjlsdfkj})$ together with $(\ref{fdljvdsfkjkjdskjdskjdf})$ completes the proof.
\end{proof}

Let us finish the proof that $(\ref{wekmfrlmkwdfkljsdkjfwdksjfsdf})$ is correct.
\begin{equation}\label{reljfnkjldfgkjndsfgkdsfkjnsndjf}
\sum_{\beta= (\omega)}^u |S_\beta(\alpha)| 2^{\beta} >
\sum_{\beta= (\omega)}^u \sum_{i\in S_\beta(\alpha)} f_1(\alpha,i) \ge \sum_{i\in SET} f_1(\alpha,i) \ge {f_1\over 16u^2}.
\end{equation}
Here the first inequality follows from the definition of $S_\beta(\alpha)$, the second inequality follows from Fact \ref{sdfljskdjfkjsdf} and the last inequality follows from $(\ref{wejirowertuofregudsuiogfudsgfuoi})$.
Finally, assume that
\begin{equation}\label{dsjfkdsjfkjsdkjsfd}
\forall \beta\in \{(\omega),\dots, u\} : |S_{\beta}(\alpha)| < {f_1\over 2^{\beta+10}\beta^3}.
\end{equation}
Then
\begin{equation}\label{ldsfkjsdfkjsdfkjsdfkjsd}
{f_1\over 16u^2} >
\sum_{\beta= (\omega)}^u {f_1\over 2^{\beta+10}\beta^3}2^{\beta} > \sum_{\beta= (\omega)}^u |S_\beta(\alpha)| 2^{\beta} > {f_1\over 16u^2}.
\end{equation}
Here the first inequality follows from direct computations, the second inequality follows from the assumption $(\ref{dsjfkdsjfkjsdkjsfd})$ and the last inequality follows from $(\ref{reljfnkjldfgkjndsfgkdsfkjnsndjf})$.
This is a contradiction and therefor the assumption $(\ref{dsjfkdsjfkjsdkjsfd})$ in wrong.
Thus, we have shown the correctness of $(\ref{wekmfrlmkwdfkljsdkjfwdksjfsdf})$.

\noindent

\end{proof}
\fi

\begin{fact}\label{sdfjsdkjfkjsdf}
If $\beta>0.8\Psi$ then $2^{10}\beta^3 \le 2^{0.5\mu\beta}$.
\end{fact}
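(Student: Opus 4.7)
The plan is to reduce the fact to an elementary inequality by taking logarithms and then use the very large value of $\Psi$ built into its definition in Table~\ref{table2}. Taking $\log_2$ of both sides, the statement is equivalent to
\begin{equation}\label{planeq1}
10 + 3\log_2 \beta \;\le\; \tfrac{1}{2}\mu\beta.
\end{equation}
Define $f(\beta) = \tfrac{1}{2}\mu\beta - 3\log_2 \beta - 10$. Since $f'(\beta) = \tfrac{1}{2}\mu - 3/(\beta\ln 2)$, the function $f$ is monotonically increasing once $\beta \ge 6/(\mu\ln 2)$, which is a small number compared to $0.8\Psi$. So it suffices to verify $f(0.8\Psi) \ge 0$, i.e.\ to prove the single inequality
\begin{equation}\label{planeq2}
0.4\mu\Psi \;\ge\; 10 + 3\log_2(0.8\Psi).
\end{equation}

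For the left-hand side, substituting $\mu = 2^{-10}$ and the definition of $\Psi$ from Table~\ref{table2} gives
\[
0.4\mu\Psi \;=\; 0.4\,\mu^{-5}\!\left(\log\!\left(\tfrac{2^\mu}{2^\mu-1}\right)+100+k\right) \;\ge\; 0.4\cdot 2^{50}\cdot 100,
\]
which is astronomically large. For the right-hand side, $\log_2(0.8\Psi)$ is a logarithm of a quantity that is at most polynomial in $k$ times $2^{60}$, hence $3\log_2(0.8\Psi) + 10$ is at most a few hundred (plus $O(\log k)$). Comparing the two sides gives \eqref{planeq2} with enormous slack, so \eqref{planeq1} holds at $\beta = 0.8\Psi$, and by monotonicity at every $\beta > 0.8\Psi$, yielding the fact.

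The only ``obstacle'' worth remarking on is being careful that $\log(2^\mu/(2^\mu-1))$ is a positive constant rather than something that could shrink with $\mu$: for small $\mu$ one has $2^\mu - 1 \le \mu\ln 2 + O(\mu^2)$, so this logarithm is bounded below by an absolute positive constant, and hence $\Psi \ge \mu^{-6}\cdot 100 = 100\cdot 2^{60}$. Once this is noted, the inequality \eqref{planeq2} is immediate and no additional calculation is needed.
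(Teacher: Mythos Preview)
Your proof is correct. The paper's proof instead invokes the Taylor-type inequality $e^x \ge x^4/4!$ (substituting $x=\beta/a$ with $a=2(\log_2 e)/\mu$) rather than taking logarithms and checking monotonicity of $f$, but both arguments reduce to the same verification that $0.8\Psi$ is large enough for the exponential term to dominate the cubic, so the approaches are essentially the same.
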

\begin{proof}
The following inequality is well known. For $x>0$:
\begin{equation}\label{dsjfkjsdkjsdkjsd}
e^x \ge {x^4\over 4!}
\end{equation}
In particular, let $a,b>1$. Then for $x>(4!)a^4b$:
\begin{equation}\label{fdkfdkgdkfjdkjf}
e^{x/a} \ge {x^4\over 4!a^4} \ge bx^3.
\end{equation}
Thus, putting $a={2(\log_2e)\over \mu}$ and $b=2^{10}$ we obtain the following.
If $\beta > (4!)a^4b$ then
\begin{equation}\label{fdkfdkgdkfjdkjf}
2^{0.5\mu\beta} \ge 2^{10}\beta^3.
\end{equation}
It remains to show that $\beta > (4!)a^4b$. This is indeed true since $\beta>0.8\Psi$ and by the definition of $\Psi$ from Table \ref{table1}.
\end{proof}

\subsubsection{Bounding the Number of Bad Rows}\label{dfvsdrgdrgrgrg}


\begin{lemma}\label{wdfsdfsdfsdf}
The number of $(\lambda, \phi, \tau)$-dense rows\footnote{Recall that the definitions of the bad and great rows are given in Definitions \ref{dsfsefsdfsdfdsfdsfdsfdsf} and \ref{ewfwefwefwef}.} is at most
${G_k \over \lambda^{k-1}t_{\alpha} \phi {\tau}}\text{.}$
\end{lemma}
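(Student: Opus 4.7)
The natural approach is a double counting argument over the set of ``witness pairs''
$$W = \{(i,l) : i \text{ is a } (\lambda,\phi,\tau)\text{-dense row},\ l \in T_\lambda,\ f_l(\alpha,i) > \tau\}.$$
Let $D$ denote the number of $(\lambda,\phi,\tau)$-dense rows. The definition of a dense row immediately gives a lower bound on $|W|$: each dense row $i$ contributes strictly more than $t_\alpha \phi$ distinct elements $l \in T_\lambda$ with $f_l(\alpha,i) > \tau$, so
$$|W| > D \cdot t_\alpha \phi.$$

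For the matching upper bound, first fix $l \in T_\lambda$ and count the rows $i$ that can appear with this $l$. Since $\sum_i f_l(\alpha,i) = f_l$, at most $f_l/\tau$ rows can satisfy $f_l(\alpha,i) > \tau$. Summing over $l \in T_\lambda$,
$$|W| \le \sum_{l \in T_\lambda} \frac{f_l}{\tau} = \frac{1}{\tau}\sum_{l \in T_\lambda} f_l.$$
Now I control $\sum_{l \in T_\lambda} f_l$ by the standard Markov-type trick used in the paper's motivating calculation: for $l \in T_\lambda$ we have $f_l > \lambda$, hence $f_l \le f_l^k / \lambda^{k-1}$, so
$$\sum_{l \in T_\lambda} f_l \;\le\; \frac{1}{\lambda^{k-1}} \sum_{l \in T_\lambda} f_l^k \;\le\; \frac{G_k}{\lambda^{k-1}},$$
the final inequality because $T_\lambda \subseteq \{l : l > 1\}$ and $G_k = F_k - f_1^k = \sum_{l>1} f_l^k$.

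Combining the two bounds on $|W|$,
$$D \cdot t_\alpha \phi \;<\; \frac{G_k}{\tau\, \lambda^{k-1}},$$
which yields $D < \dfrac{G_k}{\lambda^{k-1}\, t_\alpha\, \phi\, \tau}$, as claimed. There is no real obstacle here: the only mild subtlety is being careful that the bound $\sum f_l \le G_k/\lambda^{k-1}$ uses only elements $l\neq 1$, which is guaranteed by the definition $T_\lambda = \{l : f_l > \lambda,\ l > 1\}$, so the heavy element $1$ is correctly excluded from the counting.
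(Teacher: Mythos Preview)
Your proof is correct and follows essentially the same approach as the paper. The paper's proof records only the key estimate $\sum_{l\in T_\lambda} f_l \le G_k/\lambda^{k-1}$ and asserts that the rest ``follows from the definition''; your double-counting of pairs $(i,l)$ is exactly the natural way to fill in that step, and your care in noting that $T_\lambda$ excludes the heavy element $1$ (so the sum is bounded by $G_k$ rather than $F_k$) is the one point the paper leaves implicit.
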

\ifNOTtalk
\begin{proof}
The lemma follows from Definition \ref{dsfsefsdfsdfdsfdsfdsfdsf}
and the following bound on the total weight of all elements in $T_{\lambda}$:
$$\sum_{l\in T_{\lambda}} f_l = \sum_{l\in T_{\lambda}} f_l^k {1\over f_{l}^{k-1}} \le  {1\over \lambda^{k-1}}\sum_{l\in T_{\lambda}} f_l^k \le  {1\over \lambda^{k-1}} G_k.$$
\end{proof}
\fi

\begin{corollary}\label{ewrferfwefwefwerfwerfwerfewf}
Let $\gamma\ge \beta$ be two parameters and let $k \ge 5$.
The number of $(\gamma,\beta)$-bad rows is at most\footnote{Corollary \ref{ewrferfwefwefwerfwerfwerfewf} is not informative if ${f_1 \over 2^{\beta+\mu\gamma}}\ge r_\alpha$. However, we will apply Corollary \ref{ewrferfwefwefwerfwerfwerfewf} in a context of Corollary \ref{dflkmgvldsfgldsflksdflk} when the aforementioned trivial case does not happen.}
$f_1 \over 2^{\beta+\mu\gamma}$.
\end{corollary}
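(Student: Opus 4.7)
By Definition \ref{dsfsefsdfsdfdsfdsfdsfdsf} a $(\gamma,\beta)$-bad row is exactly a $(TR,\xi,\beta-7)$-dense row for that particular $\gamma$, so the natural starting point is Lemma \ref{wdfsdfsdfsdf} with $\lambda = TR$, $\phi = \xi$ and $\tau = 2^{\beta-7}$, which yields
$$
\#\{(\gamma,\beta)\text{-bad rows}\} \le \frac{G_k}{TR^{k-1}\,t_\alpha\,\xi\,2^{\beta-7}}.
$$
My plan is to substitute the values from Table \ref{table3} ($TR = 2^{\gamma-\alpha+\eta-1}$, $\xi = 3^{-\gamma}2^{-\beta-\mu\gamma}$, $t_\alpha = 2^{-\alpha}n^{1-1/k}$) together with the assumption $G_k \le 2^{\eta k}n$ from Table \ref{table1}, and carefully track the exponents of $2$: the $\alpha$-contributions from $TR^{k-1}$ and $t_\alpha$ combine to $-k\alpha$ in the denominator, the $\eta k$ in $G_k$ absorbs $(k-1)\eta$ from $TR^{k-1}$ to leave a single $\eta$, and the $n$-factors collapse to $n^{1/k}$. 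After simplification the bound takes the form
$$
3^\gamma \cdot 2^{\,\eta + k\alpha + (\mu - k + 1)\gamma + k + 6} \cdot n^{1/k}.
$$

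To reach the target $f_1/2^{\beta+\mu\gamma}$, I will invoke the lower bound $f_1 \ge C_2\,2^{\eta + 2\Psi + 1}n^{1/k}$ from Table \ref{table1}, which makes the $\eta$ on both sides cancel and reduces the claim to the finite-constant inequality
$$
3^\gamma \cdot 2^{\,\beta + k\alpha + (2\mu - k + 1)\gamma + k + 5} \le C_2\,2^{2\Psi}.
$$
Since $\log_2 3 < 2$, the coefficient of $\gamma$ is at most $2\mu - k + 3$, which is strictly negative for $k \ge 5$ and the small constant $\mu$ fixed in Table \ref{table2}. Hence the left side is monotonically decreasing in $\gamma$ for $\gamma \ge \beta$, so it suffices to verify the inequality at the minimal value $\gamma = \beta$, reducing the problem to bounding $\beta + k\alpha + O(1)$ in terms of $\Psi$. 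I will then split on the two branches of Algorithm \ref{sdjnfsdfsjdfskjd}: if $\alpha = -0.5\eta$ then $k\alpha \le 0$ and the (now negative) coefficient of $\beta$ dominates; if $\alpha = u/5$ with $\beta \ge 0.8u - 0.5\eta - 2$, then $\alpha \le \beta/4 + O(\eta)$ so that $k\alpha \le (k/4)\beta + O(\eta)$ and the combined coefficient of $\beta$ remains strictly negative for $k \ge 5$ and small $\mu$. The constant $\Psi$, chosen in Table \ref{table2} to exceed $100 + k$, then absorbs the residual $O(1)$ slack.

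The main obstacle is exactly this last bookkeeping step, which also explains why the hypothesis $k \ge 5$ is present: the factor $3^\gamma$ hidden in $\xi^{-1}$ together with the exponential dependence of $TR^{k-1}$ on $\gamma$ must net a strictly negative $\gamma$-coefficient once the target $2^{-\mu\gamma}$ is pulled out, and for $k \le 4$ this negativity fails and the bound does not survive. Once negativity in $\gamma$ is secured, the two-branch case analysis on the ranges produced by Algorithm \ref{sdjnfsdfsjdfskjd} together with the large value of $\Psi$ from Table \ref{table2} finishes the estimate with room to spare.
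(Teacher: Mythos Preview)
Your proposal is correct and follows essentially the same route as the paper: apply Lemma \ref{wdfsdfsdfsdf} with $\lambda=TR$, $\phi=\xi$, $\tau=2^{\beta-7}$, substitute the table values and the assumptions $G_k\le 2^{\eta k}n$ and $f_1\ge C_2 2^{\eta+2\Psi+1}n^{1/k}$, reduce to an exponent inequality, and finish with the two-branch case analysis on $\alpha$ coming from Algorithm \ref{sdjnfsdfsjdfskjd}, letting the large constant $\Psi$ absorb the residual additive terms.

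The only cosmetic difference is that you first exploit monotonicity in $\gamma$ to reduce to $\gamma=\beta$ and then bound $\alpha$ in terms of $\beta$, whereas the paper keeps $\gamma$ general, uses $\beta\le\gamma$ to fold the loose $\beta$ into the $\gamma$-side, and bounds $\alpha$ directly by $\gamma/4$ in the positive branch; the resulting inequalities are equivalent up to the same $O(\eta)$ slop (from the $-0.5\eta-2$ in the lower end of the $\beta$-range), which is harmless since $u>20\eta$ forces $\eta\ll\beta$ in that branch.
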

\ifNOTtalk
\begin{proof}

Recall that $\gamma \ge \beta$ and $u\ge 1.5\eta+\Psi$.
Apply Lemma \ref{wdfsdfsdfsdf} with these parameters:
$$
\phi = \xi, \lambda = TR, \tau = IC.
$$
Denote
$$
X = {G_k \over \lambda^{k-1}t_{\alpha} \phi {\tau}}
$$
Then, by substitution and the assumptions on the value of $G_k$ (see Table \ref{table1}):
$$
X \le {2^{k\eta}n\over 2^{(\gamma - \alpha+\eta-1)(k-1)} {n^{1-1/k}2^{-\alpha}}\xi2^{\beta-7}}.
$$
Denote
$$
Y = 2^{\eta+2\Psi-\beta-\mu\gamma}n^{1/k}.
$$
Then by assumptions on the value of $f_1$ (see Table \ref{table1}):
$$
Y \le {f_1\over 2^{\beta+\mu\gamma}}.
$$
By Lemma \ref{wdfsdfsdfsdf} it is sufficient to show that $X\le Y$.
That is equivalent to showing that
$$
k\eta - (\gamma - \alpha+\eta-1)(k-1) + \alpha + (\log_2 3)\gamma + \mu\gamma + 7\le \eta+2\Psi-\beta-\mu\gamma.
$$
After some work we obtain an equivalent statement:
$$
k\alpha+\beta+k+6 \le \gamma(k-1-2\mu-\log_2 3) +2\Psi
$$
Simplifying further it is sufficient to show that (since $\Psi$ is sufficiently large):
\begin{equation}\label{sdkjfksdfkjsd}
k\alpha \le \gamma(k-2-2\mu-\log_2 3)
\end{equation}
If $\alpha \le 0$ then the statement is true for $k\ge 4$.
If $\alpha>0$ then $\alpha \le {\gamma\over 4}$ because $\gamma >\beta$ and $\beta \ge u -\alpha$ and $\alpha \ge {u \over 5}$. Therefore, the statement is true\footnote{It is possible to obtain better bounds by choosing smaller $\alpha$. We defer the analysis to the future versions.} for $k\ge 5$.
\begin{observation}\label{sdfkjskjdkjsdfjksdwelklewlkwe}
Let $\sigma$ be sufficiently large constant.
If we replace $2^\gamma$ and $3^\gamma$ in our game with $\sigma^\gamma$ and $(\sigma+1)^\gamma$ then the
analysis will be still correct (with larger constants).
Thus, the bound $(\ref{sdkjfksdfkjsd})$ will work for any constant $k>3$.
\end{observation}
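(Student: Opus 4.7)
The plan is to re-examine the derivation of inequality~$(\ref{sdkjfksdfkjsd})$ in the proof of Corollary~$\ref{ewrferfwefwefwerfwerfwerfewf}$, tracking where the constants $2$ and $3$ enter and replacing them by $\sigma$ and $\sigma+1$. In the original game the factor $2^\gamma$ controls the number of rows between rounds and hence enters the threshold $TR = 2^{\gamma-\alpha+\eta-1}$, while the factor $3^\gamma$ controls the group size in competition and hence enters $\xi = 3^{-\gamma}2^{-\beta-\mu\gamma}$ from Table~\ref{table3}. With the modification, $TR$ becomes $\sigma^{\gamma}\cdot 2^{-\alpha+\eta-1}$ and $\xi$ becomes $(\sigma+1)^{-\gamma}2^{-\beta-\mu\gamma}$.

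First, I would redo the chain of inequalities in the proof of Corollary~$\ref{ewrferfwefwefwerfwerfwerfewf}$ with these substitutions. Only the $\gamma$-dependent parts of the relevant exponents change base: $\lambda^{k-1}$ acquires a factor $\sigma^{\gamma(k-1)}$ in place of $2^{\gamma(k-1)}$, and $\xi$ contributes a factor $(\sigma+1)^{-\gamma}$ in place of $3^{-\gamma}$. Repeating the algebra used to derive $(\ref{sdkjfksdfkjsd})$ and absorbing $\Psi$-sized constants exactly as in the original argument, the analogue becomes
\begin{equation*}
k\alpha \le \gamma\bigl((k-1)\log_2\sigma - 1 - 2\mu - \log_2(\sigma+1)\bigr).
\end{equation*}
Using $\alpha \le \gamma/5$ from the range of parameters in Algorithm~\ref{sdjnfsdfsjdfskjd}, it suffices to verify $k/5 \le (k-1)\log_2\sigma - \log_2(\sigma+1) - 1 - 2\mu$. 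Since $\log_2(\sigma+1)/\log_2\sigma \to 1$ as $\sigma\to\infty$, the right-hand side grows like $(k-2)\log_2\sigma$, so for any integer $k\ge 4$ (i.e.\ $k>3$) one can make this inequality hold by choosing $\sigma$ large enough. The case $\alpha\le 0$ is even easier, since then the LHS is non-positive while the RHS is eventually positive.

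Second, I would verify that every other part of the analysis still goes through after the substitution. The critical ingredients are the geometric-series arguments for space in Section~\ref{wdfnkjwdfskjwfdjkwef} and the noisy-lemma bounds in Section~\ref{dslkjkasdjksjdfkjsdg}. The count of active players per team, originally $O\!\left(w\sum_\gamma (2/3)^\gamma\gamma\right)$, becomes $O\!\left(w\sum_\gamma (\sigma/(\sigma+1))^\gamma\gamma\right)$, which is still $O(w)$ since $\sigma/(\sigma+1)<1$. In the application of Corollary~\ref{egtrweryrthrthhrtdhrtht} inside Lemma~\ref{sdfljsdldjfljeflkjfd}, the product of ``number of surviving candidates per group'' and ``sampling probability per candidate'' continues to telescope, and the sum over $\gamma$ converges with constants depending only on $\sigma$, $\mu$, and $\Psi$.

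The main obstacle is the bookkeeping of constants through the cascade of supporting lemmas. Larger $\sigma$ means $\sigma/(\sigma+1)$ is closer to $1$, so every convergent geometric series converges more slowly, and all absolute constants grow with $\sigma$. However, since $\sigma$ is itself a constant, these slowdowns are ultimately absorbed into the $O(\cdot)$ notation. The proof structure is entirely driven by the \emph{ratio} of the two constants (together with the gap $(k-1)\log_2\sigma - \log_2(\sigma+1)$ which grows without bound in $\sigma$), not by their individual values, so once $\sigma$ is fixed the entire chain of lemmas goes through verbatim, yielding the asserted improved bound on $k$.
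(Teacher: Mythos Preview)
Your proposal is correct and is exactly the elaboration the paper intends: the observation is stated in the paper without proof, and you have correctly traced how the constants $2$ and $3$ enter $TR$ and $\xi$, rederived the analogue of $(\ref{sdkjfksdfkjsd})$ as $k\alpha \le \gamma\bigl((k-1)\log_2\sigma - 1 - 2\mu - \log_2(\sigma+1)\bigr)$, and checked that the remaining geometric-series arguments (active-player counts, noisy-lemma sums) still converge with $\sigma/(\sigma+1)<1$. One small correction: the bound coming from the parameter ranges in Algorithm~\ref{sdjnfsdfsjdfskjd} is $\alpha \le \gamma/4$ rather than $\alpha\le\gamma/5$ (as the paper itself notes in the lines just preceding the observation), but this does not affect your conclusion since $(k-1)\log_2\sigma - \log_2(\sigma+1)\to\infty$ still dominates $k/4$ for any fixed $k>3$.
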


\end{proof}
\fi

\begin{corollary}\label{dfjsdkjfkljsdfksjkdf}
Let $\beta$ be a parameter and let $k\ge 5$.
The number of $\beta$-bad rows is at most $f_1 \over 2^{\beta+0.9\mu\beta}$.
\end{corollary}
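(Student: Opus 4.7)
The plan is to derive Corollary \ref{dfjsdkjfkljsdfksjkdf} from Corollary \ref{ewrferfwefwefwerfwerfwerfewf} by a simple union bound over $\gamma \ge \beta$, followed by summation of a geometric series. By Definition \ref{dsfsefsdfsdfdsfdsfdsfdsf}, a row is $\beta$-bad if and only if there exists some $\gamma \ge \beta$ such that it is $(TR,\xi,\beta-7)$-dense, i.e., it is $(\gamma,\beta)$-bad for that $\gamma$. So the set of $\beta$-bad rows is the union, over $\gamma \ge \beta$, of the sets of $(\gamma,\beta)$-bad rows.

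First I would invoke Corollary \ref{ewrferfwefwefwerfwerfwerfewf}, which states that the number of $(\gamma,\beta)$-bad rows is at most $f_1/2^{\beta+\mu\gamma}$. Applying the union bound gives
\begin{equation*}
\#\{\beta\text{-bad rows}\} \;\le\; \sum_{\gamma \ge \beta} \frac{f_1}{2^{\beta+\mu\gamma}} \;=\; \frac{f_1}{2^\beta}\cdot \frac{2^{-\mu\beta}}{1-2^{-\mu}} \;=\; \frac{f_1}{2^{\beta+\mu\beta}}\cdot \frac{2^{\mu}}{2^{\mu}-1}.
\end{equation*}
Thus it suffices to show that
\begin{equation*}
\frac{2^{\mu}}{2^{\mu}-1} \;\le\; 2^{0.1\mu\beta},
\end{equation*}
equivalently $0.1\mu\beta \ge \log_2\!\bigl(2^\mu/(2^\mu-1)\bigr)$.

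For this last inequality I would use the two-fold lower bound on $\beta$ available to us: by Corollary \ref{rfmkjrsdkdfskjdfgkdgf} we have $\beta > 0.8\Psi$, and by the definition of $\Psi$ in Table \ref{table2}, $\Psi = \mu^{-6}\bigl(\log(2^\mu/(2^\mu-1)) + 100 + k\bigr)$. Substituting gives
\begin{equation*}
0.1\mu\beta \;>\; 0.08\mu\Psi \;=\; 0.08\mu^{-5}\bigl(\log(2^\mu/(2^\mu-1)) + 100 + k\bigr) \;\gg\; \log_2\!\bigl(2^\mu/(2^\mu-1)\bigr),
\end{equation*}
since $\mu = 2^{-10}$ makes $\mu^{-5}$ enormous. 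This yields the required bound and completes the proof.

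There is no real obstacle here; the only care needed is to confirm that the lower bound $\beta > 0.8\Psi$ holds in the regime where the corollary is applied (this is exactly what Corollary \ref{rfmkjrsdkdfskjdfgkdgf} provides whenever $\beta$ is chosen by Algorithm \ref{sdjnfsdfsjdfskjd}) and that the constant $\Psi$ was defined with precisely this kind of geometric-tail calculation in mind — which is visible from the $\log(2^\mu/(2^\mu-1))$ term in its definition.
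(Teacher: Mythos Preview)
Your proof is correct and follows essentially the same approach as the paper: a union bound over $\gamma \ge \beta$ using Corollary \ref{ewrferfwefwefwerfwerfwerfewf}, summing the resulting geometric series, and then absorbing the constant $\tfrac{2^{\mu}}{2^{\mu}-1}=\tfrac{1}{1-2^{-\mu}}$ into the extra factor $2^{0.1\mu\beta}$ via Corollary \ref{rfmkjrsdkdfskjdfgkdgf} and the definition of $\Psi$. The paper's proof is exactly this, stated more tersely.
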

\begin{proof}
Summing over all $\gamma \ge \beta$ and using Corollary \ref{ewrferfwefwefwerfwerfwerfewf} we obtain that the number of $\beta$-bad rows is at most
$$
\sum_{\gamma\ge \beta}{f_1 \over 2^{\beta+\mu\gamma}} \le {f_1 \over 2^{\beta+\mu\beta}}{1\over 1-2^{-\mu}} \le {f_1 \over 2^{\beta+0.9\mu\beta}}.
$$
Here the first inequality follows from direct computations and the second follows from Fact \ref{rfmkjrsdkdfskjdfgkdgf} and the definition of $\Psi$.
\end{proof}

\subsubsection{Winning the $(\alpha,\beta)$-game}\label{dfjksdfkjsdfkjsdfkjsfd}

In this section we summarize the results of the two previous sections and prove the existence of a pair with a lower bound on the number of great rows that are from $S_{\beta}(\alpha)$.

\begin{lemma}\label{erregregegergregregrft}
Let $k\ge 5$. There exists a pair $\alpha,\beta$ such that the number of great rows that are from $S_{\beta}(\alpha)$ is at least
$$
{f_1 \over 2^{\beta+0.5\mu\beta-1}}.
$$
\end{lemma}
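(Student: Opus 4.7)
The plan is to get the lemma as an essentially mechanical combination of the two estimates proved in the two immediately preceding sections: a lower bound on $|S_\beta(\alpha)|$ from Corollary \ref{dflkmgvldsfgldsflksdflk} and an upper bound on the number of $\beta$-bad rows from Corollary \ref{dfjsdkjfkljsdfksjkdf}. The key is that the two exponents $0.5\mu\beta$ and $0.9\mu\beta$ are separated by a factor that becomes large once $\beta$ itself is large, and Corollary \ref{rfmkjrsdkdfskjdfgkdgf} guarantees $\beta > 0.8\Psi$, which is exactly what is needed.

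First I would invoke Corollary \ref{wslrkjkfljsjdfl} to produce the index $u > 1.5\eta + \Psi$, and then Corollary \ref{dflkmgvldsfgldsflksdflk} to pick out a pair $(\alpha,\beta)$ from the sequence generated by Algorithm \ref{sdjnfsdfsjdfskjd} (for this $u$) satisfying
$$|S_\beta(\alpha)| \ge \frac{f_1}{2^{\beta + 0.5\mu\beta}}.$$
By Corollary \ref{rfmkjrsdkdfskjdfgkdgf}, the same pair additionally satisfies $\beta > 0.8\Psi$. This is the pair that the lemma will use.

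Next, I would apply Corollary \ref{dfjsdkjfkljsdfksjkdf} with this $\beta$, obtaining a bound of $f_1 / 2^{\beta + 0.9\mu\beta}$ on the number of $\beta$-bad rows in all of $M_\alpha$, hence also on the number of $\beta$-bad rows contained in $S_\beta(\alpha)$. Since by Definition \ref{dsfsefsdfsdfdsfdsfdsfdsf} every row of $M_\alpha$ is either $\beta$-bad or $\beta$-great, subtracting yields that the number of great rows from $S_\beta(\alpha)$ is at least
$$\frac{f_1}{2^{\beta + 0.5\mu\beta}} - \frac{f_1}{2^{\beta + 0.9\mu\beta}} = \frac{f_1}{2^{\beta + 0.5\mu\beta}}\bigl(1 - 2^{-0.4\mu\beta}\bigr).$$

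Finally, I would use $\beta > 0.8\Psi$ together with the definition of $\Psi$ in Table \ref{table2} (in particular $\Psi \gg 1/\mu$) to conclude $2^{-0.4\mu\beta} \le 1/2$, so the parenthesized factor is at least $1/2$, giving the claimed bound on the number of great rows in $S_\beta(\alpha)$. There is no real obstacle here—the entire content has already been done in Corollaries \ref{dflkmgvldsfgldsflksdflk} and \ref{dfjsdkjfkljsdfksjkdf}; the only thing to be careful about is that the same $\beta$ is used on both sides of the subtraction and that the lower bound on $\beta$ from Corollary \ref{rfmkjrsdkdfskjdfgkdgf} is strong enough to absorb the $2^{-0.4\mu\beta}$ correction.
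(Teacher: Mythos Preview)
Your proposal is correct and follows essentially the same approach as the paper: pick the pair $(\alpha,\beta)$ from Corollary \ref{dflkmgvldsfgldsflksdflk}, compare the lower bound $|S_\beta(\alpha)|\ge f_1/2^{\beta+0.5\mu\beta}$ against the upper bound $f_1/2^{\beta+0.9\mu\beta}$ on bad rows from Corollary \ref{dfjsdkjfkljsdfksjkdf}, and use $\beta>0.8\Psi$ (Corollary \ref{rfmkjrsdkdfskjdfgkdgf}) together with the definition of $\Psi$ to absorb the gap. The paper phrases the final step as ``the bad rows are at most half of $S_\beta(\alpha)$'' rather than your explicit subtraction, but the computation is the same.
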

\ifNOTtalk
\begin{proof}
Consider the pair $\alpha, \beta$ from Lemma \ref{dflkmgvldsfgldsflksdflk}. For this pair we have:
$$
0.5|S_{\beta}(\alpha)| \ge {f_1 \over 2^{\beta+0.5\mu\beta-1}}\ge {f_1 \over 2^{\beta+0.9\mu\beta}}
$$
Here the first inequality follows from equation $(\ref{wefjnnewlkjkjwkerdkjwe})$ and the second inequality follows from direct computations, Fact \ref{rfmkjrsdkdfskjdfgkdgf}, and the definition of $\Psi$.
By Corollary \ref{dfjsdkjfkljsdfksjkdf} this gives us the upper bound on the number of all bad rows.
Therefore at most half of all rows in $S_{\beta}(\alpha)$ are bad and at least half of the rows are great which proves the lemma.
\end{proof}
\fi

Now we can prove the main lemma of this section. Recall the statement of the lemma.
\begin{lemma}(Lemma \ref{sdfsdfsdfsdfnsdvnsdvmnsdmnv})
\existenceLemma
\end{lemma}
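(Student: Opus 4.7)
The plan is to obtain the conclusion essentially as an immediate corollary of Lemma \ref{erregregegergregregrft}, so the whole argument reduces to substituting the definitions from Tables \ref{table1}, \ref{table2}, \ref{table3} and verifying an inequality on exponents in the two parameter regimes dictated by Algorithm \ref{sdjnfsdfsjdfskjd}.

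First I would apply Lemma \ref{erregregegergregregrft} to produce a pair $(\alpha,\beta)$ for which the number $X$ of great rows in $S_{\beta}(\alpha)$ satisfies
\[
X \;\ge\; \frac{f_1}{2^{\beta+0.5\mu\beta-1}}.
\]
Plugging the values $w=2^{-\mu\beta}n^{1-2/k}$ and $t_{\alpha}=2^{-\alpha}n^{1-1/k}$ from Table \ref{table3} into the target quantity, the desired inequality $wX2^{\beta}/t_{\alpha}\ge 1$ rearranges (after cancellation of the $2^{\beta}$ factors and the powers of $n$) to the statement
\[
2^{\alpha+\eta+2\Psi+2-1.5\mu\beta}\cdot C_2 \;\ge\; 1,
\]
after invoking the lower bound $f_1\ge C_2 2^{\eta+2\Psi+1}n^{1/k}$ from Table \ref{table1} to replace $f_1/n^{1/k}$.

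Next I would check this exponent inequality in the two regimes allowed by Algorithm \ref{sdjnfsdfsjdfskjd}. In the regime $\alpha=-0.5\eta$, $\beta=u$ with $u\le 20\eta$, the exponent is $-0.5\eta+\eta+2\Psi+2-1.5\mu u \ge 0.5\eta+2\Psi+2-30\mu\eta$, which is positive since $\mu=2^{-10}$ is tiny compared to $1$ and $\Psi$ is a large positive constant. In the regime $\alpha=u/5$ with $u>20\eta$ and $\beta\in\{0.8u-0.5\eta-2,\dots,u\}$, we have $\alpha=u/5>4\eta$ and $1.5\mu\beta\le 1.5\mu u$, so the exponent is bounded below by $u/5+\eta+2\Psi+2-1.5\mu u\ge 2\Psi$, again trivially positive. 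In both cases, the bound holds with substantial slack.

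The main obstacle — if it can be called that — is purely the bookkeeping of ensuring that the conservative choices of $\mu$ and $\Psi$ in Table \ref{table2} dominate the small linear-in-$\beta$ loss incurred by the two-level sampling factor $2^{-\mu\beta}$ in $w$. Since $\Psi$ was deliberately chosen with slack (it absorbs the $\log((2^\mu)/(2^\mu-1))$ term and an additive $100+k$), both regimes of Algorithm \ref{sdjnfsdfsjdfskjd} clear the threshold with room to spare, and the lemma follows.
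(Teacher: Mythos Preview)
Your proposal is correct and follows essentially the same approach as the paper: both invoke Lemma~\ref{erregregegergregregrft} to obtain $X \ge f_1/2^{\beta+0.5\mu\beta-1}$, substitute the definitions of $w$ and $t_\alpha$, use the lower bound on $f_1$ from Table~\ref{table1}, and then verify the resulting exponent inequality in the two parameter regimes of Algorithm~\ref{sdjnfsdfsjdfskjd}. The only cosmetic difference is that the paper states slightly stronger sufficient conditions ($\eta \ge 3\mu\beta$ and $\alpha \ge 1.5\mu\beta$) before checking them, whereas you verify the exponent directly; the arithmetic and the use of the slack in $\mu$ and $\Psi$ are identical.
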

\begin{proof}
Corollary \ref{erregregegergregregrft} implies the existence of a pair $\alpha, \beta$
such that:
$$
X \ge {f_1 \over 2^{\beta+0.5\mu\beta-1}}.
$$

Therefore,
$$
\frac{wX2^\beta }{t_{\alpha}} \ge {f_1 \over 2^{\beta+0.5\mu\beta-1}} {w2^{\beta} \over t_{\alpha}}.
$$
Substituting the definition from Table \ref{table2} we obtain :
$$
\frac{wX2^\beta }{t_{\alpha}} \ge {2^{\eta+2\Psi} n^{1/k}\over 2^{\beta+0.5\mu\beta-1}} {n^{1-2/k}2^{\beta+\alpha} \over  n^{1-1/k}2^{\mu\beta}} .
$$
Thus, it is sufficient to show that:
$$
\eta+2\Psi -\beta-0.5\mu\beta+1 +\beta+\alpha -\mu\beta \ge 0.
$$
Indeed, if $\alpha = -0.5\eta \le 0 $ then it is sufficient to show that:
$$
\eta \ge 3\mu\beta.
$$
Otherwise, $\alpha = u/5$ and it is sufficient to show that
$$
\alpha \ge 1.5\mu\beta.
$$
In both cases the bounds follow from the definitions and bounds on $\beta, \eta, \mu$ from Algorithm \ref{sdjnfsdfsjdfskjd} and Table \ref{table1}.
%
\end{proof}
\fi

\newpage
\section{Proving Theorem \ref{sigok}}\label{signaturesproofs}
In this section we prove Theorem \ref{sigok}. We begin with a definition.

\begin{definition}
Let $i$ be a fixed row and $q$ be a fixed player. Assume that an $(\alpha,\beta)$-game is being played and that there is a $\gamma$-th round for the $i$-th row. Let $a \neq b\in [n]$ be fixed. Denote by $\Gamma_{q,i}$ the pool from Section \ref{dskfjkjasdskjsdf} for the $q$-th player.
Denote by $\Upsilon(i,q,\gamma,a,b)$ the random event that both $a,b\in \Gamma_{i,z}$ and that $R_{\varrho\gamma}(a) = R_{\varrho\gamma}(b)$.
\end{definition}

\begin{fact}\label{sdgfdfgdfgdsg} For any fixed $i,q,\gamma,a,b$

$$P(\Upsilon(i,q,\gamma,a,b)) \le ({2^{\beta-\Psi}\over t_{\alpha}})^2{1\over 2^{\varrho\gamma}}.$$
In particular,
$$P(\Upsilon(i,q,\gamma,1,b)|1\in \Gamma_{i,z}) \le {2^{\beta-\Psi}\over t_{\alpha}}{1\over 2^{\varrho\gamma}}.$$
\end{fact}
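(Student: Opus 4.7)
The plan is to decompose $\Upsilon(i,q,\gamma,a,b)$ into two sub-events and bound each separately, exploiting independence.

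First I would write $\Upsilon(i,q,\gamma,a,b) = E_g \cap E_{sig}$, where
\[
E_g = \{a \in \Gamma_{i,z}\} \cap \{b \in \Gamma_{i,z}\}, \qquad E_{sig} = \{R_{\varrho\gamma}(a) = R_{\varrho\gamma}(b)\}.
\]
The hash $g$ from Definition \ref{sdfkljsdfkjksdkjf} and the signature matrix $sig$ from Section \ref{dfsdfhssfhstrt} are constructed with independent random bits, so $E_g$ and $E_{sig}$ are independent. Hence $P(\Upsilon) = P(E_g)\,P(E_{sig})$, and I bound each factor.

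For $P(E_g)$: by $(\ref{dsfsdfsdfsdfsdfsdf})$, $a \in \Gamma_{i,z}$ iff $g(a)$ lands in an interval of length $2^{\beta-\Psi}$ inside $[t_\alpha]$. Since $g$ is pairwise independent and uniform, for any $a \neq b$
\[
P(E_g) = \sum_{x,y \in [(z-1)2^{\beta-\Psi},\, z2^{\beta-\Psi})} P(g(a)=x,\, g(b)=y) = \left(\frac{2^{\beta-\Psi}}{t_\alpha}\right)^{\!2}.
\]
For $P(E_{sig})$: by Section \ref{dfsdfhssfhstrt}, the columns of $sig$ are independent, and within a column entries are $4$-wise (hence pairwise) independent uniform bits. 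For $a \neq b$, in each column $l$ we have $P(sig_{a,l}=sig_{b,l}) = 1/2$, and across the $\varrho\gamma$ columns these events are independent, so $P(E_{sig}) = 2^{-\varrho\gamma}$. Multiplying yields the first bound.

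For the conditional version with $a=1$, I condition on $\{1 \in \Gamma_{i,z}\}$. Pairwise independence of $g$ gives $P(g(b) \in I \mid g(1) \in I) = P(g(b) \in I) = 2^{\beta-\Psi}/t_\alpha$ for the target interval $I$, so $P(b \in \Gamma_{i,z} \mid 1 \in \Gamma_{i,z}) = 2^{\beta-\Psi}/t_\alpha$. Since $sig$ is independent of $g$, the signature-collision event remains independent of the conditioning and still has probability $2^{-\varrho\gamma}$. Multiplying these two factors gives the second bound. The only subtlety to watch is the independence between $g$ and $sig$ and the use of pairwise (rather than full) independence of $g$; everything else is a straightforward product.
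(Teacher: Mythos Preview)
Your proof is correct and follows essentially the same approach as the paper's: decompose $\Upsilon$ into the hash event and the signature event, use independence of $g$ and $sig$ to factor, then bound each factor via pairwise independence of $g$ and the column structure of $sig$. The paper's version is terser but uses exactly the same ingredients; your added detail on why $P(E_{sig})=2^{-\varrho\gamma}$ (independence across columns, pairwise independence within) is a welcome elaboration rather than a different argument.
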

\ifNOTtalk
\begin{proof}
$$
P(\Upsilon(i,q,\gamma,a,b)) = P(a,b\in \Gamma_{i,z})P(R_{\varrho\gamma}(a) = R_{\varrho\gamma}(b)) =
$$
$$
(P(a\in \Gamma_{i,z}))^2P(R_{\varrho\gamma}(a) = R_{\varrho\gamma}(b)) \le \left({2^{\beta-\Psi}\over t_{\alpha}}\right)^2{1\over 2^{\varrho\gamma}}.
$$
Here the first equality follows since the events $(a,b\in \Gamma_{q,i})$ and $(R_{\varrho\gamma}(a) = R_{\varrho\gamma}(b))$ are independent.
The second equality follows since $g$ is a pairwise independent function. The last inequality follows by definitions of $g$ and the signature.
The second claim of the lemma follows from the pairwise independence of the sampling hash function $g$.
\end{proof}
\fi

In the first and second phases, observe that elements from rows $i,\dots, i+2^\gamma$ can compete with $1$ during the $\gamma$-th round. Thus, the number of ``bad'' rows will increase; we show that it can increase by a factor of at most $4^{\gamma}$.
Otherwise, the proof remains correct and the claims remain for larger values of $k$.
We first redefine a dense row (see Def. \ref{ewfwefwefwef} ) as follows (we use ``thick'' instead)
\begin{definition}\label{ewfewwefew}
Let $T_\lambda  = {  \{  l :  f_l > \lambda, l > 1 \}}$. We say that $i\in [r_\alpha]$ is a $(\lambda, \phi, \tau,\upsilon)$-thick row  if:
\begin{equation}\label{wefewefwefwefwef}
| \{  l: \sum_{a=i}^{i+\upsilon} f_{l}(\alpha,a) > {\tau}, l \in T_{\lambda} \}| > t_{\alpha} \phi.
\end{equation}
Thus, we count over a range of rows instead of a single row.
\end{definition}

In the reminder of this section we use the definitions of ``faulty'' and ``perfect'' rows instead of the definitions of ``bad'' and ``great'' rows.
\begin{definition}\label{dejfnkjdsfkjfdkjdf}
Row $i\in [r_\alpha]$ is $\beta$-faulty if there exists $\gamma \ge \beta$ such that $i$ is
$(TR, \xi, \beta-7, 2^\gamma)$-thick.
Also $i$ is $\beta$-perfect if it is not $\beta$-faulty.
\end{definition}

Now, we can prove a simple corollary from Lemma \ref{wdfsdfsdfsdf}
\begin{lemma}\label{ewlkfmwlekfmlwekf}
The number of $(\lambda, \phi, \tau, \upsilon)$-thick rows is at most:
$${G_k \upsilon^2 \over \lambda^{k-1}t_{\alpha} \phi {\tau}}$$
\end{lemma}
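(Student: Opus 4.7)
The plan is to mimic the short double-counting argument used for Lemma~\ref{wdfsdfsdfsdf}, and simply pay for the fact that each appearance of an element $l$ can now be charged to several overlapping windows of length $\upsilon+1$ rather than to one row. The key quantitative input from Lemma~\ref{wdfsdfsdfsdf} that I would reuse is the global mass bound
\[
\sum_{l\in T_\lambda} f_l \;\le\; \frac{G_k}{\lambda^{k-1}},
\]
which follows directly from $f_l \ge \lambda$ for $l\in T_\lambda$ and $\sum_{l\in T_\lambda} f_l^k \le G_k$.

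Next, for each $l\in T_\lambda$ I would define the ``bad set of starting rows''
\[
A_l \;=\; \Bigl\{ i : \sum_{a=i}^{i+\upsilon} f_l(\alpha,a) > \tau \Bigr\}.
\]
Since every appearance of $l$ lies in at most $\upsilon+1$ windows of the form $[i,i+\upsilon]$, summing the windowed counts telescopes:
\[
\sum_{i} \sum_{a=i}^{i+\upsilon} f_l(\alpha,a) \;\le\; (\upsilon+1)\,f_l.
\]
Hence by a Markov-style pigeonhole, $|A_l| \le (\upsilon+1)f_l / \tau$.

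I would then double-count the pairs $(i,l)$ with $i$ a thick row and $l\in T_\lambda$ a witness to thickness. If $N$ denotes the number of $(\lambda,\phi,\tau,\upsilon)$-thick rows, Definition~\ref{ewfewwefew} gives the lower bound
\[
N \cdot t_\alpha \phi \;<\; \sum_{\text{thick } i} \bigl|\{l\in T_\lambda : i\in A_l\}\bigr| \;\le\; \sum_{l\in T_\lambda} |A_l|
\;\le\; \frac{(\upsilon+1)}{\tau}\sum_{l\in T_\lambda} f_l \;\le\; \frac{(\upsilon+1)\,G_k}{\lambda^{k-1}\,\tau}.
\]
Rearranging yields $N \le (\upsilon+1)\,G_k/(\lambda^{k-1} t_\alpha \phi \tau)$, which is absorbed into the claimed $\upsilon^2$ factor whenever $\upsilon\ge 2$ (and the case $\upsilon\le 1$ reduces to Lemma~\ref{wdfsdfsdfsdf} directly). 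There is essentially no obstacle here, only the bookkeeping of counting each stream appearance once per window it belongs to; the only mild care needed is confirming the boundary rows near the end of $M_\alpha$ (where the window is truncated) do not change the counting, which is immediate since truncation only decreases the windowed sums.
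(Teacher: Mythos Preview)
Your argument is correct, and in fact sharper than what the lemma states: you obtain a factor of $(\upsilon+1)$ rather than $\upsilon^2$. The paper takes a different route. It argues that every $(\lambda,\phi,\tau,\upsilon)$-thick row $i$ must contain, somewhere in its window $\{i,\dots,i+\upsilon\}$, a $(\lambda,\phi\upsilon^{-1},\tau)$-dense row; since each such dense row can lie in at most $\upsilon+1$ windows, this bounds the number of thick rows by roughly $\upsilon$ times the number of $(\lambda,\phi\upsilon^{-1},\tau)$-dense rows, and invoking Lemma~\ref{wdfsdfsdfsdf} with $\phi$ replaced by $\phi\upsilon^{-1}$ contributes the second factor of $\upsilon$. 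Your direct double-counting of pairs $(i,l)$ via the sets $A_l$ bypasses this two-step reduction and loses only the single window-overlap factor $\upsilon+1$. It is also worth noting that the paper's reduction step, as written, is not airtight: the condition $\sum_{a=i}^{i+\upsilon} f_l(\alpha,a)>\tau$ does not force $f_l(\alpha,a)>\tau$ for any individual $a$, so the implicit containment of the thick witness set inside $\bigcup_a\{l:f_l(\alpha,a)>\tau\}$ can fail. Your argument sidesteps this issue entirely.
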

\ifNOTtalk
\begin{proof}
For every $(\lambda,\phi, \tau,\upsilon)$-faulty row $i$
there must be at least one $(\lambda, \phi\upsilon^{-1}, \tau)$-dense row in the range $\{i,\dots, i+\upsilon\}$.
Indeed if none of these rows are dense then for each $a \in [i, i+1, i+2... i + \upsilon]$:
$$
| \{  l: f_{l}(\alpha,a) > {\tau}, l \in T_{\lambda} \}| \le t_{\alpha} \phi\upsilon^{-1}.
$$
By summing up, we conclude that $i$ cannot be faulty.
Thus, the number of $(\lambda, \upsilon, \phi, \alpha, \tau)$-faulty rows is at most $\upsilon$ times the number of
$(\lambda, \phi\upsilon^{-1}, \alpha, \tau)$-dense rows. The last number is bounded by Lemma \ref{wdfsdfsdfsdf} as
$${G_k \upsilon \over \lambda^{k-1}t_{\alpha} \phi {\tau}}$$
Thus, the lemma follows.
\end{proof}
\fi

Let $\gamma\ge \beta$ be a parameter.
\begin{corollary}\label{ewrrkfkjlkjwefelkjkweflkjkwef}
The number of $(TR, \xi, \beta-7, 2^{\gamma})$-thick rows is at most
$f_1 \over 2^{\beta+\mu\gamma}$ for $k \ge 7$.
\end{corollary}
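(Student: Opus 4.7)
The plan is to mirror the proof of Corollary \ref{ewrferfwefwefwerfwerfwerfewf} line by line, substituting Lemma \ref{ewlkfmwlekfmlwekf} in place of Lemma \ref{wdfsdfsdfsdf}. First I would apply Lemma \ref{ewlkfmwlekfmlwekf} with parameters $\lambda = TR$, $\phi = \xi$, $\tau = IC = 2^{\beta-7}$, and $\upsilon = 2^\gamma$, yielding the bound
$$X \;:=\; \frac{G_k \cdot 4^{\gamma}}{(TR)^{k-1}\, t_\alpha\, \xi\, 2^{\beta-7}}$$
on the number of $(TR, \xi, \beta-7, 2^\gamma)$-thick rows. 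The goal is then to show $X \le Y$ where $Y := 2^{\eta+2\Psi-\beta-\mu\gamma} n^{1/k}$, since the assumption $f_1 \ge C_2 2^{\eta+2\Psi+1} n^{1/k}$ from Table \ref{table1} gives $Y \le f_1 / 2^{\beta+\mu\gamma}$.

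Next I would substitute the definitions from Tables \ref{table2} and \ref{table3}, namely $G_k \le 2^{\eta k} n$, $TR = 2^{\gamma-\alpha+\eta-1}$, $t_\alpha = 2^{-\alpha} n^{1-1/k}$, and $\xi = 3^{-\gamma} 2^{-\beta-\mu\gamma}$. Taking $\log_2$ of the inequality $X \le Y$ and collecting terms reduces it to an affine inequality in $\alpha, \beta, \gamma, \eta, k$. Using $\beta \le \gamma$ and absorbing the additive constants $k + 6$ into the slack provided by the constant $\Psi$, this simplifies to
$$k\alpha \;\le\; \gamma\bigl(k - 4 - 2\mu - \log_2 3\bigr).$$
The only difference from the analogous inequality $(\ref{sdkjfksdfkjsd})$ in the proof of Corollary \ref{ewrferfwefwefwerfwerfwerfewf} is an extra subtractive $2$ on the right-hand side, accounting exactly for the extra $4^\gamma = 2^{2\gamma}$ factor arising from $\upsilon^2$ in Lemma \ref{ewlkfmwlekfmlwekf} relative to Lemma \ref{wdfsdfsdfsdf}.

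Finally I would verify this inequality in the two regimes of Algorithm \ref{sdjnfsdfsjdfskjd}. When $\alpha = -0.5\eta \le 0$ the left-hand side is non-positive, so the inequality holds whenever $k \ge 4 + 2\mu + \log_2 3$, satisfied for $k \ge 6$. When $\alpha = u/5 > 0$, the algorithm's constraints $u > 20\eta$ and $\beta \ge 0.8u - 0.5\eta - 2$ together with $\gamma \ge \beta$ give $\alpha \le \gamma/4$, so the condition reduces to $3k/4 \ge 4 + 2\mu + \log_2 3$, which holds for $k \ge 7$ once $\mu$ is the small constant fixed in Table \ref{table2}. The main obstacle is purely arithmetic bookkeeping: because the inequality is essentially tight at $k = 7$ in the positive-$\alpha$ regime, one must track the $O(1)$ constants carefully (in particular how much slack the $\Psi$ term provides and how the $-\log_2 3$ from the $3^\gamma$ grouping competes with the $+2\gamma$ penalty from $\upsilon^2$) to avoid an off-by-constant error that would force $k \ge 8$.
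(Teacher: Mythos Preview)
Your proposal is correct and follows essentially the same approach as the paper: the paper's own proof simply says to ``repeat the proof of Corollary \ref{ewrferfwefwefwerfwerfwerfewf} using Lemma \ref{ewlkfmwlekfmlwekf} instead of Lemma \ref{wdfsdfsdfsdf},'' noting that the extra $\upsilon^2 = 4^\gamma$ factor is absorbed by raising the threshold on $k$ from $5$ to $7$. You have unpacked this in more detail than the paper does, arriving at the key displayed inequality $k\alpha \le \gamma(k-4-2\mu-\log_2 3)$, which is exactly the analogue of (\ref{sdkjfksdfkjsd}) with the right-hand side reduced by $2\gamma$; your final caveat about the tightness of the constants at $k=7$ in the $\alpha>0$ regime is apt, since the paper's ``increase $k$ by $2$'' is stated heuristically and the slack indeed comes from the terms absorbed into $\Psi$ and from Observation \ref{sdfkjskjdkjsdfjksdwelklewlkwe}.
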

\ifNOTtalk
\begin{proof}
Repeat the proof of Corollary \ref{ewrferfwefwefwerfwerfwerfewf} using Lemma \ref{ewlkfmwlekfmlwekf} instead of Lemma \ref{wdfsdfsdfsdf}
and using ``faulty row'' instead of ``dense row.'' This introduces an additional factor of $4^{\gamma}$. The effect of the
change is neutralized by increasing the lower bound on $k$ by $2$: from $k\ge 5$ to $k\ge 7$.
\end{proof}
\fi
%


As a result, we can repeat the proof of Lemma \ref{erregregegergregregrft} by replacing ``great'' with ``perfect,''
``bad'' with ``faulty,'' and ``dense'' with ``thick.''

\begin{lemma}\label{erregregegergregregrft1}
Let $k\ge 7$. There exists a pair $\alpha,\beta$ such that the number of perfect rows that are from $S_{\beta}(\alpha)$ is at least
$$
{f_1 \over 2^{\beta+0.9\mu\beta}}.
$$
\end{lemma}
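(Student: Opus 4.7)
The plan is to repeat the argument of Lemma \ref{erregregegergregregrft} verbatim, only replacing the notions of \emph{dense}/\emph{bad}/\emph{great} with their multi-row analogs \emph{thick}/\emph{faulty}/\emph{perfect} (Definitions \ref{ewfewwefew} and \ref{dejfnkjdsfkjfdkjdf}). The lower bound on $|S_\beta(\alpha)|$ furnished by Corollary \ref{dflkmgvldsfgldsflksdflk} is unaffected by this change, so it continues to supply a pair $(\alpha,\beta)$ such that
$$
|S_\beta(\alpha)| \;\ge\; \frac{f_1}{2^{\beta+0.5\mu\beta}}.
$$

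Next I would upper bound the number of $\beta$-faulty rows using Corollary \ref{ewrrkfkjlkjwefelkjkweflkjkwef} in place of Corollary \ref{ewrferfwefwefwerfwerfwerfewf}. Since a $\beta$-faulty row is a $(TR,\xi,\beta-7,2^\gamma)$-thick row for some $\gamma\ge \beta$, summing over all such $\gamma$ gives
$$
\#\{\beta\text{-faulty rows}\} \;\le\; \sum_{\gamma\ge \beta} \frac{f_1}{2^{\beta+\mu\gamma}} \;\le\; \frac{f_1}{2^{\beta+\mu\beta}}\cdot\frac{1}{1-2^{-\mu}} \;\le\; \frac{f_1}{2^{\beta+0.9\mu\beta}},
$$
where the last inequality uses $\beta>0.8\Psi$ (Corollary \ref{rfmkjrsdkdfskjdfgkdgf}) and the definition of $\Psi$, exactly as in the proof of Corollary \ref{dfjsdkjfkljsdfksjkdf}. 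Subtracting the faulty count from the lower bound on $|S_\beta(\alpha)|$ and noting that $2^{-0.5\mu\beta}$ dominates $2^{-0.9\mu\beta}$ yields at least $\tfrac{f_1}{2^{\beta+0.9\mu\beta}}$ perfect rows in $S_\beta(\alpha)$, which is the desired bound.

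The only real obstacle is ensuring the new bound on faulty rows still holds, since widening the column count from a single row to $\upsilon=2^\gamma$ consecutive rows multiplies the estimate of Lemma \ref{wdfsdfsdfsdf} by a factor of $\upsilon^2=4^\gamma$ (see Lemma \ref{ewlkfmwlekfmlwekf}). This extra $4^\gamma$ has to be absorbed into the existing $2^{\beta+\mu\gamma}$ denominator. The computation in the proof of Corollary \ref{ewrferfwefwefwerfwerfwerfewf} showed that the original bound worked for $k\ge 5$; the additional $4^\gamma = 2^{2\gamma}$ pushes the threshold up by $2$, so the same computation carries through for $k\ge 7$. This is precisely why the hypothesis $k\ge 7$ appears in Corollary \ref{ewrrkfkjlkjwefelkjkweflkjkwef} and in the statement of Lemma \ref{erregregegergregregrft1}. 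Everything else in the proof of Lemma \ref{erregregegergregregrft} (the choice of pair from Corollary \ref{dflkmgvldsfgldsflksdflk}, the use of Fact \ref{sdfjsdkjfkjsdf}-style estimates on $\mu\beta$, and the subtraction step) goes through unchanged.
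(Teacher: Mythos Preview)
Your proposal is correct and follows essentially the same approach as the paper: the paper's own proof of this lemma is literally the one-line instruction to repeat the proof of Lemma \ref{erregregegergregregrft} with ``great/bad/dense'' replaced by ``perfect/faulty/thick,'' and you have carried this out in full detail, correctly invoking Corollary \ref{ewrrkfkjlkjwefelkjkweflkjkwef} (with its $4^\gamma$ penalty absorbed by raising the threshold to $k\ge 7$) in place of Corollary \ref{ewrferfwefwefwerfwerfwerfewf}. The only cosmetic difference is that the paper argues ``at most half the rows are faulty, hence at least half are perfect,'' while you do a direct subtraction; both yield the stated bound.
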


Our goal is to show that the arguments from Section \ref{dsfknskdjfkjsdfjksdfjk} will still be correct.
Specifically, we will repeat the proof of Lemma \ref{sdfljsdldjfljeflkjfd}.
First, we will discuss the first and second phases of the Signature creation process.
\begin{lemma}\label{sdfkjwesfkjwefkj}
Consider the event that $1$ has been chosen by the $z$-th player of $i$-th team.
That is, the event $C_{i,z}$ from the proof of Lemma \ref{sdfljsdldjfljeflkjfd} is true.
Then the probability of a non heavy element colliding with $1$ during the first and the second phase is bounded by $0.01$.
\end{lemma}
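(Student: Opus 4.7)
The plan is to apply Fact \ref{sdgfdfgdfgdsg} together with a union bound over positions and rounds. Specifically, I will fix a round $\gamma$ with $\beta \le \gamma \le \lceil 10\log\log n\rceil$ (the rounds comprising phases one and two), bound the per-round probability of any non-heavy collision, and then sum the resulting geometric series over $\gamma$.

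For a fixed $\gamma$, the $z$-th player of team $i$ has compared its signature against at most $2^\gamma t_\alpha$ positions, namely the entries of rows $i, i+1, \dots, i+2^\gamma$ of $M_\alpha$. At round $\gamma$, signatures have length $\varrho\gamma$, so for each position $s$ with $p_s = b \neq 1$, the second assertion of Fact \ref{sdgfdfgdfgdsg} gives
\begin{equation*}
P(\Upsilon(i, z, \gamma, 1, b) \mid C_{i,z}) \le \frac{2^{\beta-\Psi}}{t_\alpha} \cdot \frac{1}{2^{\varrho\gamma}}.
\end{equation*}
The legitimacy of conditioning on $C_{i,z}$ is ensured because the signature matrix $sig$ is independent of the sampling hash $g$, and $g$ is pairwise independent, so the marginal probability that $g(b)$ lies in the same pool as $g(1)$ is unchanged by $C_{i,z}$. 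A union bound over the $2^\gamma t_\alpha$ positions therefore yields
\begin{equation*}
P(\exists b\neq 1 \text{ colliding at round } \gamma \mid C_{i,z}) \le 2^\gamma t_\alpha \cdot \frac{2^{\beta-\Psi}}{t_\alpha} \cdot \frac{1}{2^{\varrho\gamma}} = \frac{2^\beta}{2^{(\varrho-1)\gamma + \Psi}}.
\end{equation*}

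Summing over all $\gamma \ge \beta$ produces a geometric series, which I bound using $\varrho > 100$:
\begin{equation*}
\sum_{\gamma\ge \beta} \frac{2^\beta}{2^{(\varrho-1)\gamma + \Psi}} \le \frac{2^\beta}{2^{(\varrho-1)\beta + \Psi}} \cdot \frac{1}{1 - 2^{-(\varrho-1)}} \le \frac{2}{2^{(\varrho-2)\beta + \Psi}} \le \frac{2}{2^{\Psi}}.
\end{equation*}
By the choice of $\Psi$ in Table \ref{table2} (in particular $\Psi \ge 100$), this is far below $0.01$, completing the argument.

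The main obstacle I anticipate is being careful about which positions actually get compared against the player's signature during phases one and two; once the counting-range $[i, i+2^\gamma]$ is fixed per round and the union bound is applied per position, the large value of $\varrho$ (chosen $> 100$ precisely for this purpose) absorbs the factor of $2^\gamma$ from the number of positions and leaves a rapidly decaying series in $\gamma$, so no tighter accounting is required.
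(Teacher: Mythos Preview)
Your proof is correct and follows essentially the same approach as the paper: apply Fact \ref{sdgfdfgdfgdsg} conditionally on $C_{i,z}$, union bound over the at most $2^\gamma t_\alpha$ candidates per round, and use $\varrho>100$ with $\gamma\ge\beta$ to kill the resulting term. You are slightly more explicit than the paper in summing the geometric series over $\gamma$ and in retaining the $2^{-\Psi}$ factor, but the argument is the same.
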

\ifNOTtalk
\begin{proof}
W.l.o.g., assume that $1$ has been sampled at row $i$ by player $z$. Let $l\neq 1$ be another element such that $f_{l}(\alpha,j) > 0$ for any $j$ such that $i
\le j \le i+2^{\gamma}$. The probability that $l$ will collide with $1$ is bounded by Fact \ref{sdgfdfgdfgdsg}. During the $\gamma$-th round, there are at most $2^{\gamma}t_{\alpha}$ such $l$. By using union bound, the conditional probability that any one of these elements will collide with $1$ is at most
$$
{2^{\beta-\Psi}\over t_{\alpha}}{1\over 2^{\varrho\gamma}}2^{\gamma}t_{\alpha} \le 2^{\beta-(\varrho-1)\gamma}.
$$
For $\varrho > 100$ and $\beta\le \gamma$ we have that the probability of collision for any $\gamma$ is at most $0.01$.
\end{proof}
\fi

In addition to the elements with large frequency, it is possible that sufficiently many elements with low frequency collide such that the
total frequency will be large. We bound the probability of that event.
\begin{lemma}
Consider the case when the event $C_{i,z}$ from the proof of Lemma \ref{sdfljsdldjfljeflkjfd} is true.
Let $Y_{i,z',\gamma}$ be the sum of all frequencies of all elements $l\notin T_{TR}$ that appear in rows $i,\dots, i+2^\gamma$ and that
agree on the signature and on the hash function with the sample of the $z'$-th player in the $i$-th row.
Let $Q_{i,z}$ be the event that the total sum of all $Y_{i,z',\gamma}$ of players that can play with $z$-th player
is larger than $2^\gamma$ for any $\gamma \le \log\log n$.
Then $P(Q_{i,z}\mid C_{i,z}) \le 0.01.$
\end{lemma}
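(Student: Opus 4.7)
The plan is to bound $E\!\left[\sum_{z'} Y_{i,z',\gamma} \mid C_{i,z}\right]$ for each $\gamma$ separately using linearity of expectation, apply Markov's inequality to control the tail at the threshold $2^\gamma$, and then union-bound over the $O(\log\log n)$ relevant values of $\gamma$. The key observation is that although there are potentially many low-frequency elements, we can parametrize the collisions by \emph{stream positions} in rows $i,\ldots,i+2^\gamma$, of which there are exactly $2^\gamma t_\alpha$. Writing
\[
Y_{i,z',\gamma} = \sum_{\text{positions }p\text{ in rows }i,\ldots,i+2^\gamma}\mathbf{1}[\ell_p\notin T_{TR}]\cdot\mathbf{1}[\ell_p\in\Gamma_{i,z'}]\cdot\mathbf{1}[R_{\varrho\gamma}(\ell_p)=R_{\varrho\gamma}(\ell^*_{z'})],
\]
where $\ell_p$ denotes the element at position $p$ and $\ell^*_{z'}$ the sample of the $z'$-th player, reduces the problem to bounding the per-position collision probability.

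For the per-position bound I would argue that for any fixed position $p$ with $\ell_p\neq\ell^*_{z'}$, pairwise independence of the hash function $g$ gives $P(\ell_p\in\Gamma_{i,z'})\le 2^{\beta-\Psi}/t_\alpha$, and independence of the signature matrix $sig$ from $g$ together with pairwise independence of signatures across distinct elements gives $P(R_{\varrho\gamma}(\ell_p)=R_{\varrho\gamma}(\ell^*_{z'}))\le 2^{-\varrho\gamma}$. Conditioning on $C_{i,z}$, which only fixes $g(1)$ and $R_{\varrho\gamma}(1)$, does not alter either bound because both $g$ and $sig$ are pairwise independent per element. Combining, the per-position collision probability is at most $(2^{\beta-\Psi}/t_\alpha)\cdot 2^{-\varrho\gamma}$, so
\[
E\!\left[\sum_{z'} Y_{i,z',\gamma}\mid C_{i,z}\right]\le 3^\gamma\cdot 2^\gamma t_\alpha\cdot\frac{2^{\beta-\Psi}}{t_\alpha}\cdot 2^{-\varrho\gamma}=2^{\beta-\Psi+\gamma(1-\varrho+\log_2 3)},
\]
where the factor $3^\gamma$ accounts for the players competing with $z$ in round $\gamma$. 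Markov's inequality then yields $P(\sum_{z'} Y_{i,z',\gamma}>2^\gamma\mid C_{i,z})\le 2^{\beta-\Psi+\gamma(\log_2 3-\varrho)}$.

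Finally, I would union-bound over $\gamma\in[\beta,\log\log n]$. Since $\varrho>100$, the ratio $2^{\log_2 3-\varrho}<2^{-98}$ produces a rapidly converging geometric series dominated by its first term, giving
\[
P(Q_{i,z}\mid C_{i,z})\le 2\cdot 2^{\beta-\Psi+\beta(\log_2 3-\varrho)}\le 2^{1-97\beta-\Psi},
\]
which is well below $0.01$ by the definition of $\Psi$ in Table~\ref{table2}. The main obstacle is purely notational: one must verify carefully that the conditioning on $C_{i,z}$ and the dependence of $\ell^*_{z'}$ on $g$ do not break the per-appearance independence. The resolution is that $sig$ is drawn independently of both $g$ and the reservoir sampling randomness, so conditional on any realization of $(g,\ell^*_{z'})$ the signatures of elements other than $\ell^*_{z'}$ remain fresh and at most pairwise dependent. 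Once this is set up, the rest of the calculation is routine.
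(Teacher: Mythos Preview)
Your argument is correct and follows the same skeleton as the paper's own proof: bound the expectation of the colliding mass by counting stream positions in rows $i,\ldots,i+2^\gamma$, multiply by the hash-pool probability $2^{\beta-\Psi}/t_\alpha$ and the signature-match probability $2^{-\varrho\gamma}$, then apply Markov at threshold $2^\gamma$ and union-bound over $\gamma$. The paper's version is terser (it writes the bound for a single $Y_{i,z',\gamma}$ as $2^\gamma t_\alpha\cdot 2^\gamma\cdot(2^{\beta-\Psi}/t_\alpha)\cdot 2^{-\varrho\gamma}\le 2^{-(\varrho-3)\gamma}$ and then just says ``summing over $\gamma$''), whereas you make the sum over the $3^\gamma$ competing players explicit and spell out the Markov step and the conditioning argument. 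Your handling of the conditioning on $C_{i,z}$ and on the realized sample $\ell^*_{z'}$---using that $sig$ is independent of $g$ and of the reservoir randomness, and that both families are pairwise independent across elements---is the right way to justify the per-position bound, and is exactly the point the paper leaves implicit.
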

\ifNOTtalk
\begin{proof}

Fix $\gamma$.
$$
E(Y_{i,z',\gamma}) \le 2^{\gamma} t_{\alpha} 2^{\gamma} {2^{\beta-\Psi} \over t_{\alpha}}{1\over 2^{\rho \gamma}} \le {1\over 2^{(\rho-3)\gamma}}.
$$
Summing over $\gamma$ we obtain the result.


\end{proof}
\fi

\noindent Now, we will discuss the third phase of Signature creation.

\begin{lemma}
Consider the case when the event $C_{i,z}$ from the proof of Lemma \ref{sdfljsdldjfljeflkjfd} is true.
Let $L_{i,z}$ be the event that $1$ will not beat one of its teammates after the third phase.
Then $P(L_{i,z}\mid C_{i,z}) \le 0.01.$
\end{lemma}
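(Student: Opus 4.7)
The plan is to reduce the analysis of phase three to a near-identical copy of the analysis carried out in Lemma \ref{sdfljsdldjfljeflkjfd}, applied to the \emph{new} ID-based counter instead of the original signature-based counter. The key conceptual point is that by the time we discard the old counter at the end of phase two (round $\lceil 10\log\log n\rceil$), the new counter has been accumulating for polynomially many more rows than the old counter ever did, so essentially no information is lost in the switch.

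First I would condition on the sub-event of $C_{i,z}$ under which player $z$ actually receives $1$ as its stored sample (this is the event $\bar{H}_{i,z}$ analyzed in the proof of Lemma \ref{sdfljsdldjfljeflkjfd}, with conditional probability at least $0.98$), and then argue that the ID assigned in phase two is indeed $1$. By the signature-collision bound of Lemma \ref{sdfkjwesfkjwefkj}, with conditional probability at least $0.99$ no non-heavy element shares both the hashing pool and the signature of $1$ across the first $\lceil 10\log\log n\rceil$ rounds. Under this event the first post-phase-one element with matching signature that triggers ID assignment is guaranteed to be $1$, so the new counter from that point on counts exactly the appearances of $1$.

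Next I would quantify the ``polynomially larger'' claim. Let $j^\star$ be the row in which the real ID is assigned; by Section \ref{dfsdfhssfhstrt} we have $j^\star\le i + 2^{\lceil\log\log n\rceil+1}$. The new counter after $2^\gamma$ rows equals $\sum_{l=j^\star}^{i+2^\gamma}f_1(\alpha,l)$, which differs from the full count $\sum_{l=i}^{i+2^\gamma}f_1(\alpha,l)$ by at most the sum over the first $O(\log n)$ rows. For $\gamma \ge 10\log\log n$ the saturation modification of Section \ref{sdfljsjfsljfdlsjef} caps each row contribution by $2^{4\lceil\log\log n\rceil}=\mathrm{polylog}(n)$, so the discarded mass is $\mathrm{polylog}(n)$ while the steady-element lower bound from Section \ref{wefljlnkewjfkjewfkjkwef} guarantees the full counter is at least $TR=2^{\gamma-\alpha+\eta-1}\ge \log^{10}(n)$. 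Hence the new counter is at least $(1-o(1))$ times the old one at the start of phase three, which in particular still exceeds $TR$, so $1$ does not expire when the old counter is dropped.

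Finally I would replay the competitor analysis from Lemma \ref{sdfljsdldjfljeflkjfd} against the new counter. Competitor counters in phase three are also driven by true ID counts; but by Lemma \ref{ewrrkfkjlkjwefelkjkweflkjkwef} and Corollary \ref{ewrferfwefwefwerfwerfwerfewf} the fraction of $\beta$-faulty rows in the enlarged range $[i,i+2^\gamma]$ is at most $f_1/2^{\beta+\mu\gamma}$, so for perfect starting rows the union bound across the $3^\gamma$ competitors in round $\gamma$ yields a collision/survival probability at most $2^{-\Psi}2^{-\mu\gamma}$, which sums over $\gamma\ge \beta$ to at most $0.01$. Combining the failure of ID assignment ($\le 0.01$), the rare event that a saturated row inflates a competitor past the new counter of $1$ (absorbed into the same bound), and the ordinary competitor-survival event, the total failure probability in phase three is at most $0.01$, as required. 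The main obstacle is the bookkeeping in step three: carefully showing that the saturation cap together with the round gap $2^{\log\log n}$ versus $2^{10\log\log n}$ really does make the discarded portion of the signature counter polynomially dominated by the retained ID counter, uniformly over the distribution of $f_1(\alpha,\cdot)$ on the initial rows of the team.
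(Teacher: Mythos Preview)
Your proposal is correct and follows essentially the same approach as the paper: bound the mass discarded with the old signature-based counter by $\mathrm{polylog}(n)$ via the saturation cap, observe that the new ID-based counter at the start of phase three is at least $2^{10\log\log n}$, and then rerun the competitor analysis of Section \ref{dfvsdrgdrgrgrg} with the (negligibly weakened) threshold. The only differences are cosmetic: you make the ID-assignment-correctness step explicit via Lemma \ref{sdfkjwesfkjwefkj}, whereas the paper leaves this implicit, and the paper phrases the discarded-mass bound as $(\log\log n)\cdot 2^{4\log\log n}$ (rounds times per-round cap) rather than via row counting.
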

\ifNOTtalk
\begin{proof}
An estimated frequency of any element in the third phase is upper-bounded by its real frequency.
This follows from the definition of the third phase.
At the same time the frequency that can be lost for $1$ is bounded by $(\log\log n)  2^{4(\log\log n)}$.
Here the first number bounds the number of rounds and the second number bounds the maximum frequency to lose per round.
At the same time the frequency that will be collected in the second phase is at least $2^{10\log\log n}$.
Thus, we can repeat the analysis of Section \ref{dfvsdrgdrgrgrg} for the rounds in the third phase with only one change.
The counter will be $2^{\gamma - \alpha +\eta - 2}$ instead of $2^{\gamma - \alpha +\eta-1}$.
This change does not affect the correctness. Also, the analysis for faulty rows will be the same as in Corollary \ref{ewrrkfkjlkjwefelkjkweflkjkwef}.

\end{proof}
\fi


Thus, the behavior of the algorithm for this set will be identical to the behavior without modification and thus the player that samples the heavy hitter, $1$, will be the winner.
In addition to storing the IDs of the samples in the previous rows we also need to store the IDs during execution of reservoir sampling.
Once sampled we will assign a signature instead of an ID.
For the same reasons as in Section \ref{dfsdfhssfhstrt} the change does not affect the correctness.

\newpage
\section{Removing the assumption that $F_1 \le C_2n$.}\label{asdfadsfasfasf}
In this section we will remove the assumption that $F_1 \le C_2n$.
Instead we will assume that we allow two passes over the stream.
In the first pass we will compute $F_1$ and approximate $F_0$. In the second pass we will subsample the stream using $p={n/F_1}$.
The expected length of the sample stream is at most $10n$ w.p. $0.9$.

The following lemma shows that the heavy element in the original stream remains a heavy element in the sampled stream.
The frequency of the found heavy element is $(1\pm \epsilon)pf_i$, with high probability, by Chernoff bound.

The following is a section from \cite{subsampling}.
\begin{theorem}\label{thr:1}
Let $D$ be a stream and $i$ be a heavy element w.r.t. $F_k$ on $D$. Let $k\ge 1$ and let $p\ge \mu^{-1} = {F_0/F_1}$. Then there exists a constant $c_k$ such that with a constant probability,
$i$ is a $c_k$-heavy element w.r.t. $F_k$ on $D_p$.
\end{theorem}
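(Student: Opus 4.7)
\textbf{Proof plan for Theorem \ref{thr:1}.} Denote by $\tilde f_j$ the frequency of element $j$ in the subsampled stream $D_p$, and write $\tilde F_k = \sum_j \tilde f_j^k$. The plan is to produce a constant $c_k$ by establishing, on a common constant-probability event, a lower bound $\tilde f_i \ge \Omega_k(pf_i)$ and an upper bound $\tilde F_k \le O_k(pF_1 + p^kF_k)$, and then combining them through the power-mean/H\"older inequality $F_k \ge F_1^k/F_0^{k-1}$.

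For the lower bound on $\tilde f_i$, note that $\tilde f_i \sim \mathrm{Binomial}(f_i,p)$. Since $i$ is $\rho$-heavy with respect to $F_k$, we have $f_i^k \ge \rho F_k$, and combined with $F_k \ge F_1^k/F_0^{k-1}$ this yields $f_i \ge \rho^{1/k} F_1/F_0^{(k-1)/k}$. Using $p \ge F_0/F_1$ then gives $pf_i \ge \rho^{1/k} F_0^{1/k} \ge \rho^{1/k}$. A standard Chernoff bound delivers $P(\tilde f_i \ge pf_i/2) \ge 1/2$ whenever $pf_i$ is sufficiently large, and the corner case $pf_i = O(1)$ (which forces $F_0$ to be bounded by a constant depending only on $\rho$ and $k$) is handled directly by $P(\tilde f_i \ge 1) \ge 1 - (1-p)^{f_i} \ge 1 - e^{-pf_i}$, a constant bounded away from zero.

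For the upper bound on $\tilde F_k$, expand $X^k$ for $X \sim \mathrm{Binomial}(n,p)$ in falling factorials via Stirling numbers of the second kind. Since $E[X(X-1)\cdots(X-j+1)] = p^j n(n-1)\cdots(n-j+1) \le (np)^j$, one obtains $E[X^k] \le C_k(np + (np)^k)$ for a constant $C_k$ depending only on $k$ (the two summands cover the regimes $np \le 1$ and $np > 1$). Summing over all elements yields $E[\tilde F_k] \le C_k(pF_1 + p^k F_k)$, so by Markov's inequality $\tilde F_k \le 10 C_k(pF_1 + p^k F_k)$ with probability at least $0.9$; a union bound with the event from the previous paragraph leaves constant probability for both bounds to hold.

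It remains to pick $c_k$ so that, on this event, $(pf_i/2)^k \ge c_k \cdot 10C_k(pF_1 + p^k F_k)$. For the $p^k F_k$ term, heaviness $f_i^k \ge \rho F_k$ reduces the required inequality to $c_k \lesssim \rho/(2^k C_k)$. For the $pF_1$ term, we need $p^{k-1} f_i^k \gtrsim C_k F_1/\rho$; substituting $p \ge F_0/F_1$ and invoking $F_k \ge F_1^k/F_0^{k-1}$ once more gives $p^{k-1} f_i^k \ge (F_0/F_1)^{k-1} \rho F_k \ge \rho F_1$, and the same choice of $c_k = \Theta_k(\rho/C_k)$ suffices. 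The main obstacle in executing this plan is the low-concentration regime $pf_i = O(1)$ flagged above, but precisely in that regime $F_0$ is a bounded constant and the heavy element dominates the subsampled stream trivially, so the ratio $\tilde f_i^k/\tilde F_k$ is still bounded below by a $k$-dependent constant.
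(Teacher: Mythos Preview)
Your proposal is correct and follows essentially the same approach as the paper: a Chernoff lower bound on $\tilde f_i$, a Stirling-number moment bound plus Markov to control $E[\tilde F_k]$, and the H\"older inequality $F_k \ge F_1^k/F_0^{k-1}$ to close the argument. The only cosmetic difference is that the paper absorbs the ``light-element'' term into $p^kF_k$ inside its auxiliary fact (bounding $E[\tilde F_k]\le \alpha_k p^kF_k$ directly) whereas you carry the $pF_1$ term to the final comparison and apply H\"older there; you also treat the corner case $pf_i=O(1)$ more explicitly than the paper does.
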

\ifNOTtalk
\begin{proof}
By Chernoff bound, the frequency of $i$ in  $D_p$ is at least $(1-\epsilon)pf_i$ with high probability.
By Fact \ref{fact:5}, the $k$-th frequency moment of $D_p$ is bounded by $\alpha_k \mu^{-k}\sum_{i=1}^n v_i^k$. Thus, $i$ is a heavy element.
\end{proof}
\fi

\begin{fact}\label{fact:5}
Let $V \in {\left(Z^{+}\right)}^n$ be a vector with strictly positive integer entries $v_i$.
Let $\mu = {1\over n}\sum_{i=1}^n v_i$. Note that $\mu \ge 1$.
Let $X_i \sim B(v_i, \mu^{-1})$ and $X = \sum_{i=1}^n X_i^k.$
Then there exists a constant $\alpha_k$ that depends only on $k$ such that
$$
P(X > \alpha_k \mu^{-k}\sum_{i=1}^n v_i^k) < 0.1.
$$
\end{fact}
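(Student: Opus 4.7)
My plan is to bound $E[X]$ by an explicit constant times $\mu^{-k}\sum_i v_i^k$ and then finish via Markov's inequality. The one nontrivial input is a clean bound on the $k$-th moment of a Binomial.

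First I would recall the standard falling-factorial expansion of Binomial moments. For $X_i \sim B(v_i, 1/\mu)$, using Stirling numbers of the second kind $S(k,j)$,
\begin{equation*}
E[X_i^k] \;=\; \sum_{j=1}^{k} S(k,j)\, v_i^{\underline{j}}\, \mu^{-j} \;\le\; B_k \cdot \max_{1\le j\le k}\left(\frac{v_i}{\mu}\right)^{\!j},
\end{equation*}
where $B_k = \sum_{j=1}^k S(k,j)$ is the $k$-th Bell number and I used $v_i^{\underline{j}} \le v_i^j$. Since the maximum of $(v_i/\mu)^j$ over $j\in\{1,\dots,k\}$ is attained at the endpoints, I get the convenient bound
\begin{equation*}
E[X_i^k] \;\le\; B_k\left(\frac{v_i}{\mu} \;+\; \frac{v_i^k}{\mu^k}\right).
\end{equation*}

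Next I would sum this over $i$. Using $\sum_i v_i = n\mu$ by the definition of $\mu$,
\begin{equation*}
E[X] \;=\; \sum_{i=1}^n E[X_i^k] \;\le\; B_k\Bigl(\tfrac{1}{\mu}\sum_i v_i \;+\; \mu^{-k}\sum_i v_i^k\Bigr) \;=\; B_k\Bigl(n \;+\; \mu^{-k}\sum_i v_i^k\Bigr).
\end{equation*}
The key structural observation is that by Jensen's inequality applied to the convex function $x\mapsto x^k$,
\begin{equation*}
\frac{1}{n}\sum_{i=1}^n v_i^k \;\ge\; \Bigl(\tfrac{1}{n}\sum_{i=1}^n v_i\Bigr)^{\!k} \;=\; \mu^k,
\end{equation*}
so $n \le \mu^{-k}\sum_i v_i^k$. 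Plugging this back in yields $E[X] \le 2 B_k\, \mu^{-k}\sum_i v_i^k$.

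Finally, Markov's inequality gives
\begin{equation*}
P\Bigl(X > \alpha_k\, \mu^{-k}\sum_i v_i^k\Bigr) \;\le\; \frac{E[X]}{\alpha_k\, \mu^{-k}\sum_i v_i^k} \;\le\; \frac{2B_k}{\alpha_k},
\end{equation*}
so choosing $\alpha_k := 20 B_k$ (which depends only on $k$) makes the probability at most $0.1$, as required. The only real obstacle is the moment computation, and the Stirling-number identity above gives it cleanly; the rest is Jensen plus Markov.
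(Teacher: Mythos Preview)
Your proof is correct and follows essentially the same line as the paper's: bound $E[X_i^k]$ via the Stirling-number expansion of Binomial moments, sum to get $E[X]\le C_k\bigl(n+\mu^{-k}\sum_i v_i^k\bigr)$, absorb the $n$ term using the power-mean inequality (you phrase it as Jensen, the paper as H\"older --- same inequality here), and finish with Markov. Your packaging is in fact a bit tighter (constant $B_k$ rather than $(k{+}1)B_k$); the only nit is that $\alpha_k=20B_k$ yields $\le 0.1$ rather than the stated strict $<0.1$, so take any $\alpha_k>20B_k$.
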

\ifNOTtalk
\begin{proof}
By Lemma \ref{fact:2}
$$
E(X_i^k) \le \beta_k ((\mu^{-1}v_i)^k + 1)
$$
Thus,
$$
E(X) < \beta_k (\mu^{-k}\sum_{i=1}^n v_i^k) + \beta_kn.
$$
Also, by the H\"{o}lder inequality
$$
{\sum_{i=1}^n v_i \over n^{1-1/k}} \le (\sum_{i=1}^n v_i^k)^{1/k}
$$
Thus,
$$
n^{1/k} = \mu^{-1}{\sum_{i=1}^n v_i \over n^{1-1/k}} \le \mu^{-1}(\sum_{i=1}^n v_i^k)^{1/k}
$$
Finally, $n < (\mu^{-k}\sum_{i=1}^n v_i^k)$.
We conclude the proof by putting $\alpha_k = 200\beta_k$ and applying Markov's inequality.
\end{proof}
\fi

\begin{lemma}\label{fact:2}
Let $X\sim B(N,p)$. There exists a constant $\beta_k$ that depends only on $k$ such that if $Np \ge 1$ then
\begin{equation}\label{eq:6}
E(X^k) \le \beta_k(Np)^k,
\end{equation}
and if $Np < 1$ then
\begin{equation}\label{eq:eeew}
E(X^k) \le \beta_k.
\end{equation}
\end{lemma}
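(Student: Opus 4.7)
The plan is to bound the moments of $X \sim B(N,p)$ via its factorial moments. Writing $X = \sum_{i=1}^N Y_i$ where the $Y_i$'s are i.i.d. Bernoulli$(p)$, I would expand
\[
X^k = \sum_{(i_1,\dots,i_k)\in [N]^k} Y_{i_1}Y_{i_2}\cdots Y_{i_k},
\]
and use independence together with $Y_i^2=Y_i$ to observe that $E[Y_{i_1}\cdots Y_{i_k}]=p^{j}$ whenever the multi-index $(i_1,\dots,i_k)$ has exactly $j$ distinct values. Counting the number of such multi-indices with the Stirling numbers of the second kind $S(k,j)$ yields the identity
\[
E[X^k] \;=\; \sum_{j=1}^{k} S(k,j)\, N(N-1)\cdots(N-j+1)\, p^{j} \;\le\; \sum_{j=1}^{k} S(k,j)\,(Np)^{j}.
\]

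From here the two cases of the lemma fall out immediately. If $Np \ge 1$, then $(Np)^j \le (Np)^k$ for every $1\le j\le k$, so
\[
E[X^k] \;\le\; \left(\sum_{j=1}^{k} S(k,j)\right)(Np)^k \;=\; B_k\,(Np)^k,
\]
where $B_k$ is the $k$-th Bell number. If instead $Np<1$, then $(Np)^j\le 1$ for every $j\ge 1$ and the same sum gives $E[X^k]\le B_k$. Setting $\beta_k := B_k$ (which depends only on $k$) establishes both inequalities simultaneously.

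There is no substantive obstacle: the argument is a standard combinatorial manipulation. The only point worth verifying carefully is the identity expressing $E[X^k]$ as a linear combination of falling factorial moments, which is a textbook consequence of the multinomial expansion and the fact that $Y_i$ is a $\{0,1\}$-valued random variable (so powers of $Y_i$ collapse to $Y_i$). An alternative route that avoids Stirling numbers entirely is to induct on $k$ using $E[X^k]\le (Np)\,E[(X+1)^{k-1}]$ (which follows from size-biasing of the binomial distribution), splitting the right-hand side via the binomial theorem and absorbing lower-order terms into the constant; this produces the same statement with a possibly different but still $k$-dependent constant $\beta_k$.
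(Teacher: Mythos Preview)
Your proof is correct and follows essentially the same route as the paper: both express $E[X^k]$ via the Stirling-number/factorial-moment identity $E[X^k]=\sum_j S(k,j)\,N^{\underline{j}}\,p^j$, bound $N^{\underline{j}}p^j\le (Np)^j$, and then split into the two cases $Np\ge 1$ and $Np<1$ using the Bell number $B_k=\sum_j S(k,j)$. The only cosmetic difference is that you derive the identity directly from the Bernoulli decomposition (whereas the paper cites it) and you obtain the slightly sharper constant $\beta_k=B_k$ rather than the paper's $\beta_k=(k+1)B_k$.
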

\ifNOTtalk
\begin{proof}
Let $S(k,l)$ be a Stirling number of the second kind and let $B_k$ be the $k$-th Bell number (see \cite{Knuth:1997:ACP:260999} for the definition).
Using $(3.5)$ and $(1.247)$ from \cite{johnson2005univariate}, we can write:
\begin{equation}\label{eq:ffdgfdgfdg}
E(X^k) = \sum_{l=0}^k S(k,l){N!p^l\over (N-l)!}.
\end{equation}
Recall that $B_k = \sum_{l=0}^k S(k,l)$. Thus,
\begin{equation}\label{eq:rgtretertre}
E(X^k) \le B_k \sum_{l=0}^k (Np)^l.
\end{equation}
If $Np \ge 1$ then
\begin{equation}\label{eq:erreerreer}
E(X^k) \le (k+1)B_k(Np)^k,
\end{equation}
and if $Np < 1$ then
\begin{equation}\label{eq:errefdfgsd}
E(X^k) \le (k+1)B_k.
\end{equation}
We conclude\footnote{The recent bound on Bell number is
$
B_k <\left({
0.792k\over
ln(k + 1)}
\right)^k
$  due to \cite{Berend}} our proof by defining $\beta_k = (k+1)B_k$.
\end{proof}
\fi

\newpage
\section{Winning Pairs}\label{kljdflkjdsfsfkldfknkdffknsdkksd}
The following is a section from \cite{DBLP:journals/corr/abs-1212-0202}, provided for completeness.

\begin{definition}\label{def: dkjfjrglrlttrg}
Let $U = \{u_1,\dots,u_t\}$ and $W=\{w_1,\dots,w_t\}$ be two
sequences of non-negative integers. Let $(i,j)$ be a pair such that $1\le i\le
t$ and $1\le j\le u_i$. Denote $(i,j)$ as a \emph{losing} pair (w.r.t. sequences $U,W$) if there exists $h, i\le h\le t$ such that:
$$
-j + \sum_{s=i}^h(u_s-w_s) < 0.
$$
Denote any pair that is not a losing pair as a a \emph{winning} pair.
\end{definition}

In this section we consider the following pair $(U,W)$ of sequences. For $i=1,\dots, r$
let $u_i = f_{1,i}$ and $w_i = \lambda$.

\begin{definition}\label{def: dkjfjrbfghfghgglrlttrg}
Let $U = \{u_1,\dots,u_t\}$ and $W=\{w_1,\dots,w_t\}$ be two
sequences of non-negative integers. Let $1\le h < t.$ Let $U',W'$ be two sequences of size $t-h$
defined by $p'_i = u_{i+h}$, $q'_i = w_{i+h}$ for $i=1,\dots, t-h$.
Denote $U',W'$ as the $h$-tail of the sequences $U,W$.
\end{definition}

\noindent
\begin{fact}\label{fct:rtheyjtyjukyukyuk}
If $(i,j)$ is a winning pair w.r.t. the $h$-tail of $U,W$ then
$(i+h,j)$ is a winning pair w.r.t. $U,W$.
If $(i,j)$ is a winning pair w.r.t. the $h$-tail of $U,W$ then
$(i,j)$ is a winning pair w.r.t. $U,W$.
\end{fact}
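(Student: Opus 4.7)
The plan is to prove both claims by directly unfolding Definition \ref{def: dkjfjrglrlttrg} and applying the index shift $s \mapsto s+h$ that converts a partial-sum condition on $(U',W')$ into the corresponding partial-sum condition on $(U,W)$. No counting or probabilistic ingredients are needed here; the content of the fact is purely bookkeeping about summation limits.

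For the first claim, I would start from the hypothesis that $(i,j)$ is a winning pair w.r.t.\ the $h$-tail $(U',W')$. Unfolding the definition, this means $1 \le i \le t-h$, $1 \le j \le u'_i = u_{i+h}$, and for every $h^\ast$ with $i \le h^\ast \le t-h$,
$$
-j + \sum_{s=i}^{h^\ast}(u'_s - w'_s) \;=\; -j + \sum_{s=i}^{h^\ast}(u_{s+h} - w_{s+h}) \;=\; -j + \sum_{r=i+h}^{h^\ast+h}(u_r - w_r) \;\ge\; 0,
$$
where the last equality is the substitution $r = s + h$. Letting $r^\ast = h^\ast + h$, this says: for every $r^\ast$ with $i+h \le r^\ast \le t$ we have $-j + \sum_{s=i+h}^{r^\ast}(u_s - w_s) \ge 0$. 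Combined with $1 \le i+h \le t$ and $1 \le j \le u_{i+h}$, this is exactly the condition for $(i+h,j)$ to be a winning pair w.r.t.\ $(U,W)$, which gives the first assertion.

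The second assertion is obtained by the same shift, run in reverse: the map $h^\ast \leftrightarrow h^\ast + h$ is a bijection between the set of defining inequalities for a winning pair in the tail (indexed by $h^\ast \in [i, t-h]$) and the set of defining inequalities for a winning pair at the corresponding position in $(U,W)$ (indexed by $r^\ast \in [i+h, t]$), so each direction is just the contrapositive of an analogous statement about losing pairs. The only care required is to check that the ranges transfer correctly under the shift: $h^\ast$ runs up to $t-h$ in the tail, which after adding $h$ covers exactly the positions up to $t$ in the original sequences, so no boundary case is lost and the implication is immediate. I do not foresee any substantive obstacle; the main thing to watch is the index arithmetic at the endpoints.
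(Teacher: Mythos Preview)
Your approach is correct and matches the paper's own proof, which simply reads ``Follows directly from Definitions \ref{def: dkjfjrglrlttrg} and \ref{def: dkjfjrbfghfghgglrlttrg}.'' You have merely made explicit the index shift $s \mapsto s+h$ that the paper leaves to the reader, so there is nothing to add.
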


\ifNOTtalk

\begin{proof}
Follows directly from Definitions \ref{def: dkjfjrglrlttrg} and \ref{def: dkjfjrbfghfghgglrlttrg}.
\end{proof}

\fi

\begin{lemma}\label{fct:erergerger}
If $\sum_{s=1}^t(u_s-w_s) > 0$ then there exist at least $\sum_{s=1}^t(u_s-w_s)$ winning pairs.
\end{lemma}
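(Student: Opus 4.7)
I would reduce the lemma to a counting argument via the standard LIFO-stack reformulation. Setting $S_0 = 0$ and $S_h = \sum_{s=1}^h(u_s-w_s)$, observe that a pair $(i,j)$ with $1\le j\le u_i$ is a winning pair precisely when $j\le S_h-S_{i-1}$ for every $h\in\{i,\dots,t\}$. Hence, writing $M_i = \min_{h\ge i} S_h$, the number of winning pairs with first coordinate $i$ is $\max(0,\,M_i - S_{i-1})$, and the total winning-pair count is $\sum_{i=1}^t \max(0,\,M_i - S_{i-1})$. The goal is to lower-bound this sum by $S_t$.

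The key idea is to realize this count as the terminal size of a stack produced by the following process. Maintain a stack, initially empty. For $s = 1, 2, \dots, t$, first push the tokens $(s,1),(s,2),\dots,(s,u_s)$ in that order (so $(s,u_s)$ ends on top), and then perform $\min(w_s,\,|\text{stack}|)$ pop operations. I would then prove the key claim: the token $(i,j)$ is on the stack at the end of step $t$ if and only if $(i,j)$ is a winning pair. For the forward direction, if $(i,j)$ survives then the stack is nonempty throughout every pop operation in $[i,t]$ (because $(i,j)$ itself is there), so the cumulative number of pops during $[i,h]$ is exactly $\sum_{s=i}^h w_s$; by LIFO, these pops only touch tokens that were pushed strictly above $(i,j)$, of which exactly $(u_i-j) + \sum_{s=i+1}^h u_s$ are present by time $h$, giving $\sum_{s=i}^h w_s \le (u_i-j) + \sum_{s=i+1}^h u_s$, i.e.\ $j \le \sum_{s=i}^h (u_s-w_s)$ for every $h\ge i$. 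The converse is a short induction on $h\ge i$: the same inequalities guarantee that at each time $s\ge i$ the popping step cannot reach $(i,j)$.

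With this characterization, the lemma is immediate. The number of winning pairs equals the number of survivors on the stack at the end, which is $\sum_{s=1}^t u_s$ minus the total number of pops actually executed. Since the number of pops at each time $s$ is $\min(w_s, |\text{stack}|) \le w_s$, the total pop count is at most $\sum_{s=1}^t w_s$, so the survivor count is at least $\sum_{s=1}^t (u_s - w_s) = S_t > 0$. The main point to get right is the forward direction of the survival characterization: one must argue carefully that $(i,j)$ being on the final stack forces every pop in $[i,t]$ to execute fully (no underflow), since this is what converts the LIFO ``positional'' inequality into the stated winning-pair inequality. The slack in the bound (equality versus strict inequality) comes precisely from times at which the stack does underflow, which cost fewer pops than $w_s$ but do not reduce the number of survivors.
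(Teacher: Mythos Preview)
Your proof is correct and takes a genuinely different route from the paper's own argument. The paper proves the lemma by induction on $t$ with a three-way case split: (i) if some prefix sum $\sum_{s\le h}(u_s-w_s)$ is nonpositive, pass to the $h$-tail; (ii) if $(1,u_1)$ is already winning, then all $(1,j)$ are winning and one recurses on the $1$-tail; (iii) otherwise locate the threshold $u$ with $(1,u-1)$ winning and $(1,u)$ losing, which pins down an $h$ with $\sum_{s\le h}(u_s-w_s)=u-1$, and recurse on that tail. Your argument instead gives a structural characterization: the winning pairs are exactly the tokens that survive a LIFO push/pop process, and the desired bound is then the trivial inequality ``survivors $=$ pushes $-$ pops $\ge \sum u_s-\sum w_s$''. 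Your approach is cleaner and more conceptual, ties the statement to classical ballot/cycle-lemma reasoning, and yields an exact formula for the number of winning pairs (namely $\sum_i \max(0,\min_{h\ge i}S_h - S_{i-1})$); the paper's induction is more elementary in that it needs no auxiliary process, but it requires more bookkeeping in case~(iii) and gives only the inequality. One small point worth tightening in your write-up: in the forward direction you should state explicitly that ``$(i,j)$ on the stack and not popped at step $s$'' forces the pop count $\min(w_s,|\text{stack}|)$ to be at most the number of tokens strictly above $(i,j)$, hence strictly less than $|\text{stack}|$, hence equal to $w_s$; this is the precise reason there is no truncation during $[i,t]$, and it is the crux of the direction you flagged as delicate.
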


\ifNOTtalk

\begin{proof}

We use induction on $t$. For $t=1$, any pair $(1,j)$
is winning if $1\le j\le u_1-w_1$. Consider $t>1$ and apply the following case analysis.

\begin{enumerate}
\item Assume  that there exist $1\le h< t$ such that $\sum_{s=1}^h(u_s-w_s) \le 0$. Consider the $h$-tail of $U,W$. By
induction and by Fact \ref{fct:rtheyjtyjukyukyuk}, there exist at least
$\sum_{s=h+1}^t(u_s-w_s) \ge \sum_{s=1}^t(u_s-w_s) $ winning pairs w.r.t. $U,W$.

\item Assume  that $(1,u_1)$ is a winning pair; it follows that $(1,j),\ j<u_1$ is a winning pair as well. If $\sum_{s=2}^t(u_s-w_s) > 0$ then, by
induction and by Fact \ref{fct:rtheyjtyjukyukyuk}, there exist at least
$\sum_{s=2}^t(u_s-w_s)$ winning pairs of the form $(i,j)$ where $i>1$. In total there are $u_1 + \sum_{s=2}^t(u_s-w_s) \ge \sum_{s=1}^t(u_s-w_s)$ winning pairs w.r.t. $U,W$.  The case when $\sum_{s=2}^t(u_s-w_s) < 0$ is trivial.

\item Assume that $(1), (2)$ do not hold. Then $u_1>0$. Indeed otherwise $u_1-w_1\le 0$ and thus $(1)$ is true. Also  $(1,1)$ is a winning pair. Indeed, otherwise there exists $1\le h < t$ such that $-1 + \sum_{i=1}^h(u_i-w_i)<0$. All numbers are integers thus $\sum_{i=1}^h(u_i-w_i)\le 0$ and $(1)$ is true.
Thus, $(1,1)$ is a winning pair and $(1,u_1)$ is not a winning pair (by $(2)$). Therefore there exist $1<u\le u_1$ such that $(1,u-1)$ is a winning pair and $(1,u)$ is not a winning pair. In particular, there exists $1\le h<t$ such that
$$
-u + \sum_{s=1}^h(u_s-w_s) <0.
$$
On the other hand $(1,u-1)$ is a winning pair thus
$$
0\le 1-u + \sum_{s=1}^h(u_s-w_s).
$$
All numbers are integers and thus we conclude that
$$
\sum_{s=1}^h(u_s-w_s) = u-1.
$$
Consider the $h$-tail of $U,W$.
By induction, there exists at least
$$
\sum_{i=h+1}^{t}(u_i-w_i) = \sum_{i=1}^{t}(u_i-w_i) - (u-1)
$$
winning pairs w.r.t. the $h$-tail of $U,W$. By Fact \ref{fct:rtheyjtyjukyukyuk} there exist at least as many winning pairs w.r.t. $U,W$ of the form $(i,j)$ where $i>1$. By properties of $u$ there exists an additional $(u-1)$ winning pairs of the form $(1,j), j\le u-1$.  Summing up we obtain the fact.
\end{enumerate}
\end{proof}

\fi
\end{document}